\newcommand{\sys}{\textsc{Lttp}\xspace}
\newcommand{\cc}[1]{\mbox{\texttt{#1}}}
\def\Snospace~{\S{}}
\newif\ifdraft\drafttrue
\newif\ifnotes\notestrue
\newcommand{\BH}[1]{}
\newcolumntype{R}[1]{>{\raggedleft\let\newline\\\arraybackslash\hspace{0pt}}p{#1}}
\newcommand{\squishlist}{
\begin{itemize}[noitemsep,nolistsep]
  \setlength{\itemsep}{-0pt}
}
\newcommand{\squishend}{
  \end{itemize}
}
\newcommand{\add}[2]{#1 \union \{ #2 \}}
\newcommand{\assign}{\mathbin{:=}}
\newcommand{\auflia}{\textsc{Auflia}\xspace}
\newcommand{\bigland}{\bigwedge}
\newcommand{\bigunion}{\bigcup}
\newcommand{\bools}{\mathbb{B}}
\newcommand{\domain}{\mathsf{Dom}}
\newcommand{\elts}[1]{\{ #1 \}}
\newcommand{\entails}{\models}
\newcommand{\euflia}{\textsc{EufLia}\xspace}
\newcommand{\false}{\mathsf{False}}
\newcommand{\formulas}[1]{\mathsf{Forms}[#1]}
\newcommand{\ints}{\mathbb{Z}}
\newcommand{\nats}{\mathbb{N}}
\newcommand{\pset}{\mathcal{P}}
\newcommand{\sats}{\vdash}
\newcommand{\setformer}[2]{\{ #1\ |\ #2 \}}
\newcommand{\true}{\mathsf{True}}
\newcommand{\union}{\cup}
\newcommand{\upd}[3]{#1[#2 \mapsto #3]}
\newtheorem{defn}{\bf{Definition}}
\newtheorem{ex}{\bf{Example}}
\newtheorem{lemma}{\bf{Lemma}}
\newtheorem{thm}{\bf{Theorem}}
\newcommand{\alloc}[1]{\cc{#1:=new()}}
\newcommand{\allocs}{\cc{Allocs}}
\newcommand{\apps}[2]{\mathsf{Apps}[ #1, #2 ]}
\newcommand{\appsof}[1]{\mathsf{Apps}[ #1 ]}
\newcommand{\argsof}[1]{\mathsf{Args}[ #1 ]}
\newcommand{\brtargetof}[1]{\mathsf{BrTgt}[#1]}
\newcommand{\buildinsgram}{G_{\cc{BI}}}
\newcommand{\chc}[2]{\mathsf{CHC}_{ #1, #2 }}
\newcommand{\chcs}[2]{\mathsf{CHCs}_{ #1, #2 }}
\newcommand{\clauseskels}{\mathsf{ClauseSkels}}
\newcommand{\clausesof}[1]{\mathsf{Clauses}[ #1 ]}
\newcommand{\ctrof}[1]{\mathsf{Ctr}[#1]}
\newcommand{\ctrlsucc}{\mathsf{Succ}}
\newcommand{\ctxs}{\mathsf{Ctxs}}
\newcommand{\data}{\mathsf{Data}}
\newcommand{\dataedges}{E_{\mathrm{Data}}}
\newcommand{\datafields}{\cc{DFields}}
\newcommand{\datafieldseq}[3]{\mathsf{Eq\cc{DFields}}[ #1, #2, #3 ]}
\newcommand{\dataheaps}{\cc{Heaps}_D}
\newcommand{\dataloads}{\cc{Fields}_D}
\newcommand{\datastores}{\cc{Stores}_D}
\newcommand{\datastoresrcs}[1]{\mathsf{DStoreLoc}_{#1}}
\newcommand{\datauif}{\mathcal{T}_{D, \textsc{Uif}}}
\newcommand{\datavars}{\cc{DVars}\xspace}
\newcommand{\datavarseq}[2]{\mathsf{Eq\cc{DVars}}[ #1, #2 ]}
\newcommand{\derivations}[1]{\mathsf{Ders}(#1)}
\newcommand{\duality}{\textsc{Duality}\xspace}
\newcommand{\extractdeps}{\textsc{Deps}\xspace}
\newcommand{\eufliaissat}{\textsc{IsSat}\xspace}
\newcommand{\feedback}{F}
\newcommand{\fields}{\cc{Fields}}
\newcommand{\finalloc}{\cc{L}_F}
\newcommand{\gbi}{\mathcal{G}_{\cc{BI}}}
\newcommand{\headof}[1]{\mathsf{Head}[ #1 ]}
\newcommand{\heaps}{\mathsf{Heaps}}
\newcommand{\hedgesof}[1]{\mathsf{HypEdges}(#1)}
\newcommand{\idxs}{\mathsf{Idxs}}
\newcommand{\instrof}[1]{\mathsf{Instr}[#1]}
\newcommand{\instrs}{\cc{Instrs}}
\newcommand{\iseq}[3]{1:=#2=#3}
\newcommand{\isfeasible}{\textsc{IsFeas}\xspace}
\newcommand{\isempty}{\mathsf{Infeas}}
\newcommand{\initloc}{\cc{L}_I}
\newcommand{\isnil}[2]{\cc{#1:=isNil(#2)}}
\newcommand{\instrat}[3]{\mathsf{Instr}[ #1 ](#2, #3)}
\newcommand{\invtrees}[1]{\mathcal{I}[\mathcal{G}]}
\newcommand{\lang}{\cc{Lang}\xspace}
\newcommand{\lblinstrs}{\cc{LblInstrs}}
\newcommand{\load}[3]{\cc{#1:=#2->#3}}
\newcommand{\loads}{\cc{Loads}}
\newcommand{\locs}{\cc{Locs}\xspace}
\newcommand{\locsym}{\mathsf{Loc}}
\newcommand{\modelof}[1]{m_{#1}}
\newcommand{\modelsof}[1]{\mathsf{Models}[ #1 ]}
\newcommand{\negfrags}[1]{\rightsquigarrow_{#1}^-}
\newcommand{\nil}{\mathsf{nil}}
\newcommand{\niltests}{\cc{NilTests}}
\newcommand{\nodesof}[1]{\mathsf{Nodes}[ #1 ]}
\newcommand{\nuAll}{\nu_{\mathrm{All}}}
\newcommand{\objctxs}{\mathsf{Ctxs}_O}
\newcommand{\objeqs}{\cc{ObjEqs}}
\newcommand{\objfields}{\cc{OFields}}
\newcommand{\objfieldseq}[3]{\mathsf{Eq\cc{OFields}}[ #1, #2, #3 ]}
\newcommand{\objframeof}[1]{#1_{O}}
\newcommand{\objheaps}{\mathsf{Heaps}_O}
\newcommand{\objs}{\mathsf{Objs}}
\newcommand{\objloads}{\cc{Lds}_O}
\newcommand{\objstores}{\cc{Stores}_O}
\newcommand{\objvars}{\cc{OVars}\xspace}
\newcommand{\objvarseq}[2]{\mathsf{Eq\cc{OVars}}[ #1, #2 ]}
\newcommand{\ord}[1]{<_{#1}}
\newcommand{\params}{\mathsf{Params}}
\newcommand{\pathof}{\mathsf{DerPath}}
\newcommand{\pathsof}[1]{\mathsf{Paths}[#1]}
\newcommand{\posfrags}[1]{\rightsquigarrow_{#1}^+}
\newcommand{\postfrag}{\mathsf{Post}}
\newcommand{\prefrag}{\mathsf{Pre}}
\newcommand{\prelocof}[1]{\mathsf{PreLoc}[ #1 ]}
\newcommand{\queryof}[1]{\mathsf{Query}[ ]}
\newcommand{\relof}[1]{\mathsf{Rel}[ #1 ]}
\newcommand{\runbg}{\mathcal{T}_{\cc{Lang}}}
\newcommand{\runchcs}{\mathsf{CHCs}_{\cc{Lang}}}
\newcommand{\runsof}[1]{\mathsf{Runs}[#1]}
\newcommand{\sgn}[1]{\mathit{sgn}_{#1}}
\newcommand{\skeletons}{\mathsf{Skels}}
\newcommand{\skelpreds}{\mathsf{R}_{ \mathsf{ Skel} }}
\newcommand{\sketch}{\textsc{Sketch}\xspace}
\newcommand{\solvechc}{\textsc{SolveCHC}\xspace}
\newcommand{\states}{\mathsf{States}}
\newcommand{\store}[3]{\cc{#1->#2:=#3}}
\newcommand{\stored}[1]{\mathsf{Stored}[ #1 ]}
\newcommand{\stores}{\cc{Stores}}
\newcommand{\succsym}{\mathsf{Succ}}
\newcommand{\sympath}{\mathsf{SymPath}}
\newcommand{\symrel}[1]{\mathsf{SymRel}[ #1 ]}
\newcommand{\symvaltrans}[1]{\mathsf{SymRel}_V[ #1 ]}
\newcommand{\synchc}{\textsc{SynGrammar}\xspace}
\newcommand{\synskeleton}{\textsc{SynSkeleton}\xspace}
\newcommand{\sysaux}{\textsc{Lttp}'\xspace}
\newcommand{\tnow}{\cc{t}_{\mathrm{now}}}
\newcommand{\transrelof}[1]{\rightarrow_{#1}}
\newcommand{\valctxs}{\mathsf{Ctxs}_V}
\newcommand{\valinstrs}{\cc{Instrs}_V}
\newcommand{\valtransrelof}[1]{\rightarrow_{#1}^V}
\newcommand{\values}{\mathsf{Vals}}
\newcommand{\valvars}{\cc{Vars}_V}
\newcommand{\vars}{\cc{Vars}\xspace}
\newcommand{\varsof}[1]{\mathsf{Vars}[ #1 ]}
\newcommand{\vocab}{V}
\newcommand{\cmark}{\ding{51}}%
\gdef\therev{e9f5ce0}
\gdef\thedate{2017-10-07 12:26:57 -0400}
\begin{document}

\title{Proofs as Relational Invariants of Synthesized Execution Grammars}

\ifdefined\DRAFT
 \pagestyle{fancyplain}
 \lhead{Rev.~\therev}
 \rhead{\thedate}
 \cfoot{\thepage\ of \pageref{LastPage}}
\fi

\author{Caleb Voss}
\affiliation{\institution{Georgia Institute of Technology}}
\author{David Heath}
\affiliation{\institution{Georgia Institute of Technology}}
\author{William Harris}
\affiliation{\institution{Georgia Institute of Technology}}

\date{}

\begin{abstract}
  The automatic verification of programs that maintain unbounded
  low-level data structures is a critical and open problem.
  Analyzers and verifiers developed in previous work can synthesize
  invariants that only describe data structures of heavily restricted
  forms, or require an analyst to provide predicates over program data
  and structure that are used in a synthesized proof of correctness.

  In this work, we introduce a novel automatic safety verifier of
  programs that maintain low-level data structures, named \sys.
  \sys synthesizes proofs of program safety represented as a grammar
  of a given program's control paths, annotated with invariants that
  relate program state at distinct points within its path of
  execution.
  \sys synthesizes such proofs completely automatically, using a novel
  inductive-synthesis algorithm.

  We have implemented \sys as a verifier for JVM bytecode and applied
  it to verify the safety of a collection of verification benchmarks.
  Our results demonstrate that \sys can be applied to automatically
  verify the safety of programs that are beyond the scope of
  previously-developed verifiers.
\end{abstract}

\maketitle
\thispagestyle{fancy} 

\section{Introduction}
\label{sec:introduction}
Automatically verifying that a given program satisfies a desired
safety property is a fundamental problems of program verification.
Recent work has seen the development of powerful program verifiers
that operate
automatically~\cite{bjorner13,gupta11,heizmann10,henzinger02,henzinger04,mcmillan06}.
Such verifiers can often determine if practical programs satisfy
properties concerning their control flow and facts over a bounded
collection of data values~\cite{ball11}.

However, verifying the safety of programs that maintain unbounded
low-level data structures remains an open problem.
A significant body of previous work has developed \emph{shape
  analyzers}~\cite{holik13,dudka16} that, given a program \cc{P},
synthesize invariants of the reachable heaps of \cc{P} represented in a
particular shape domain, such as three-valued logical
structures~\cite{loginov05,reps04,sagiv02} or separation-logic
formulas~\cite{calcagano11,distefano06,reynolds02,yang08};
the invariants synthesized by such analyzers can potentially imply
facts about program states that establish that \cc{P} is safe.
Another body of work has developed automatic program
verifiers~\cite{albarghouthi15,balaban05,flanagan02,rakamaric07,lahiri06,bouillaguet07,rondon10,dams03,drews16,itzhaky13,itzhaky-ba14,itzhaky-bj14}
and decision procedures~\cite{perez11,seshia03} that directly attempt
to determine if \cc{P} is safe by attempting to synthesize sufficient
invariants in such domains.

Unfortunately, all such approaches suffer from at least one of several
critical limitations.
In particular, they either are only able to represent invariants over
heaps of restricted
forms~\cite{balaban05,dams03,lahiri06,drews16,itzhaky13,itzhaky-ba14,itzhaky-bj14} or
require an analyst to manually provide
abstractions~\cite{reps04,sagiv02,calcagano11,distefano06,reynolds02,yang08,holik13,dudka16,perez11,seshia03,albarghouthi15,flanagan02,rakamaric07,rondon10,dams03,itzhaky-bj14}
or candidate inductive invariants~\cite{bouillaguet07,itzhaky-ba14}.

The main contribution of this work is an automatic verifier, named
\sys, that attempts to determine the safety of a program that may
maintain low-level data structures.
\sys satisfies two key features that distinguish it from previous
approaches.
First, it can potentially prove the safety of programs by establishing
inductive invariants that relate multiple low-level structures
maintained by a program, each of which need not necessarily have a
pre-specified shape, such as list or tree.
Second, \sys performs synthesis of these invariants completely
automatically, without requiring an analyst to provide predicates over
values or data-structure shapes from which to attempt to synthesize a
proof.

We designed \sys to satisfy both of the above features by developing
two key technical insights.
The first insight is that the safety of a program that maintains
low-level \emph{data structures} can often be established by
invariants that relate only values bound to variables (i.e.,
\emph{local data}) at \emph{multiple control points}.
Each proof structure synthesized by \sys is an annotated \emph{path
  grammar} of the program---i.e., a graph grammar in which the yield
of each derivation is a program path.
Each non-terminal $A$ of the grammar is annotated with a formula that
relates only the local values of a tuple of program points generated
by each application of $A$.
Such a formula is in the combination of a theory that axiomatizes the
program's data operations with the theory of uninterpreted functions.

The second insight is that proofs as relational invariants of graph
grammars can potentially be synthesized automatically, using an
\emph{inductive
  synthesizer}~\cite{alur13,solar-lezama05,solar-lezama06,solar-lezama07,solar-lezama08}.
The inductive synthesizer implemented by \sys, given a program \cc{P},
iteratively maintains a set of enumerated control paths of \cc{P}.
In each iteration, \sys synthesizes a candidate graph grammar
$\mathcal{G}$ from the structure of \cc{P} and enumerated paths by
reduction to constraint solving.
\sys attempts to synthesize relational invariants of $\mathcal{G}$ by
reduction to logic programming~\cite{bjorner13,rummer13}.
If a logic-programming solver determines that no such invariants
exist, then \sys collects from the solver an unexplored control path
which it uses to refine the proposed grammar in subsequent
iterations. 

We have developed an implementation of \sys that verifies programs
represented in Java Virtual Machine (JVM) bytecode.
We evaluated \sys by using it to attempt to verify the safety of a
collection of challenging problems for shape analyses and verifiers,
some of which are adapted from those presented in the SV-COMP verifier
competition~\cite{svcomp17}.

The results demonstrate that \sys is powerful enough to express
non-local invariants, combining information separated in time by
unboundedly many execution steps and separated in heap space by
unboundedly many field accesses.
Without manual assistance, \sys can prove unary and binary properties
over lists, reachability properties, and even correlative properties
between two disjoint data structures.

In summary, the design of \sys builds on and contributes to multiple
topics in program analysis and verification.
First, \sys constitutes a shape verifier with strengths that are
distinct from all shape verifiers presented in previous work, though
it does not necessarily subsume such verifiers, as discussed in
\autoref{sec:features}.
Second, \sys demonstrates that techniques from \emph{relational
  verification}~\cite{benton04,barthe11,barthe13} which previously
have been applied to prove properties of multiple programs by relating
states \emph{across} multiple runs, can be applied to prove shape
properties by relating states \emph{within} program runs.
Third, \sys demonstrates that techniques from inductive synthesis,
which have previously been applied to synthesize correct programs as
completions of incomplete programs, can also be used to synthesize
alternate representations of complete programs that are amenable to
verification.

The rest of this paper is organized as follows.
\autoref{sec:overview} illustrates the operation of \sys by example.
\autoref{sec:background} reviews previous work on which \sys is based,
\autoref{sec:pf-structures} describes the proofs synthesized by \sys,
and %
\autoref{sec:verifier-algo} describes \sys's inductive-synthesis
algorithm.
\autoref{sec:evaluation} describes our empirical evaluation of \sys.
\autoref{sec:related-work} compares \sys to related work, and %
\autoref{sec:conclusion} concludes.


\section{Overview}
\label{sec:overview}
This section illustrates \sys by example.
\autoref{sec:ex-program} introduces an example program as a
verification problem.
\autoref{sec:ex-pf} introduces the proof of \cc{buildInspect}
synthesized by \sys.
\autoref{sec:ex-syn} describes how \sys synthesizes the proof
automatically.

\subsection{\cc{buildInspect}: maintaining a low-level queue}
\label{sec:ex-program}
\begin{figure}[t]
  \centering
  \begin{minipage}{.45\textwidth}
    \centering
    \input{code/buildInspect.java}
  \end{minipage}
  \qquad
  \begin{minipage}{.45\textwidth}
    \centering
    \includegraphics[width=\linewidth]{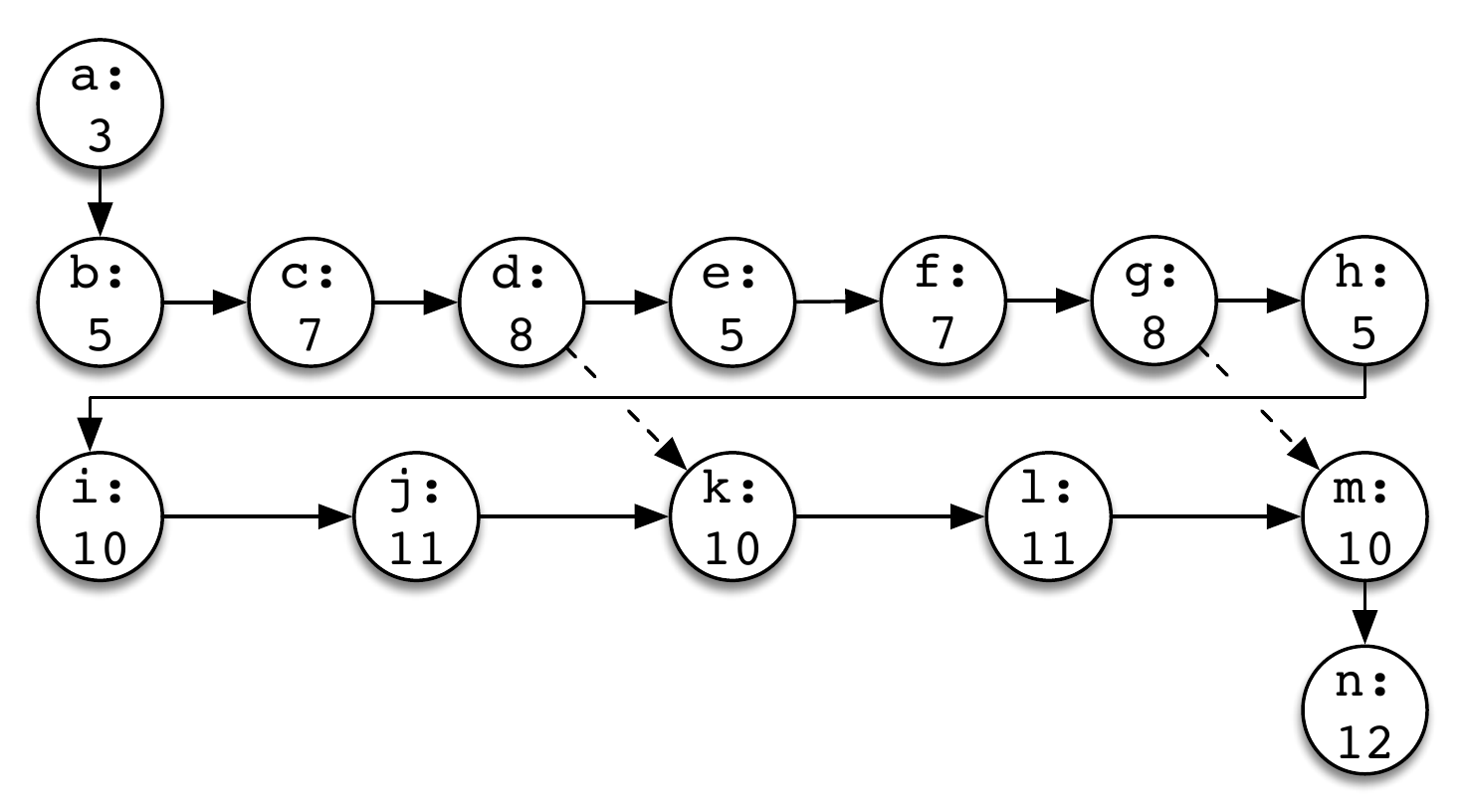}
  \end{minipage}
  \begin{minipage}[t]{.45\textwidth}
    \centering
    \caption{\cc{buildInspect}: constructs a queue with \cc{num}
    elements and traverses it from head to tail.} %
    \label{fig:running-ex-code}
  \end{minipage}
  \qquad
  \begin{minipage}[t]{.45\textwidth}
    \centering
    \caption{A control path of \cc{buildInspect} containing two
      iterations of the loops, including data-dependence
    edges.} %
    \label{fig:ex-path}
  \end{minipage}
\end{figure}

\autoref{fig:running-ex-code} contains a program, named
\cc{buildInspect}, that maintains a low-level queue.
The queue is represented as a linked list of \cc{Element} objects
(line \cc{1}).
\cc{buildInspect} first constructs a single \cc{Element} object and
binds it to variables that store both the queue's head (line \cc{3})
and tail (line \cc{4}).
\cc{buildInspect} then iteratively creates new \cc{Element}s and adds
them to the end of the queue (lines \cc{5}---\cc{8}).
\cc{buildInspect} then iterates over the elements in the queue,
storing each element in \cc{elt} (lines \cc{10}---\cc{11}).
Finally, \cc{buildInspect} asserts that, on completion of the loop in
lines \cc{10}---\cc{11}, \cc{elt} stores the tail element of the queue
(line \cc{12}).

Each execution of \cc{buildInspect} satisfies the assertion at line
\cc{12}.
The key invariant for the loop over lines \cc{5}---\cc{8} establishes
that the \cc{Element} object stored in \cc{elt} reaches the object
stored in \cc{tail} over some number $i$ of dereferences of the
\cc{next} pointer.
The key invariant for the loop over lines \cc{10}---\cc{11}
establishes that the \cc{Element} object stored in \cc{elt} reaches
the \cc{Element} object stored in \cc{tail} over $num-i$ dereferences
of the \cc{next} pointer, where $n$ is the length of the queue. 
Unfortunately, it is challenging to design a verifier that
can express such invariants and infer them automatically from a given
program without supplied candidate invariants.

\subsection{A proof of \cc{buildInspect}'s safety as relational
  invariants of a path grammar}
\label{sec:ex-pf}
A key insight behind our approach is that proofs of the safety of a
program \cc{P} can be represented as a graph grammar that generates
control paths of \cc{P}, annotated with invariants that relate states
at different path points when they co-occur in the same grammar rule.
In this section, we give a grammar $\gbi$ of \cc{buildInspect}'s
control paths (\autoref{sec:ex-grammar}) and relational invariants of
$\gbi$ that represent a proof of the safety of \cc{buildInspect}
(\autoref{sec:ex-invs}).

\subsubsection{A grammar of \cc{buildInspect}'s control paths}
\label{sec:ex-grammar}
\autoref{fig:ex-path} contains the control path of \cc{buildInspect}
that includes two iterations of the loop at lines \cc{5}---\cc{8} and
two iterations of the loop at lines \cc{10}---\cc{11}.
In \autoref{fig:ex-path}, each node is an instance of a control
location, annotated with its line number. The line number of each node
indicates the program has reached, but not executed, the instruction
on that line.
Control steps are depicted as solid edges.
Each point that stores an object is connected to the point that loads
the stored object by a dashed line.

\begin{figure*}[t]
  \centering
  \includegraphics[width=.95\linewidth]{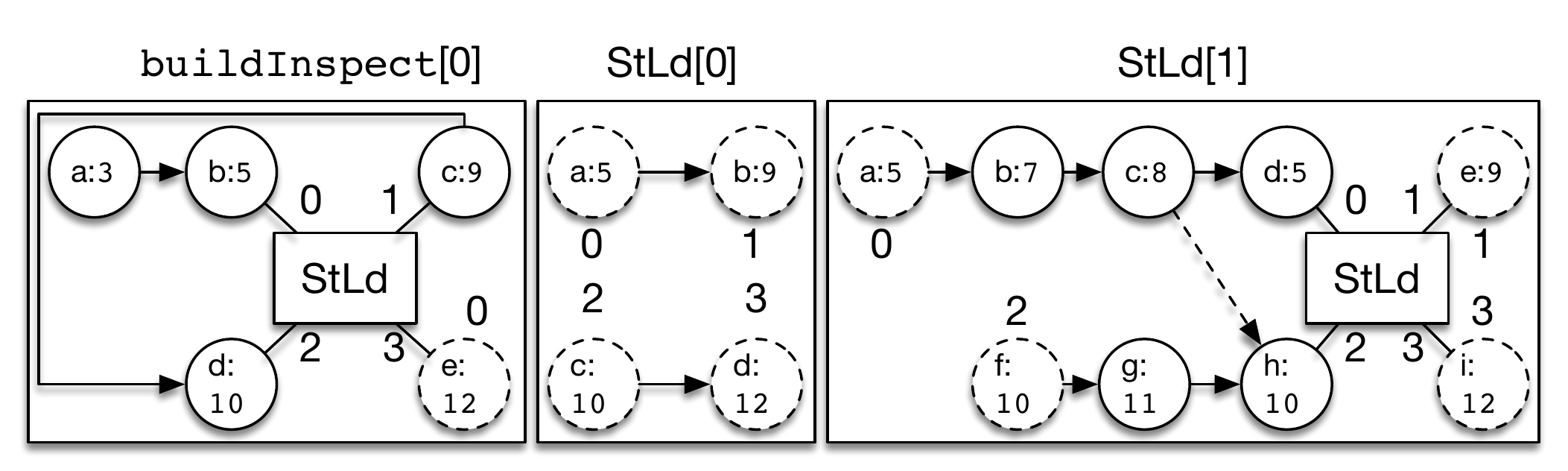}
  \caption{A graph grammar that generates \cc{buildInspect}'s control
    paths and has relational invariants that prove
    the safety of \cc{buildInspect}.
    The grammar contains three clauses, each labeled with their head
    relation (either \cc{buildInspect} or $\mathsf{StLd}$).}
  \label{fig:build-inspect-grammar}
\end{figure*}
A graph grammar $\gbi$ that generates the control paths of
\cc{buildInspect} is depicted in \autoref{fig:build-inspect-grammar}.
$\gbi$ contains two relations, \cc{buildInspect} and
$\mathsf{StLd}$, with \cc{buildInspect} the starting relation.
$\gbi$ consists of three clauses, one for \cc{buildInspect} and two
for \cc{StLd}.
Each clause is depicted as a hypergraph. The nodes represent program
control points and are labeled with an alphabetic character for
reference and with the corresponding line number in the program.
The edges consist of control-dependence edges, depicted as solid
edges;
the dashed edges depict data dependencies, and are described in
\autoref{sec:ex-invs}.
Each (ordered) hyperedge $h$ is represented as a box labeled with a
relation, and with the nodes in $h$ labeled with their
index in $h$.
The hyperedge that each clause defines is represented implicitly, with
the nodes in the hyperedge dashed and labeled with their index in the
hyperedge.

The starting relation of $\gbi$, \cc{buildInspect}, has a single
clause, $\cc{buildInspect}[0]$, that generates fragments of a control
path that model the initial step of \cc{buildInspect} and the step of
\cc{buildInspect} from the exit of the loop at lines \cc{5}---\cc{9}
to the entry of the loop at lines \cc{10}---\cc{11}.

The relation \cc{StLd} simultaneously derives a pair of paths,
corresponding to the two loops of the program. 
$\gbi$ contains two clauses for $\mathsf{StLd}$.
Clause $\mathsf{StLd}[0]$ generates a pair of loop exit steps.
Clause $\mathsf{StLd}[1]$ generates both a step through the loop at
lines \cc{5}---\cc{9} that allocates a new \cc{Element} object and
stores it at the \cc{next} field of a bound object, and a step through
the loop at lines \cc{10}---\cc{11} that loads an \cc{Element} object.
Clause $\mathsf{StLd}[1]$ recursively includes an instance of the
$\mathsf{StLd}$ relation, which can be further expanded in order to
generate the rest of each loop.

The control path in \autoref{fig:ex-path} is generated by
$\gbi$.
In particular, it is generated by applying the following sequences of
rules: $\cc{buildInspect}[0]$, $\cc{StLd}[1]$, $\cc{StLd}[1]$,
$\cc{StLd}[0]$.

It should be noted that the grammar shown in
\autoref{fig:build-inspect-grammar} is a simplification of the grammar
required to prove the safety of \cc{buildInspect}.
In particular, the depicted grammar only allows expansions of
\cc{StLd} such that the two loops iterate the same number of times.
It is true that every actual execution has this property, but this
fact is not known \emph{a priori}.
Therefore, the grammar must allow paths
with different numbers of iterations in each loop so that the
underlying model checker can discover such an invariant through
counterexamples.

\subsubsection{Relational invariants of \cc{buildInspect}'s path
  grammar}
\label{sec:ex-invs}
A second key observation that motivates our approach is that a proof
of the safety of a program \cc{P} can be represented as some grammar
$\mathcal{G}$ of the paths of \cc{P} paired with relational invariants
over the location instances in the interface of each of the relations
of $\mathcal{G}$.
In particular, a proof of the safety of \cc{buildInspect} can be
represented as the path grammar $\gbi$ (given in
\autoref{sec:ex-grammar}), paired with relational invariants over
instances of control locations in the interface of \cc{buildInspect}
and $\mathsf{StLd}$ that establish that at the $0$th interface node of
\cc{buildInspect}, $\cc{elt} = \cc{tail}$.

The relational invariant for $\mathsf{StLd}$ establishes that if
\cc{tail} at index $0$ is equal to \cc{elt} at index $2$ and if the
point in the path at which the \cc{next} field of \cc{tail} at index
$1$ was stored is the point at which the \cc{next} field of \cc{elt}
at index $2$ was stored, then \cc{tail} at index $1$ is \cc{elt} at
index $3$. It also establishes that \cc{i} at index $0$ is equivalent
to \cc{i} at index $2$, indicating that the loops will iterate the
same number of times.
The relational invariant for $\mathsf{StLd}$ can be expressed as the
following formula, where integer subscripts indicate interface node
indices and $\cc{next}$ indicates a function which maps each object to
the path point at which it was last updated:
\begin{align}
  (\cc{tail}_0 = \cc{elt}_2 \land
  \cc{next}_1(\cc{tail}_0) = \cc{next}_2(\cc{elt}_2) \implies
  \cc{tail}_1 = \cc{elt}_3)
  \land \cc{i}_0 = \cc{i}_2
  \label{eqn:stld}
\end{align}

Formula~\ref{eqn:stld} entails that each run of \cc{buildInspect}
satisfies its assertion at line \cc{12}.
Furthermore, it can be proved to hold over all tuples of path points
in all paths generated by $\gbi$, by induction on the
derivations of $\gbi$.
In particular,
\textbf{(1)} in clause $\mathsf{StLd}[0]$, the semantic constraints of
the instructions on generated control steps entail
Formula~\ref{eqn:stld} with the variables at index $0$, $1$, $2$, and
$3$ replaced with variables that represent state at points $a$, $b$,
$c$, and $d$ respectively.
\textbf{(2)} In clause $\mathsf{StLd}[1]$, semantic constraints of the
instructions on generated control steps, combined with
Formula~\ref{eqn:stld} with the variables at index $0$, $1$, $2$, and
$3$ replaced with variables that model state at points $d$, $e$, $h$,
and $i$, entail Formula~\ref{eqn:stld} with variables at index $0$,
$1$, $2$, and $3$ replaced with variables that represent state at path
points $a$, $e$, $f$, and $i$.

\subsection{Proving safety of \cc{buildInspect} automatically}
\label{sec:ex-syn}
\sys, given a program \cc{P} with an error location \cc{L}, attempts
to prove that \cc{L} is unreachabale in \cc{P} by synthesizing a path
grammar that has relational invariants that prove that \cc{L} is
unreachable in \cc{P}.
To prove that \cc{buildInspect} always satisfies the assertion at line
\cc{12}, \sys synthesizes the path grammar of \cc{buildInspect} given
in \autoref{sec:ex-grammar} and the relational
invariants of the grammar given in \autoref{sec:ex-invs}.

\sys, given program \cc{P} and error location \cc{L}, attempts to
synthesize a proof that \cc{L} is unreachable in \cc{P} as relational
invariants by performing an inductive synthesis
algorithm~\cite{alur13, srivastava11, gulwani11, itzhaky10}.
The algorithm maintains an initially empty set of enumerated control
paths of \cc{P}.

\begin{figure*}[t]
  \centering
  \includegraphics[width=.9\linewidth]{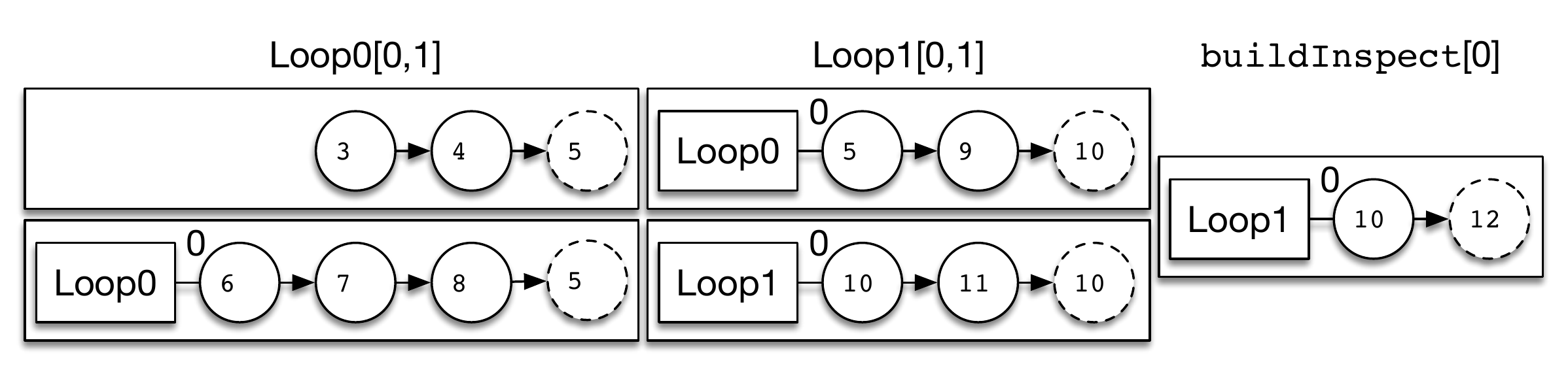}
  \caption{An alternative graph grammar $\buildinsgram'$ that
    generates cc{buildInspect}'s control paths, but does not have
    relational invariants that prove the safety of \cc{buildInspect}.}
  \label{fig:build-inspect-cfg}
\end{figure*}
In each iteration, \sys synthesizes a graph grammar $\mathcal{G}$ that
derives all control paths of \cc{P} and that admits relational
invariants proving the safety of all paths enumerated.
As the set of enumerated paths is initially empty, the first grpah
grammar synthesized by \sys is likely a simple grammar whose
structure directly corresponds to the control-flow graph of
\cc{buildInspect}, such as the grammar $\buildinsgram'$ given in
\autoref{fig:build-inspect-grammar}, in which each path is derived
left-recursively. 
E.g., it could synthesize the graph grammar $\buildinsgram'$ given in
\autoref{fig:build-inspect-grammar}, in which each relation
derives a single path left-recursively.
As a heuristic, \sys could initially synthesize path grammars that use
right-recursive definitions, the program's syntactic structure, or
other sub-structures of the program's control-flow graph, such as
components of its Bourdoncle decomposition~\cite{bourdoncle93}.
However, no such fixed grammar can serve as a proof of correctness in
general, and must be iteratively restructured.

After \sys synthesizes a graph grammar $\mathcal{G}$, it attempts to
determine if $\mathcal{G}$ has relational invariants that represent a
proof of the safety of \cc{P}.
\sys reduces this problem to solving a system $\mathcal{S}$ of
\emph{Constrained Horn Clauses} (see \autoref{sec:chcs}), and runs a
solver on the generated system as a black box.
If the solver determines that $\mathcal{S}$ has a solution, then the
solution contains relational invariants of $\mathcal{G}$ that prove
safety of \cc{P}.

Otherwise, if the solver provides a counter-derivation $D$ of
$\mathcal{S}$ that has no solution, then $D$ defines a
control path that cannot be proved safe by relational invariants of
$\mathcal{G}$.
For example, when given $\buildinsgram'$, \sys determines that some
derivation of $\buildinsgram'$, such as the one corresponding to the
control path $p$ given in \autoref{fig:ex-path}, has no solution.
Such a derivation has no solution because nodes of the path that are
the sites of matching stores and loads do not co-occur in the clauses
of $\buildinsgram'$.

\sys then inspects the control path of the unsolvable derivation and
determines if it is truly unsafe.
If so, then \sys determines that \cc{P} is unsafe.
Otherwise, as in the case of $p$, \sys determines that the path is
safe, adds it to the set of enumerated paths, and recurses.
In \sys's next iteration, it will synthesize a path grammar distinct
from $\buildinsgram'$ that admits relation invariants proving the
safety of $p$.
\sys eventually synthesizes $\buildinsgram$, which proves safety for
all paths.

\section{Background}
\label{sec:background}
In this section, we review foundations on which \sys is built.
In \autoref{sec:lang}, we define the low-level imperative language
targeted by \sys.
In \autoref{sec:logic-chcs}, we review concepts from formal logic and
logic programming used in the design of \sys.

\subsection{A target imperative language}
\label{sec:lang}
In this section, we define the structure (\autoref{sec:structure}) and
semantics (\autoref{sec:semantics}) of \sys's target language.

\subsubsection{Structure}
\label{sec:structure}
A program is a set of instructions, each labeled with a control
location and a target control location.
The space of all \emph{control locations} is denoted \locs, with
distinguished \emph{initial} location and \emph{final} location
$\initloc, \finalloc \in \locs$.
The finite spaces of value variables and object variables are denoted
$\datavars$ and $\objvars$, respectively;
their union is denoted $\vars = \datavars \union \objvars$.
The finite spaces of \emph{data fields} and \emph{object fields} are
denoted $\datafields$ and $\objfields$, respectively.
All spaces of mutually disjoint.
\sys can be applied to programs in languages with infinite spaces of
variables and fields, because it only must consider the finite sets of
variables and fields that occur in a given program.
We assume that the spaces of variables and fields themselves are
finite in order to simplify the presentation of \sys.

The space of \emph{data} instructions is denoted $\valinstrs$, and
includes standard operations of Boolean and linear integer arithmetic.
Let $\cc{p}, \cc{q} \in \objvars$, $\cc{x} \in \datavars$, $\cc{f} \in
\datafields$, and $\cc{g} \in \objfields$ be arbitrary elements.
\load{y}{p}{f} is a \emph{data load} and \store{p}{f}{y} is a
\emph{data store};
\load{q}{p}{g} is an \emph{object load} and \store{p}{g}{q} is an
\emph{object store}.
The spaces of all data loads, data stores, object loads, and object
stores are denoted $\dataloads$, $\datastores$, $\objloads$, and
$\objstores$, respectively.
\alloc{p} is an \emph{allocation}, %
\isnil{x}{p} is a \emph{$\nil$ test}, and %
\iseq{x}{p}{q} is an \emph{object-equality test};
the space of all allocations, $\nil$ tests, and object-equality tests
are denoted \allocs, \niltests, and \objeqs, respectively.
The space of all instructions is denoted
\[ \instrs = \valinstrs \union \dataloads \union \datastores \union
\objloads \union \objstores \union \allocs \union \niltests \union
\objeqs \]

A pre-location, instruction, and branch-target location is a
\emph{labeled instruction};
i.e., the space of labeled instructions is denoted $\lblinstrs = \locs
\times \instrs \times \locs$.
For each labeled instruction $\cc{i} \in \lblinstrs$, the
pre-location, instruction, and post-location of $i$ are denoted
$\prelocof{\cc{i}}$, $\instrof{\cc{i}}$, and $\brtargetof{\cc{i}}$,
respectively.

A program is a set of labeled instructions.
I.e., the space of programs is $\lang = \pset(\lblinstrs)$, where for
any set $S$, $\pset(S)$ denotes the powerset of $S$.
For each $\cc{P} \in \lang$ and all $\cc{L}, \cc{L}' \in \locs$, there
may be at most one $\cc{i} \in \lblinstrs$ such that
$\prelocof{\cc{i}} = \cc{L}$ and $\brtargetof{ \cc{i} } = \cc{L}'$.
In such a case, \cc{i} is denoted alternatively as
$\instrat{ \cc{P} }{ \cc{L} }{ \cc{L}' }$.

\subsubsection{Semantics}
\label{sec:semantics}
A run of a program \cc{P} is a sequence of states generated by a
sequence of labeled instructions in which adjacent instructions have
matching target locations and pre-locations.
The space of \emph{data} is $\data = \bools \union \ints$, %
the space of \emph{non-nullable objects} is a countably-infinite space
$\objs'$, and the space of \emph{objects} is $\objs =
\add{\objs'}{\nil}$.

The space of \emph{local data states} is $\valctxs = \datavars \to
\data$, the space of \emph{local object states} is $\objctxs =
\objvars \to \objs$, and the space of \emph{local states} is $\ctxs =
\valctxs \times \objctxs$.
The space of \emph{data heaps} is $\dataheaps = \objs' \times
\datafields \to \data$, the space of \emph{object heaps} is $\objheaps
= \objs' \times \objfields \to \objs$, and the space of \emph{heaps}
is $\heaps = \dataheaps \times \objheaps$.
The space of states is $\states = \ctxs \times \heaps$.

\begin{table}[t]
  \begin{tabular}{| r @{} l | c | r  @{} r  @{} l @{} l | }
    \hline
    \multicolumn{2}{|c|}{ \multirow{2}{*}{ \textbf{Instruction} } } & 
    \textbf{Updated} &
    \multicolumn{4}{c|}{ \multirow{2}{*}{ \textbf{Updated Value} } } \\
    & & \textbf{Components} & & & & \\
    \hline
    \hline
    \multicolumn{2}{|c|}{$\cc{i} \in \valinstrs$} & $\sigma_D$ &
    \multicolumn{4}{c|}{$\sigma_D' \in \valctxs, \sigma_D \valtransrelof{
        \cc{i} } \sigma_D'$} \\
    \hline
    \cc{x:=} & \cc{p->f} & $\sigma_D$ &
    $\sigma_D[$ & 
    $\cc{x} \mapsto$ & 
    $H(\sigma_O(\cc{p}), \cc{f})$ & $]$ \\
    \hline
    \cc{p->f:=} & \cc{x} & $H_D$ &
    $H_D[$ & 
    $(\sigma_O(\cc{p}), \cc{f}) \mapsto$ &
    $\sigma_D(\cc{x})$ & $]$ \\
    \hline
    \cc{q:=} & \cc{p->f} & $\sigma_O$ & 
    $\sigma_O[$ &
    $\cc{q} \mapsto$ &
    $H_D( \sigma_D(\cc{p}), \cc{f})$ & $]$ \\
    \hline
    \cc{p->f:=} & \cc{q} & $H_O$ & 
    $H_O[$ &
    $(\sigma_O(\cc{p}), \cc{f}) \mapsto$ &
    $\sigma_O(\cc{q})$ & $]$ \\
    \hline
    \cc{x:=} & \cc{p=q} & $\sigma_D$ &
    $\sigma_D[$ & $\cc{x} \mapsto$ &
    $\sigma_O(\cc{p}) = \sigma_O(\cc{q})$ & $]$ \\
    \hline
    \cc{x:=} & \cc{isNil(p)} & $\sigma_D$ & 
    $\sigma_D[$ & $\cc{x} \mapsto$ & 
    $\sigma_O(\cc{p}) = \nil$ & $]$ \\
    \hline
    \multirow{3}{*}{\cc{p:=}} & 
    \multirow{3}{*}{\cc{new()}} & $\sigma_O$ &
    $\sigma_O[$ & $\cc{p} \mapsto$ & $o \notin \domain(H_O) =
    \domain(H_D)$ & $]$ \\
    & & $H_D$ &
    $H_D[$ & $\elts{ o } \times \datafields \mapsto$ & $0$ & $]$ \\
    & & $H_O$ &
    $H_O[$ & $\elts{ o } \times \objfields \mapsto$ & $\nil$ & $]$ \\
    \hline
\end{tabular}
\caption{Resulting post-state of instructions from, for 
  $\sigma_D \in \valctxs$, $\sigma_O \in \objctxs$, 
  $H_D \in \dataheaps$, and $H_O \in \objheaps$, the pre-state
  $((\sigma_D, \sigma_O), (H_D, H_O))$.
  In each instruction, $\cc{p}, \cc{q} \in \objvars$, $\cc{x} \in
  \datavars$, $\cc{d} \in \datafields$, and $\cc{f} \in \objfields$.
  In the entry for \alloc{p}, for each map $m$, $\domain(m)$ denotes
  the \emph{domain} of $m$.}
\label{tab:instr-semantics}
\end{table}
For each $\cc{i} \in \valinstrs$, there is a transition relation
$\valtransrelof{\cc{i}} \subseteq \valctxs \times \valctxs$.
The transition relation of a value instruction need not be total:
thus, labeled instructions can implement control branches using
instructions that act as \cc{assume} instructions.
The transition relation of instructions is defined based on their
structure.
The transition relation of each instruction, defined over an arbitrary
pre-state, is given in \autoref{tab:instr-semantics}.

For each labeled instruction $\cc{i} \in \lblinstrs$, the transition
relation of \cc{i} is the transition relation of the instruction of
\cc{i};
i.e., $\transrelof{ \cc{i} } = \transrelof{ \instrof{ \cc{i} } }$.
A program state $\sigma$ may not be the source of any entry in the
transition relation if it binds an object variable \cc{p} to $\nil$
and executes an instruction that attempts to load from or store to
$\nil$.
For the remainder of this paper, we do not consider a stuck state to
be an error state;
programs can be transformed so that stuck states are error states in
our formulation.

For each finite space $N$, %
$E \subseteq N \times N$ such that $(N, E)$ is a sequential graph and
each %
$\lambda : N \to \locs$, $(N, E, \lambda)$ is a \emph{control path}.
For each $\cc{P} \in \lang$, if $(N, E, \lambda)$ is such that for all
path edges $(n, n') \in E$, there exists an instruction
$\instrat{\cc{P}}{\lambda(n)}{\lambda(n')}$, then $(N, E, \lambda)$ is
a control path of \cc{P}.
The space of control paths of \cc{P} is denoted $\pathsof{\cc{P}}$.
For each $\cc{p} \in \pathsof{ \cc{P} }$, the nodes of \cc{p} are
denoted $\nodesof{ \cc{p} }$.

A run of a program $\cc{P}$ is a control path $p$ of $\cc{P}$ and a
map from each node of $p$ to a state such that states associated with
adjacent nodes satisfy the transition relation of a corresponding
instruction of \cc{P}.
\begin{defn}
  \label{defn:runs}
  For $\cc{P} \in \lang$, %
  finite space $N$, $E \subseteq N \times N$, and $\lambda: N \to
  \locs$ such that $(N, E, \lambda) \in \pathsof{ \cc{P} }$, %
  let $\sigma: N \to \states$ be such that %
  for all $n, n' \in N$ with $(n, n') \in E$, %
  $\sigma(n) \transrelof{ \instrat{ \cc{P} }{ \lambda(n) }{
      \lambda(n') } } \sigma(n')$.
  Then for $p = (N, E, \lambda)$, $(p, \sigma)$ is a \emph{run} of $p$
  in \cc{P}.
\end{defn}
For $p \in \pathsof{\cc{P}}$, the runs of $p$ in \cc{P} are denoted
$\runsof{p}$.
For $r \in \runsof{p}$, $p$ is denoted alternatively as $\pathof{r}$.
If $\runsof{p}$ is empty, then $p$ is \emph{infeasible}.
The runs of \cc{P} are the runs of all paths of \cc{P};
i.e., the runs of \cc{P} are denoted $\runsof{\cc{P}} = \bigunion_{p
  \in \pathsof{\cc{P}}} \runsof{p}$.
If $\runsof{\cc{P}}$ is empty, then \cc{P} is \emph{safe}.
The core problem addressed in this work is, given program $\cc{P}$, to
determine if $\cc{P}$ is safe.

\subsection{Constraint solving and Constrained Horn Clauses}
\label{sec:logic-chcs}
\subsubsection{Formal logic}
\label{sec:logic}
The quantifier-free fragment of the theory of the combinations of
linear arithmetic and uninterpreted functions is denoted $\euflia$.
For each space of logical variables $X$, the space of $\euflia$
formulas over $X$ is denoted $\formulas{X}$.
For each formula $\varphi \in \formulas{X}$, the set of variables that
occur in $\varphi$ (i.e., the \emph{vocabulary} of $\varphi$) is
denoted $\vocab(\varphi)$.
For formulas $\varphi_0, \ldots, \varphi_n, \varphi \in \formulas{X}$,
the fact that $\varphi_0, \ldots, \varphi_n$ \emph{entail} $\varphi$
is denoted $\varphi_0, \ldots, \varphi_n \entails \varphi$.
The models of variables $X$ are denoted $\modelsof{X}$.
The fact that a model $m$ \emph{satisfies} a formula $\varphi$ is
denoted $m \sats \varphi$.
\sys uses a decision procedure for \euflia, named \eufliaissat.
We assume that for each $\cc{i} \in \valinstrs$, \sys may access some
formula $\symvaltrans{ \cc{i} } \in$ $\formulas{ \euflia }$.

\subsubsection{Constrained Horn Clauses}
\label{sec:chcs}
Constrained Horn Clauses are a class of logic-programming problems
that formulate problems in program
verification~\cite{bjorner13,flanagan03,gupta11,rummer13}.
\paragraph{Structure}
A Constrained Horn Clause is a body, consisting of uninterpreted
relational predicates applied to logical variables and a constraint,
and a head application.
Relational predicates are symbols associated with arities.
\begin{defn}
  \label{defn:rel-preds}
  For each space of symbols $\mathcal{R}$ and function $a: \mathcal{R}
  \to \nats$, $(\mathcal{R}, a)$ is a space of \emph{relational
    predicates}.
\end{defn}

An application is a relational-predicate symbol paired with a sequence
of logical variables of length matching its arity.
To simplify the presentation, all objects defined in the remainder of
the section are defined over a fixed space of predicate symbols
$\mathcal{R}$ in which each symbol has arity $k \in \nats$, and a
fixed space of logical variables $X$.
\begin{defn}
  \label{defn:pred-apps}
  For relational predicate $R \in \mathcal{R}$ and sequence of variables
  $Y \in X^{*}$ such that $|Y| = k$, $(R, Y)$ is an \emph{application}
  of $\mathcal{R}$ over $X$.
\end{defn}
The space of applications of symbols in $\mathcal{R}$ over $X$ is
denoted $\apps{\mathcal{R}}{X}$.
For each application $A \in \apps{\mathcal{R}}{X}$, the predicate
symbol and argument sequence of $A$ are denoted $\relof{A}$ and
$\argsof{A}$ respectively.
A clause is a set of applications, a constraint over logical
variables, and a head relational predicate.
\begin{defn}
  \label{defn:chcs-structure}
  For $B \in \apps{\mathcal{R}}{X}^{*}$, %
  $\varphi \in \formulas{X}$, and %
  $H \in \mathcal{R}$, $(B, \varphi, H)$ is a \emph{Constrained Horn
    Clause}.
\end{defn}
The space of Constrained Horn Clauses over $\mathcal{R}$ and $X$ is
denoted $\chc{ \mathcal{R} }{ X }$.
For each $C \in \chc{\mathcal{R}}{X}$, the body of applications,
constraint, and head of $C$ are denoted $\appsof{C}$, $\ctrof{C}$, and
$\headof{C}$, respectively.
A set of Constrained Horn Clauses and a \emph{query}
relational-predicate symbol is a \emph{system} of Constrained Horn
Clauses;
i.e., the space of systems of Constrained Horn Clauses is denoted
$\chcs{\mathcal{R}}{X} = \pset(\chc{\mathcal{R}}{X}) \times
\mathcal{R}$.

\paragraph{Models}
A \emph{model} of $\mathcal{S}$ is a collection of logical models that
certify that $\mathcal{S}$ does not have a solution.
A derivation of a CHC system $\mathcal{S}$ is a tree $D$ labeled with
relational predicates of $\mathcal{S}$ such that all children with a
common parent $p$ are labeled with relational symbols of applications
in the body of a common clause of $\mathcal{S}$ with $p$ as its head.
For $\mathcal{C} \subseteq \chc{ \mathcal{R} }{ X }$, let each
relational predicate that is not the head of any clause of
$\mathcal{C}$ be a \emph{ground} relational predicate of
$\mathcal{C}$.
\begin{defn}
  \label{defn:chc-derivation}
  For $\mathcal{C} \subseteq \chc{ \mathcal{R} }{ X }$ and %
  $Q \in \mathcal{R}$, %
  $E \subseteq N^{*} \times N \times \mathcal{C}$, and %
  $\lambda_E : E \to \mathcal{C}$ be such that %
  $(N, E)$ is an directed hypertree and there is some labeling
  function $\lambda_N : N \to \mathcal{R}$ such that:
  \textbf{(1)} for $r \in N$ the root of $(N, E)$, %
  $\lambda_N(r) = Q$;
  \textbf{(2)} for each $n \in N$ a leaf of $(N, E)$, %
  $\lambda_N(n)$ is a ground relational predicate of $\mathcal{C}$;
  \textbf{(3)} for all $n_0, \ldots, n_k, n \in N$ with $e = ([ n_0,
  \ldots, n_k ], n) \in E$, it holds that %
  \textbf{(a)} $\lambda_N(n) = \headof{ \lambda_E(e) }$ and %
  \textbf{(b)} for each $0 \leq i \leq k$, %
  $\lambda_N(n_i) = \relof{ \appsof{ C }_i }$.
  Then $(N, E, \lambda_E)$ is a \emph{derivation} of $(\mathcal{C},
  Q)$.
\end{defn}
For $\mathcal{S} \in \chcs{ \mathcal{R} }{ X }$, the space of
derivations of $\mathcal{S}$ is denoted $\derivations{ \mathcal{S} }$.
For each $D \in \derivations{ \mathcal{S} }$, the nodes and hyperedges
of $D$ are denoted $\nodesof{D}$ and $\hedgesof{D}$.
%

A model of a CHC system $\mathcal{S}$ is a derivation $D$ of
$\mathcal{S}$ and a model indexed on nodes of $D$ that satisfies the
clauses that label the edges of $D$.
\begin{defn}
  \label{defn:chc-model}
  For $\mathcal{S} \in \chcs{\mathcal{R}}{X}$, %
  finite space $N$, %
  $E \subseteq N^{*} \times N$, %
  $\lambda_E : E \to \clausesof{ \mathcal{S} }$ such that $(N, E,
  \lambda_E) \in \derivations{ \mathcal{S} }$, %
  let $m$ be an \euflia model and %
  let $i : N \to \modelsof{ X }$ be such that %
  for all $n, n_0, \ldots, n_k \in N$ and %
  each $C \in \clausesof{S}$ with $e = ([ n_0, \ldots, n_k ], n) \in
  E$, it holds that for $C = \lambda_E(e)$,
  \textbf{(1)} $m, i(n) \sats \ctrof{ C }$;
  \textbf{(2)} for each $0 \leq j \leq k$, %
  $i(m)(\argsof{ \appsof{C}_j }) = i(n_j)(\params)$.
  Then $(D, m, i)$ is a \emph{model} of $\mathcal{S}$.
\end{defn}
We denote the models of $\mathcal{S}$ as $\modelsof{ \mathcal{S} }$.
For each $D \in \derivations{ \mathcal{S} }$, if there is some \euflia
model $m$ and $i : \nodesof{D} \to \modelsof{X}$ such that $(D, m, i)$
is a model of $\mathcal{S}$, then $D$ is \emph{feasible};
otherwise, $D$ is infeasible.
If some derivation of $\mathcal{S}$ is feasible, then $\mathcal{S}$ is
feasible (otherwise, $\mathcal{S}$ is infeasible).

A CHC \emph{solver} is a procedure that, given CHC system
$\mathcal{S}$, returns either the value $\isempty$ to denote that
$\mathcal{S}$ is infeasible, or a model of $\mathcal{S}$;
several CHC solvers have been proposed in previous
work~\cite{bjorner13,rummer13}.
\sys uses a CHC solver, named \solvechc, as a black box.

\paragraph{Solutions}
A solution of a clause $C$ is an interpretation of relational
predicates such that the conjunction of interpretations of all
relational predicates in the body of $C$ and the constraint of $C$
entail the interpretation of the head of $C$.
A solution of a CHC system $\mathcal{S}$ is a solution of each clause
in $\mathcal{S}$ that interprets the query relational predicate of
$\mathcal{S}$ as an unsatisfiable formula.
\begin{defn}
  \label{defn:chcs-soln}
  For $\mathcal{C} \subseteq \chc{ \mathcal{R} }{ X }$ and $Q \in
  \mathcal{R}$, let %
  $i : \mathcal{R} \to \formulas{ \params }$ be such that under each
  \euflia model, %
  \textbf{(1)} for each $B \in \apps{\mathcal{R}}{X}^{*}$, %
  $H \in \mathcal{R}$, and %
  $\varphi \in \formulas{X}$ such that $(B, H, C) \in \mathcal{C}$,
  \[ \elts{ i(\relof{A})[ \varsof{A} ] }_{A \in B}, \varphi \entails %
  i(H)
  \]
  \textbf{(2)} $i(Q) \entails \false$.
  Then $i$ is a \emph{solution} of $(\mathcal{C}, Q)$.
\end{defn}
When a CHC solver determines that a given system is empty, the solver
can synthesize a solution that certifies emptiness.
If \sys runs \solvechc on a system $\mathcal{S}$ and \solvechc
determines that $\mathcal{S}$ is empty, then \sys does not require
\solvechc to provide the generated solution.
However, \sys could be adapted to generate the CHC system that it
synthesizes, accompanied by its solution found by \solvechc, as a
proof of safety that can be independently certified.


\section{Proofs as Run Grammars}
\label{sec:pf-structures}
In this section and \autoref{sec:verifier-algo}, we present \sys in
technical detail.
In this section, we give a class of proofs of program safety as
grammars of program runs annotated with invariants.

\sys, given program \cc{P}, attempts to determine whether or not
\cc{P} is safe by synthesizing a CHC system $\mathcal{S}$ such that %
\textbf{(1)} each run of \cc{P} corresponds to a model of
$\mathcal{S}$ and %
\textbf{(2)} $\mathcal{S}$ has no models.
\sys attempts to certify that $\mathcal{S}$ has no models by synthesizing
a solution of \cc{S} in the theory $\datauif$.
Let \cc{P} be a fixed, arbitrary program for the remainder of the
section.

Let the theory $\runbg$ be $\datauif$ restricted to contain the
following uninterpreted function symbols.
The nullary symbols of $\runbg$ contain, for each $\cc{L} \in \locs$,
the symbol \cc{L}.
%
The unary symbols of $\runbg$ contain %
symbols $\locsym$ and $\ctrlsucc$; %
for each $\cc{x} \in \datavars$, the symbol \cc{x};
for each $\cc{p} \in \objvars$, the symbol \cc{p}.
The binary symbols of $\runbg$ contain %
for each $\cc{f} \in \datafields$, the symbol \cc{f};
for each $\cc{g} \in \objfields$, the symbol \cc{g}.

If for each $m \in \modelsof{ \mathcal{S} }$ with domain $N$, the
graph $(N, \succ_m, \locsym_m)$ is a control path of \cc{P}, then
$\mathcal{S}$ is a \emph{run grammar} of \cc{P}.
The space of run grammars of \cc{P} is denoted $\runchcs{ \cc{P} }$.
Let $\mathcal{S} \in \runchcs{ \cc{P} }$ be a fixed, arbitrary element
for the remainder of the section.

\sys uses a procedure $\pathof$ that, given $m \in \modelsof{
  \mathcal{S} }$, returns the control path of the run of $m$.
$\pathof$ is implemented by returning the interpretations of symbols
$\locsym$ and $\succsym$ in $m$.

Each run $r$ defines a map for each data and object field \cc{f}, from
each control point $p$ of $r$ and object $o$ allocated while executing
$r$ to the control point in $r$ at which the \cc{f} field of $o$ was
last updated when $r$ reached $p$.
In particular, %
let $N$ be a finite space, %
$L : N \to \locs$; %
$\succ: N \to N$; %
$\sigma : N \to \states$ be such that $((N, \succ, \locsym), \sigma)
\in \runsof{ \cc{P} }$.
For each $\cc{x} \in \datavars$, $\cc{x}_r : N \to \values$ is such
that for each $n \in N$, $\cc{x}_r(n) = \sigma(n)(\cc{x})$, and for
each $\cc{p} \in \objvars$, $\cc{p}_r : N \to \objs$ is defined
similarly.
For each $\cc{f} \in \datafields$, let $cc{f}_r : N \times \objs \to
N$ be such that for all $n_i, n_j \in N$ and $o \in \objs$, if $n_i$
is the last node before $n_j$ such that $\instrat{ \cc{P} }{
  \locsym(n_{i - 1}) }{ n_i } \equiv \store{p}{f}{x}$ and
$\objframeof{ \sigma(n_i) }(\cc{x}) = o$, then $\cc{f}_r(n_j, o) =
n_i$.
For each $\cc{g} \in \objfields$, let $\cc{g}_r : N \times \objs \to
N$ be defined similarly.
Both collections of symbols are called \emph{update histories}.

For $\mathcal{S} \in \runchcs{ \cc{P} }$, let domain $N$ combined with
$L$, $\succ$, %
$\setformer{ \cc{x}_r }{ \cc{x} \in \datavars}$, %
$\setformer{ \cc{p}_r }{ \cc{p} \in \objvars}$, %
$\setformer{ \cc{f}_r }{ \cc{f} \in \datafields }$, %
$\setformer{ \cc{f}_r }{ \cc{f} \in \objfields }$ %
as interpretations of %
$\locsym$, $\succsym$, %
$\datavars$, %
$\objvars$, %
$\datafields$, and %
$\objfields$ be the model of $\runbg$ denoted $\modelof{r}$.
If $\modelof{r}$ is a model of $\mathcal{S}$, then $r$ is a \emph{run}
of $\mathcal{S}$.
If each run of \cc{P} is a run of $\mathcal{S}$, then \cc{P} is
\emph{simulated} by $\mathcal{S}$.

For $\cc{p} \in \pathsof{ \cc{P} }$, if for each $r \in \runsof{
  \cc{p} }$, $\modelof{r}$ is not a model of $\mathcal{S}$, then
\cc{p} is \emph{refuted} by $\mathcal{S}$.
For $\cc{Q} \subseteq \pathsof{ \cc{P} }$, if for each $\cc{p} \in
\cc{Q}$, it holds that \cc{p} is refuted by $\mathcal{S}$, then
$\cc{Q}$ are refuted by $\mathcal{S}$.
If $\pathsof{ \cc{P} }$ are refuted by $\mathcal{S}$, then \cc{P} is
refuted by $\mathcal{S}$.

If a program is simulated and refuted by a CHC system, then the
program is safe.
\begin{lemma}
  \label{lemma:pf-corr}
  If \cc{P} is simulated by $\mathcal{S}$ and \cc{P} is refuted by
  $\mathcal{S}$, then \cc{P} is safe (\autoref{sec:semantics}).
\end{lemma}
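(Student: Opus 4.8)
The plan is to argue by contradiction, essentially by unfolding the definitions of \emph{safe}, \emph{simulated}, and \emph{refuted}. Suppose \cc{P} is not safe; then by the definition of safety (\autoref{sec:semantics}), $\runsof{\cc{P}}$ is nonempty. Since $\runsof{\cc{P}} = \bigunion_{\cc{p} \in \pathsof{\cc{P}}} \runsof{\cc{p}}$, there is a control path $\cc{p} \in \pathsof{\cc{P}}$ and a run $r \in \runsof{\cc{p}} \subseteq \runsof{\cc{P}}$. The purpose of this first step is to pin the witnessing run $r$ to a concrete path $\cc{p}$, so that the refutation hypothesis---which is stated as a quantification over paths of \cc{P}---can later be applied to $\cc{p}$.

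Next I would invoke the simulation hypothesis. Because \cc{P} is simulated by $\mathcal{S}$, every run of \cc{P} is a run of $\mathcal{S}$; in particular $r$ is a run of $\mathcal{S}$. By the definition of ``run of $\mathcal{S}$'', this means the canonical $\runbg$-model $\modelof{r}$---whose domain is the node set of $r$ and whose interpretations of $\locsym$, $\succsym$, the value- and object-variable symbols, and the data- and object-field symbols are, respectively, the control path of $r$, the per-node local states of $r$, and the update histories $\cc{f}_r$ of $r$---is a model of $\mathcal{S}$. Here I would note in passing that $\modelof{r}$ is well defined for any $r \in \runsof{\cc{P}}$, since its construction preceding the lemma uses only data that every run of \cc{P} supplies.

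Finally I would invoke the refutation hypothesis. Because \cc{P} is refuted by $\mathcal{S}$, the paths $\pathsof{\cc{P}}$ are refuted by $\mathcal{S}$, and hence the particular path $\cc{p}$ chosen above is refuted by $\mathcal{S}$; by definition this says that for \emph{every} $r' \in \runsof{\cc{p}}$, $\modelof{r'}$ is \emph{not} a model of $\mathcal{S}$. Applying this with $r$ in the role of $r'$ (legitimate since $r \in \runsof{\cc{p}}$) yields that $\modelof{r}$ is not a model of $\mathcal{S}$, contradicting the conclusion of the previous step. Therefore $\runsof{\cc{P}}$ is empty, i.e., \cc{P} is safe.

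The argument is a short chain of definition unfoldings, so I do not expect a substantive obstacle. The one point worth spelling out carefully is that the object $\modelof{r}$ referenced by the simulation hypothesis and the object $\modelof{r'}$ (with $r' = r$) referenced by the refutation hypothesis are literally the same $\runbg$-model---both are the canonical model built from $r$ via the update histories---so that ``is a model of $\mathcal{S}$'' and ``is not a model of $\mathcal{S}$'' genuinely conflict; and that the existence of a path $\cc{p}$ with $r \in \runsof{\cc{p}}$ is exactly what lets the path-indexed refutation hypothesis be instantiated. No properties of $\datauif$, the $\euflia$ decision procedure, or the CHC solver are needed.
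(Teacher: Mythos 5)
Your proof is correct and follows essentially the same route as the paper's: the paper argues directly that refutation gives $\mathcal{S}$ no model that is a run of \cc{P}, simulation then gives that \cc{P} has no run, hence \cc{P} is safe by definition. Your version merely recasts this as a contradiction and spells out the definition unfoldings (pinning the run to a path so the path-indexed refutation hypothesis can be instantiated, and noting that the same canonical model $\modelof{r}$ appears in both hypotheses), which is a harmless elaboration of the same argument.
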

If \cc{P} is simulated by a run grammar $\mathcal{S}$, then a solution
of $\mathcal{S}$ (which certifies that $\mathcal{S}$ is infeasible and
thus refutes \cc{P}), can be viewed as \emph{relational invariants}
that prove the safety of \cc{P}.

\begin{ex}
  \label{ex:empty-sim}
  Recall $\gbi$, the grammar partially depicted in
  \autoref{sec:ex-grammar}, \autoref{fig:build-inspect-grammar}. If we encode
  the runs of \cc{buildInspect} using the structure of $\gbi$, we
  could simulate every run, and showing the emptiness of that grammar
  could show the safety of $\cc{buildInspect}$.
\end{ex}
However, a safe program \cc{P} may be simulated by a CHC system that
does not refute it.
\begin{ex}
  \label{ex:non-empty-sim}
  Recall $\gbi'$, the simplistic grammar proposed in
  \autoref{fig:build-inspect-cfg}. If we try to encode \cc{P} using
  this structure, we are compelled to overapproximate by dropping
  state information because an unbounded separation arises between a
  load and its matching store. Every runs of \cc{buildInspect} is
  present in $\gbi'$, but there are additional models not
  corresponding to actual runs. This non-empty grammar simulates
  \cc{buildInspect}, even though \cc{buildInspect} is safe.
\end{ex}

%

\section{Inductive synthesis of proofs using \sys}
\label{sec:verifier-algo}
In this section, we describe \sys, a verifier that attempts to prove
the safety of a given program by inductively synthesizing a run grammar
as proof.
\begin{figure}[t]
  \centering
  \begin{minipage}[t]{.48\textwidth}
\begin{algorithm}[H]
  \SetKwInOut{Input}{Input}
  \SetKwInOut{Output}{Output}
  \SetKwProg{myproc}{Procedure}{}{}
  \Input{$\cc{P} \in \lang$.}
  \Output{Decision as to whether \cc{P} is safe.}
  \myproc{$\sys(\cc{P})$ \label{line:sys-begin}} %
  { \myproc{$\sysaux(\feedback)$ \label{line:begin-aux}} %
    { $\mathcal{G} \assign \synchc(\cc{P}, \feedback)$ %
      \label{line:syn-run-grammar} \;
      \Switch{$\solvechc(\mathcal{G})$ \label{line:run-solvechc}} %
      { \uCase{$\isempty$: \label{line:is-empty} } %
        { %
          \Return{$\true$} \label{line:ret-safe} %
        } %
        \Case{$D \in \derivations{\mathcal{G}}$: \label{line:has-model}} %
        { 
          $p \assign \pathof(D)$ \label{line:ext-path} \; %
          \uIf{$\isfeasible(p)$ \label{line:test-feasible} }{ %
            \Return{$\false$} \label{line:ret-unsafe} \; } %
          \Else{ \Return{$\sysaux(\add{p}{\feedback})$} %
            \label{line:ret-recurse} \; }
        } %
      } \label{line:end-switch} %
    } \label{line:end-aux} %
    \Return{$\sysaux(\emptyset)$} \label{line:run-aux}
  }
\end{algorithm}
\end{minipage}
\begin{minipage}[t]{.46\textwidth}
\begin{algorithm}[H]
  \SetKwInOut{Input}{Input}
  \SetKwInOut{Output}{Output}
  \SetKwProg{myproc}{Procedure}{}{}
  \Input{$\cc{P} \in \lang$ and $p \in \pathsof{ \cc{P} }$.}
  \Output{A minimal refuting neighborhood $\nu : \nodesof{p} \to
    \pset(\nodesof{p})$ for $p$.}
  \myproc{$\extractdeps(\cc{P}, p)$} %
  { 
    $\nu \assign \nuAll$ \label{line:get-all-deps} \; %
    \For{$n, n' \in \nodesof{p}$} {
      {$\nu' \assign \upd{\nu}{n}{\nu(n) \setminus \{n'\}}$ %
        \label{line:prune-try}} \;
      \lIf{$\lnot \eufliaissat(\sympath(\cc{P},p,\nu'))$ %
          \label{line:prune-check} } {
          $\nu \assign \nu'$ \label{line:prune-commit}
      } %
    } %
    \Return{$\nu$} %
  } %
\end{algorithm}
\end{minipage}
  \begin{minipage}[t]{.48\textwidth}
    \begin{algorithm}[H]
      \caption{\sys: a safety verifier based on inductive synthesis. %
        \sys uses procedures %
        $\synchc$ (\autoref{sec:syn-skeleton}, \autoref{sec:syn-run-grammar}), %
        $\solvechc$ (\autoref{sec:chcs}), %
        $\pathof$ (\autoref{sec:pf-structures}), and %
        $\isfeasible$ (\autoref{sec:history-ctrs}). }
      \label{alg:sys}
    \end{algorithm}
  \end{minipage}
  \qquad
  \begin{minipage}[t]{.46\textwidth}
    \begin{algorithm}[H]
      \caption{$\extractdeps$: given $\cc{P} \in \lang$ and infeasible
        $p \in \pathsof{\cc{P}}$, returns a minimal refuting
        neighborhood for $p$. The map $\nuAll$ and the formula
        $\sympath$ are defined in \autoref{sec:history-ctrs}.}
      \label{alg:extract}
    \end{algorithm}
  \end{minipage}
\end{figure}

\autoref{alg:sys} contains pseudocode for \sys.
\sys defines a procedure \sysaux which, given infeasible paths
$\feedback \subseteq \pathsof{ \cc{P} }$, attempts to determine if
\cc{P} is safe by synthesizing a run grammar that overapproximates the
runs of \cc{P} and refutes $\feedback$
(\autoref{line:begin-aux}---\autoref{line:end-aux}).
\sys invokes \sysaux on the empty set of control paths and returns the
result (\autoref{line:run-aux}).

\sysaux, given $\feedback$, synthesizes a run grammar $\mathcal{G}$
that simulates \cc{P} and refutes $\feedback$ by a procedure \synchc
on \cc{P} and $\feedback$ (\autoref{line:syn-run-grammar}).
An implementation of \synchc that performs a reduction to constraint
solving is described in \autoref{sec:syn-run-grammar}.
\sysaux then determines if $\mathcal{G}$ refutes \cc{P} by running the
CHC solver \solvechc (described in \autoref{sec:chcs}) on
$\mathcal{G}$ to determine if $\mathcal{G}$ is infeasible.
If \solvechc determines that $\mathcal{G}$ is infeasible
(\autoref{line:is-empty}), then \sysaux returns that \cc{P} is safe.

Otherwise, \solvechc returns a feasible derivation $D$ of
$\mathcal{G}$ (\autoref{line:has-model}).
\sysaux extracts from $D$ some $\cc{p} \in \pathsof{ \cc{P} }$ not
refuted by $\mathcal{G}$ (\autoref{line:ext-path}).
\sysaux then tests if \cc{p} is feasible by running a procedure
\isfeasible on \cc{p} (\autoref{line:test-feasible});
if \isfeasible determines that \cc{p} is feasible, then \sysaux
returns that \cc{P} is not safe.

Otherwise, if \isfeasible returns that \cc{p} is feasible, then
\sysaux recurses on $\feedback$ extended with \cc{p}, and returns the
result of the recursion (\autoref{line:ret-recurse}).

We now describe the implementation of each procedure used by \sys.
In \autoref{sec:is-feasible}, we describe an implementation of
\isfeasible.
In \autoref{sec:syn-skeleton} and \autoref{sec:syn-run-grammar}, we
describe the two steps performed by \synchc.

\subsection{Testing path feasibility using \isfeasible}
\label{sec:is-feasible}

The procedure \isfeasible, given $\cc{P} \in \lang$ and $\cc{p} \in
\pathsof{ \cc{P} }$, returns whether or not \cc{p} is a feasible path
of \cc{P}.
Let finite space $N$, %
$E \subseteq N \times N$, and %
$\lambda : N \to \locs$, be such that $\cc{p} = (N, E, \lambda)$.
\isfeasible generates a constraint $\varphi \in \formulas{\runbg}$ for
which each model corresponds to run of \cc{p}.

\subsubsection{Symbolic constraints over update histories}
\label{sec:history-ctrs}
$\varphi$ is constructed as a conjunction of clauses, each of which
models the effect of an instruction executed in a step of \cc{p} on
its local variables and update histories.
The effect of each $\cc{i} \in \cc{P}$ executed in a step of \cc{p} is
formulated by a constraint $\symrel{ \cc{i} }$ parameterized on $N$,
along with distinguished $n, n' \in N$ that model state before
executing \cc{i} and state that immediately results from executing
\cc{i}.

$\symrel{ \cc{i} }(n, n', N)$ is defined casewise by the structure of
\cc{i}.
In many cases, $\symrel{ \cc{i} }$ is defined using formulas that
constrain equality of logical terms that model state.
In particular, let %
$\datavarseq{n}{n'}$, $\objvarseq{n}{n'}$, $\datafieldseq{n}{n'}{N}$,
and $\objfieldseq{n}{n'}{N}$ constrain that the states at $q$ and $n'$
have equal local value states, local object states, data timestamps,
and object timestamps, respectively.
I.e.,
\begin{align*}
  \datavarseq{n}{n'} & \equiv \bigland_{ \cc{x} \in \datavars} %
  \cc{x}(n) = \cc{x}(n') \\
  \datafieldseq{n}{n'}{N} & \equiv %
  \bigland_{ %
    \substack{ %
      \cc{f} \in \datafields \\
      \cc{p} \in \objvars \\
      n'' \in N } } %
  \cc{f}(n, \cc{p}(n'')) = \cc{f}(n', \cc{p}(n''))
\end{align*}
$\objvarseq{n}{n'}$ and $\objfieldseq{n}{n'}{N}$ are defined
similarly.

If $\cc{i} \in \valinstrs$, then $\symrel{\cc{i}}$ constrains that the
value local state at $n$ and the value local state at $n'$ are in the
transition relation of \cc{i}, and that their object stores are
identical.
I.e., $\symrel{ \cc{i} }(n, n', N)$ is
\begin{align*}
  \symvaltrans{ \cc{i} }[ \datavars(n), \datavars(n') ] \land
  \objvarseq{ n }{ n' } \land %
  \nonumber \\
  \datafieldseq{ n }{ n' }{N} \land %
  \objfieldseq{ n }{ n' }{N}
\end{align*}
For each $\cc{i} \in \niltests$ or $\cc{i} \in \objeqs$, the
constraint $\symrel{ \cc{i} }$ is defined similarly.

For $\cc{x} \in \valvars$, $\cc{p} \in \objvars$, and $\cc{f} \in
\datafields$, $\symrel{ \store{p}{f}{x} }(n, n', N)$ constrains that
at $n'$, the most recent store to the \cc{f} field of the object bound
to \cc{p} is $n'$.
The local states and update history of all other fields are identical
between $n$ and $n'$.
I.e., $\symrel{ \store{p}{f}{x} }(n, n', N)$ is
\begin{align*}
  & \datavarseq{ n }{ n' } \land %
  \objvarseq{ n }{ n' } \land %
  \cc{f}(n', \cc{p}(n')) = n' \land
  \bigland_{ %
    \substack{
      \cc{p} \in \objvars \\
      n'' \in N }} %
  \cc{f}(n', \cc{p}(n'')) = \cc{f}(n, \cc{p}(n'')) \land \\ %
  & \bigland_{ %
    \substack{ %
      \cc{g} \not= \cc{f} \in \datafields \\
      \cc{p} \in \objvars \\
      n'' \in N } } \cc{g}(n', \cc{p}(n'')) = \cc{g}(n, \cc{p}(n''))
  \land \objfieldseq{n}{n'}{N}
\end{align*}
For each $\cc{i} \in \objstores$, the constraint $\symrel{ \cc{i} }$
is defined similarly.
\begin{ex}
  \label{ex:store-ctr}
  \cc{buildInspect} contains an object store on line \cc{7}. The
  relation which models this instruction constrains the local state
  after the instruction such that the most recent store for the
  \cc{next} field of the object \cc{tail} is set to \cc{tmp}.
\end{ex}

For each $\cc{x} \in \datavars, \cc{p} \in \objvars$ and $\cc{f} \in
\datafields$, $\symrel{\load{x}{p}{f}}(n, n', N)$ inspects all states
bound to variables in $Q$ to determine if some $n'' \in N$ is the
point of the most recent store to \cc{f}.
If so, the value in \cc{x} at $n'$ is constrained to be the value
stored when stepping to $n'$;
otherwise, the value bound to \cc{x} at $n'$ is unconstrained.
Let $\datastoresrcs{ \cc{P} } \subseteq \locs$ be control locations that
are sources of data stores in \cc{P}, and let $\stored{ \cc{P} } :
\datastoresrcs{ \cc{P} } \to \datavars$ map each such control location
to the data variable that holds that value stored by the instruction.
Then $\symrel{ \load{x}{p}{f} }(n, n', N)$ is
\begin{align*}
  \bigland_{ %
    n'' \in N, \lambda(n'') \in \datastoresrcs{ \cc{P} } } & %
  \cc{f}(q, \cc{p}(q)) = n'' \implies %
  \cc{x}(q') = \stored{\cc{P}}(n'') \land \\
  \bigland_{\cc{y} \not= \cc{x} \in \datavars} & \cc{y}(n') = \cc{y}(n)
  \land
  \objvarseq{n}{n'} \land %
  \datafieldseq{n}{n'}{N} \land %
  \objfieldseq{n}{n'}{N}
\end{align*}
For each $\cc{i} \in \objloads$, the constraint $\symrel{ \cc{i} }$ is
defined similarly.
\begin{ex}
  \label{ex:load-ctr}
  \cc{buildInspect} contains an object load on line \cc{11}. The
  relation which models this instruction constrains the local state
  after the instruction such that the variable \cc{elt} is set to the
  most recent store for the \cc{next} field of the object \cc{elt}.
\end{ex}

For $\cc{x} \in \objvars$, $\symrel{\alloc{x}}(n, n', N)$ constrains
that the identity of the allocated object is $n'$;
the fields of the allocated object are initialized at $n'$.
\begin{align*}
  &\datavarseq{ n }{ n' } \land %
  \cc{x}(n') = n' \land %
  \bigland_{ \cc{y} \not= \cc{x} \in \objvars } %
  \cc{y}(n') = \cc{y}(n) \land %
  \cc{Stored}(n') = 0 \land \\ %
  &\bigland_{ \cc{f} \in \datafields \union \objfields } {
    \cc{f}(n', \cc{x}(n')) = n'
  } \land %
  \bigland_{
    \substack{
      \cc{y} \not= \cc{x} \in \objvars \\
      \cc{f} \in \datafields \union \objfields }
  } %
  \cc{f}(n', \cc{y}(n')) = \cc{f}(n, \cc{y}(n))
\end{align*}

For a path $p = (N, \lambda, E) \in \pathsof{\cc{P}}$, define the
constraint $\sympath(\cc{P},p,\nu) \in \formulas \runbg$, where $\nu :
N \to \pset(N)$, as follows:
\[ 
  \bigland_{ (n, n') \in E } %
  \symrel{ \instrat{ \cc{P} }{ \lambda(n) }{ \lambda(n') } }%
  (n, n', \nu(n'))
\]

Let $\nuAll : N \to \pset(N)$ be such that for each $n \in N$,
$\nuAll(n) = N$.
\isfeasible generates the constraint $\varphi = \sympath(\cc{P}, p,
\nuAll)$.
\isfeasible returns that \cc{p} is a feasible path of \cc{P} if and
only if \eufliaissat (see \autoref{sec:chcs}) decides that $\varphi$
is satisfiable.

\isfeasible is a sound and complete procedure for testing path
feasibility.
\begin{lemma}
  \label{lemma:is-feasible-corr}
  If \cc{p} is a feasible path of \cc{P}, then $\isfeasible(\cc{P},
  \cc{p}) = \true$.
  Otherwise, $\isfeasible(\cc{P}, \cc{p}) = \false$.
\end{lemma}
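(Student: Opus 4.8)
The plan is to reduce the claim to a statement about satisfiability: I will show that the formula $\varphi = \sympath(\cc{P}, p, \nuAll) \in \formulas{\runbg}$ built by \isfeasible is satisfiable if and only if $\runsof{p} \neq \emptyset$. Since \isfeasible returns $\true$ exactly when \eufliaissat reports $\varphi$ satisfiable, and \eufliaissat is assumed to be a decision procedure for \euflia, the lemma follows at once from this equivalence. Note that $\varphi$ constrains only the applications of the symbols for $\datavars$, $\objvars$, $\datafields$, $\objfields$ (and $\cc{Stored}$) to nodes of $p$; the path $p = (N, E, \lambda)$ itself pins the domain $N$, the successor structure, and $\locsym$, so a model of $\varphi$ is essentially an assignment of local values and update histories to the nodes of $p$, i.e.\ the data of a candidate $\modelof{r}$ (\autoref{sec:pf-structures}).

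For the forward direction (if $p$ is feasible, $\varphi$ is satisfiable), I would take any run $r = (p, \sigma) \in \runsof{p}$, form $\modelof{r}$ by interpreting each variable symbol as $\cc{x}_r$ / $\cc{p}_r$ and each field symbol as the update history $\cc{f}_r$, and verify that every conjunct $\symrel{\instrat{\cc{P}}{\lambda(n)}{\lambda(n')}}(n, n', N)$ of $\varphi$ holds under $\modelof{r}$. This is a case analysis over the form of the instruction: for data and arithmetic instructions the conjunct is $\symvaltrans{\cc{i}}$ (faithful by assumption, since $\sigma(n)$ and $\sigma(n')$ agree with a transition of $\cc{i}$ on local value state) conjoined with equalities asserting that the untouched components are preserved, which hold by definition of the update histories; for loads, stores, $\nil$-tests, object-equality tests, and allocations, the corresponding conjunct mirrors the post-state equations of \autoref{tab:instr-semantics} exactly because $\cc{f}_r$ was defined to record the site of the last store (or the allocation site) of each field of each object. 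So $\modelof{r} \sats \varphi$.

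For the backward direction (if $\varphi$ is satisfiable, $p$ is feasible), I would fix a model $m$ of $\varphi$ and reconstruct a run $\sigma$ of $p$: local states at $n$ come directly from $\cc{x}_m(n)$ and $\cc{p}_m(n)$; the heap at $n$ maps a field $\cc{f}$ of an object $o$ to the value written at the node $\cc{f}_m(n, o)$ — namely $\stored{\cc{P}}(\lambda(\cc{f}_m(n,o)))$ evaluated at that node, falling back to the allocation-time value $0$ when $\cc{f}_m(n,o)$ is an allocation node — and each object allocated along $p$ is taken to be the (pairwise distinct) node at which its allocation occurs. The crucial sublemma, proved by induction along $p$, is that in any model of the \emph{entire} conjunction $\varphi$ the interpretation $\cc{f}_m(n, o)$ is exactly the last node strictly before $n$ at which $\cc{f}$ of $o$ was stored (or $o$'s allocation node if there is none): a store clause sets $\cc{f}(n', \cc{p}(n')) = n'$ and leaves $\cc{f}(\cdot, o)$ unchanged for every other $o$, while every other clause leaves $\cc{f}$ entirely unchanged. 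Granting this, the load/object-load clauses — whose third argument $\nu(n') = N$ ranges over \emph{all} nodes, hence over the genuine most-recent store — force precisely the value dictated by \autoref{tab:instr-semantics}, the remaining conjuncts force the other components, and so every edge of $p$ satisfies the transition relation of its instruction, giving $(p, \sigma) \in \runsof{p}$.

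The main obstacle is this backward direction: turning a flat \euflia model into a genuine sequence of states. Concretely, one must (i) establish the update-history sublemma above so that loads read the correct value; (ii) check that allocations in the reconstructed run yield objects outside the current heap domain, which follows from the allocation clause's equality $\cc{x}(n') = n'$ and the distinctness of nodes but requires tracking the growing heap domain along $p$; and (iii) handle the case where a loaded field was never written along $p$, where the load clause leaves $\cc{x}(n')$ unconstrained and one is free to assign the allocation-time initialization value. None of these is deep, but together they carry the real content of the lemma; the forward direction and the reduction to \eufliaissat are immediate from the definitions.
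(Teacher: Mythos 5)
Your proposal takes essentially the same route as the paper: the paper's proof is a short sketch asserting that $\sympath(\cc{P},\cc{p},\nuAll)$ models the semantics of each instruction exactly (every load being checked against all of $N$ for its unique matching store), so that satisfiability of the formula coincides with feasibility of the path. Your write-up is in fact more detailed than the published argument --- the two directions, the update-history induction for the backward direction, and the uninitialized-field caveat you flag in (iii) are all glossed over in the paper's sketch --- so it is a strict refinement of the same approach rather than a different one.
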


\isfeasible could be implemented alternatively by reduction to
satisfiability testing in alternative theories that can soundly and
completely model the feasibility of bounded paths, such as
combinations of the theory of arrays.
We have presented \isfeasible as using \eufliaissat in order to
introduce the symbolic relations on update histories for each
instruction, which \sys also uses in queries in order to synthesize
run grammar skeletons (\autoref{sec:syn-skeleton}), and as components
of run grammars (\autoref{sec:syn-run-grammar}).
  
\subsection{Synthesizing a run-grammar skeleton using constraint
  solving}
\label{sec:syn-skeleton}
In this section and in \autoref{sec:syn-run-grammar}, we describe an
implementation of \synchc, which given $\cc{P} \in \lang$ and $\cc{F}
\subseteq \pathsof{P}$, synthesizes a run grammar $\mathcal{G}$ that
simulates \cc{P} and refutes \cc{F}.
\synchc synthesizes such a run grammar by performing two steps,
described in this section and \autoref{sec:syn-run-grammar}.
In the first step, \synchc finds the relations of $\mathcal{G}$ and
the sets of variables in each clause body to which relational
predicates are applied.
We refer to such an object as a CHC system \emph{skeleton}.
\synchc finds a skeleton by reduction to constraint solving.

\synchc obtains a clause skeleton for \cc{P} and \cc{F} by running a
procedure \synskeleton on \cc{P} and \cc{F}.
The implementation of \synskeleton that we present always makes
progress, in the sense that for each \cc{P} and \cc{F}, it synthesizes
a well-formed skeleton for \cc{P} and \cc{F}.
However, the implementation is not complete, in that there are some
safe programs whose required skeletons are not expressible in this
particular constraint-based approach to \synskeleton.
We have found that the present implementation is sufficiently
expressive for \sys to be able to prove correctness of interesting
verification challenge problems (discussed in
\autoref{sec:evaluation}).
We leave the design of alternative, more expressive, or more efficient
implementations of \synskeleton for future work.

\subsubsection{Run-grammar skeletons}
\label{sec:skeletons}
In its first step, \synchc synthesizes a run-grammar \emph{skeleton},
which defines the control paths generated by run grammar constructed
from completing it.
We define a restricted implementation of \synchc that only generates
run grammars in which each non-terminal derives exactly two control
subpaths of \cc{P}.
In particular, let $\skelpreds$ be a set of relational predicates of
fixed, common arity $n$.
Let $Q$ be a space of variable symbols, and let $Q' \in Q^{*}$ be
fixed sequence of $n$ distinct variables in $Q$.
For %
$R \in \skelpreds$,
$A \in \apps{ \skelpreds }{ Q }$, %
$L : Q \to \locs$, %
$q, q' \in Q$, $\mathcal{C} = (R, A, L, q, q')$ is a \emph{clause
  skeleton}.
The space of clause skeletons is denoted $\clauseskels$.
A set of clause skeletons combined with four maps from $\skelpreds \to
\ints_n$ is a run-grammar skeleton.
For the remainder of this section, let $R$, $A$, $L$, $q$, $q'$, and
$\mathcal{S} \subseteq \clauseskels$ and $F = \prefrag_0,
\postfrag_0, \prefrag_1, \postfrag_1 : \skelpreds \to \ints_n$ be a
fixed, arbitrary elements.

\paragraph{Synthesizing a skeleton that simulates a given program}
$(\mathcal{S}, F)$ simulate \cc{P} if when each clause in
$\mathcal{S}$ is extended with a suitable constraint to form a clause,
the resulting run grammar generates all control paths of \cc{P}.
In particular, for $\varphi \in \formulas{\runbg}$, let $((A,
\varphi), R)$ be the \emph{completion} of $\mathcal{C}$.
Let $\mathcal{S}' \in \runchcs$ such that each clause in
$\mathcal{S}'$ is a completion of a clause $(R, A, L, q, q')$ with a
constraint $\bigland_{q \in Q} \locsym(q) = L(q) \land q' =
\succsym(q)$ be the \emph{control-path grammar} of $\mathcal{S}$.
If for each $\mathcal{R}' \in \skelpreds$ and each model $m$ of
$\mathcal{R}$, there is a control path from
$m(\prefrag_0(\mathcal{R}'))$ to $m(\postfrag_0(\mathcal{R}'))$ and a
control path from $m(\prefrag_1(\mathcal{R}'))$ to
$m(\postfrag_1(\mathcal{R}'))$, then $(\mathcal{S}, F)$ is
\emph{well-formed}.
If $\mathcal{S}'$ simulates \cc{p}, then $\mathcal{S}$ simulates
\cc{P}.
If all $r, r' \in \runsof{ \mathcal{S}' }$ that have the same control
path have the same derivation in $\mathcal{S}'$, then $\mathcal{S}$ is
\emph{unambiguous}.

\synskeleton synthesizes a well-formed, unambiguous control-path
grammar that simulates \cc{P} by reduction to constraint solving.
In particular, \sys generates an \euflia constraint $\varphi_{ \cc{P}
}$ such that each \emph{model} of $\varphi_{ \cc{P} }$ defines a set
of \emph{clause} skeletons that simulate \cc{P}.
For each clause skeleton $C$, the problem of choosing a relational
predicate to apply in the body of $C$, the set of variables to which
it is applied, the map from variables to locations, and instruction
source and destination variables can be encoded directly as an \euflia
constraint.

\paragraph{Synthesizing a skeleton that refutes control paths}
$\mathcal{C}$ refutes \cc{F} if for each $\cc{p} \in F$, the
derivations of the control extension of $\mathcal{C}$ simultaneously
derive sufficient sets of matching loads and stores in $\cc{p}$ that a
suitable extension of $\mathcal{C}$ (described in
\autoref{sec:syn-run-grammar}) refutes \cc{p}.
Such sufficient sets are formulated precisely as a minimal refuting
neighborhood of \cc{p}.

For $\cc{p} \in \cc{F}$ with $N = \nodesof{ \cc{p} }$ and $\nu : N \to
\pset(N)$ such that $\sympath[ \cc{P}, p, \nu ]$ is unsatisfiable,
$\nu$ is a \emph{refuting neighborhood} of $\cc{p}$.
For all $\nu, \nu' : N \to \pset$, if for all $n \in N$, $\nu'(n)
\subseteq \nu(n')$, then $\nu'$ is \emph{contained} by $\nu$;
if, in addition, $\nu$ is not contained by $\nu'$, then $\nu'$ is
\emph{strictly} contained by $\nu$.
If for each $\nu' : N \to \pset(N)$ that is strictly contained by
$\nu$, it holds that $\nu'$ is not refuting neighborhood of \cc{p},
then $\nu$ is a \emph{minimal} refuting neighborhood of \cc{p}.

\extractdeps (\autoref{alg:extract}), given \cc{P} and an infeasible
$\cc{p} \in \pathsof{\cc{P}}$, returns a minimal refuting neighborhood
of $\cc{p}$.
\extractdeps maintains a refuting neighborhood of \cc{p}, from which
it iteratively minimizes entries until it obtains a minimal refuting
neighborhood.
\extractdeps constructs $\nuAll$ as an initial refuting neighborhood
(\autoref{line:get-all-deps}).
For all $n, n' \in N$, it determines if the map $\nu'$ obtained by
removing $n'$ from the image of $n$ in its maintained refuting
neighborhood $\nu$ (\autoref{line:prune-try}) is a refuting
neighborhood.
To determine this fact, \extractdeps runs \eufliaissat on $\sympath[
\cc{P}, \cc{p}, \nu' ]$ (\autoref{line:prune-check}).
If $\nu'$ is a refuting neighborhood, then \extractdeps updates its
maintained refuting neighborhood to be $\nu'$
(\autoref{line:prune-commit}).

For each $\cc{p} \in \cc{F}$ with $N = \nodesof{ \cc{p}}$,
\synskeleton synthesizes a minimal refuting neighborhood of \cc{p},
named $\nu_{ \cc{p} } : N \to \pset(N)$ by running \extractdeps on
\cc{P} and \cc{p}.
\synskeleton then constructs a constraint $\varphi_{ \cc{p} }$ in
which each solution defines \textbf{(1)} a set of clause skeletons,
over the vocabulary interpreted as a clause skeleton in each solution
of $\varphi_{\cc{P}}$, and \textbf{(2)} a derivation $D$ of \cc{p} in
which for each $n \in N$ and each $n' \in \nu_{ \cc{p} }(n)$, $n$ and
$n'$ are derived in a common instance of a clause in $D$.

\begin{ex}
  \label{ex:refutes-feedback}
  Recall once again $\gbi'$, the simplistic grammar of
  \autoref{fig:build-inspect-cfg}. As a skeleton, $\gbi'$ fails to
  refute even a short infeasible path of \cc{buildInspect}. For
  example, if a path performs a store and a load (the transitions from
  line \cc{7} to \cc{8} and from line \cc{11} to \cc{10}), a refuting
  neighborhood must put the state at line \cc{10} into the
  neighborhood of the state at line \cc{8} or vice versa. One way to
  do this is to modify $\gbi'$ to include a special case rule for this
  path. A better way is to construct the grammar $\gbi$ from
  \autoref{fig:build-inspect-grammar}, which refutes much large set of
  infeasible paths. In fact, this skeleton refutes all paths in
  \cc{buildInspect}.
\end{ex}

We have described a particular implementation of \synskeleton that we
have implemented in the current version of \sys.
The current of implementation of \synskeleton is restricted, in that
it only synthesizes skeletons of run grammars that are linear and
unambiguous.
\synskeleton finds such skeletons by reduction to constraint solving.
The key motivation for adopting such limitations was that a
relatively simple but useful version of \synskeleton could be designed
by reusing existing, heavily-optimized algorithms implemented in
constraint solvers.
However, \sys only requires that an implementation of \synskeleton,
given program \cc{P} and infeasible paths \cc{F}, synthesize a
skeleton encodes all paths of \cc{P}, including refuting neighborhoods
for the paths in\cc{F}.
Alternative, explicit implementations of \synskeleton may yield
significant benefits over the current version based on constraint
solving.

\subsection{Synthesizing a run grammar from a skeleton}
\label{sec:syn-run-grammar}
In its second step, \synchc completes $\mathcal{C} \subseteq
\skeletons$, a set of clause skeletons that simulates \cc{P} and a
refutes \cc{F} synthesized by \synskeleton, to generate a run grammar
$\mathcal{G} \in \runchcs$ that simulates \cc{P} and refutes \cc{F}.
Let 
$R \in \skelpreds$, %
$L : Q \to \locs$, %
$q, q \in Q$ be such that $(R, L, q, q') \in \skeletons$.
From $(R, A, L, q, q')$, \synchc generates $\varphi \in
\formulas{ \runbg }$, and includes in $\mathcal{G}$ the clause
$\mathcal{C} = (A, \varphi, R)$.

$\varphi$ is a conjunction of constraints that model the effect of
\cc{i}, and effects of allocations and stores performed in steps of
execution other than \cc{i} (\autoref{sec:footprint}).
The first conjunct $\varphi_0$ constrains that each instance of $q'$
is a control success of each instance of $q$;
i.e., $\varphi_0 \equiv q' = \succsym(q)$.
The second conjunct $\varphi_1$ constrains that for path points $n$
and $n'$ bound to $q$ and $q'$ at an instance of $\mathcal{C}'$, the
state at $n'$ is the result of transitioning from the state at $n$,
under the instruction that connections $n$ and $n'$.
I.e., $\varphi_1 \equiv \symrel{ \instrat{ \cc{P} }{ L(n) }{ L(n') }
}(q, q', Q)$.

The third conjunct $\varphi_2$ models the effect of all instructions
that connect points derived by other clause instances on the objects
in scope when the step of $\mathcal{C}$ executes.
The construction of the constraint is described in detail below.

\subsubsection{Formulating the effect of instructions derived outside
  of a clause instance}
\label{sec:footprint}
The space of valid runs with update histories is partially constrained
by the symbolic relations for each instruction, defined in
\autoref{sec:history-ctrs}.
For each $\cc{i} \in \instrs$, $\symrel{\cc{i}}$ models the effect of
executing \cc{i} on a finite set of states.
When $\symrel{\cc{i}}$ is used to determine feasibility of an entire
path, in which case the set consists of all states in the path.
When $\symrel{\cc{i}}$ is used as a conjunct of a clause constraint,
the set of states consists of all states that are derived in the same
instance of a clause $\mathcal{C}$ that derives the states connected
by \cc{i}.
However, it does not include the set of all states in the derived
path, namely states that occur exclusively in subderivations of
\cc{C}, or that occur in a subderivation outside of the derivation of
$\mathcal{C}$ and are not provided as arguments to $\mathcal{C}$.

$\varphi_3$ constrains the effect of instructions executed in steps
from such states on objects that may only be bound to object variables
in states derived in an application of $\mathcal{C}$.
If $\cc{i}$ is not an object load or allocation, then each object in
scope bound to a variable in an application of $\mathcal{C}$ is in
scope for its child or and parent;
as a result, $\varphi \equiv \true$.

Otherwise, let $\cc{p} \in \objvars$ be the object variable bound by
\cc{i}.
Then $\varphi_3$ is a conjunction of two constraints.
Let $\mathcal{R}_A \in \skelpreds$ and $Q_A \in Q^{*}$ be such that $A
= \mathcal{R}_A(Q_A)$.
The first constraint, $\varphi_3^0 \in \formulas{\runbg}$, constrains
that if the object $o$ bound to $\cc{p}$ at $q$ is not in the scope of
any states bound to $Q_A$, then the update histories of $o$ is
identical across both control subpaths derived by the derivation with
head $A$.
I.e., $\varphi_3^0$ is
\begin{align*}
  \bigland_{ %
    \substack{ %
      q'' \in Q_A \\
      \cc{p}' \in \objvars } } & \cc{p}(q) \not= \cc{p}'(q'') \implies
  \\ %
  \bigland_{ %
    \cc{f} \in \objfields \union \datafields
  } & %
  \cc{f}(Q_A[ \prefrag_0(\mathcal{R}_A) ], \cc{p}(q)) = %
  \cc{f}( Q_A[ \postfrag_0(\mathcal{R}_A)], \cc{p}(q)) \land \\
  & \cc{f}(Q_A[ \prefrag_1(\mathcal{R}_A) ], \cc{p}(q)) = %
  \cc{f}( Q_A[ \postfrag_1(\mathcal{R}_A) ], \cc{p}(q)) 
\end{align*}
The second constraint, $\varphi_3^1 \in \formulas{\runbg}$, constrains
that if $o$ is not in an argument of $\mathcal{R}$ when $\mathcal{C}$
is applied, then the update histories of $o$ is identical between the
end of the first control subpath that it derives and the beginning of
the second control subpath that it derives.
I.e., $\varphi_3^1$ is
\begin{align*}
  \bigland_{ %
    \substack{
      q'' \in Q' \\
      \cc{p}' \in \objvars } } & \cc{p}(q) \not= \cc{p}'(q'') \implies
  \\ %
  \bigland_{ \cc{f} \in \objfields \union \datafields } & %
  \cc{f}(Q'[ \postfrag_0(\mathcal{R} ) ], \cc{p}(q)) = %
  \cc{f}(Q'[ \prefrag_1(\mathcal{R}) ], \cc{p}(q))
\end{align*}

\subsubsection{Key properties}
\label{sec:syn-grammar-props}
The partial correctness of \sys depends only on the fact that \synchc,
given a program and set of infeasible paths, synthesizes a run grammar
that simulates the program.
\begin{lemma}
  \label{lemma:synchc-corr}
   \cc{P} is simulated by \synchc(\cc{P}, \feedback).
\end{lemma}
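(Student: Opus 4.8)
The plan is to unfold the definition of simulation from \autoref{sec:pf-structures}. Fix an arbitrary run $r = (p, \sigma)$ of \cc{P}, with $p = (N, \succsym, \locsym) \in \pathsof{\cc{P}}$, and write $\mathcal{G}$ for the run grammar \synchc(\cc{P}, \feedback). It suffices to exhibit a derivation $D$ of $\mathcal{G}$ and a node labeling $i : \nodesof{D} \to \modelsof{X}$ such that $(D, \modelof{r}, i)$ is a model of $\mathcal{G}$ in the sense of \autoref{defn:chc-model}: for then $r$ is a run of $\mathcal{G}$, and, $r$ being arbitrary, \cc{P} is simulated by $\mathcal{G}$.

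First I would obtain $D$ and $i$ from the skeleton. By construction (\autoref{sec:syn-skeleton}), the clause skeleton returned by \synskeleton has a control-path grammar $\mathcal{S}'$ that simulates \cc{P}; hence $r$ is a run of $\mathcal{S}'$, so there are $D$ and $i$ with $(D, \modelof{r}, i)$ a model of $\mathcal{S}'$. The run grammar $\mathcal{G}$ is the completion of the \emph{same} clause skeletons (\autoref{sec:syn-run-grammar}): it has the same relational predicates, clause bodies, and head predicates, and each clause constraint of $\mathcal{G}$ is the corresponding $\mathcal{S}'$-clause constraint (the control-path constraint $\locsym(q) = L(q) \land q' = \succsym(q)$) \emph{conjoined} with a transition conjunct $\varphi_1 \equiv \symrel{\instrat{\cc{P}}{L(q)}{L(q')}}(q, q', Q)$ and a footprint conjunct $\varphi_2$. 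Because $\mathcal{G}$ arises from $\mathcal{S}'$ by only strengthening constraints, derivations of $\mathcal{G}$ are in canonical bijection with those of $\mathcal{S}'$; identify $D$ with the corresponding derivation of $\mathcal{G}$. Then condition (2) of \autoref{defn:chc-model} is unchanged, since it refers only to the (shared) body applications and to $i$, and the control-path part of condition (1) already holds since $(D, \modelof{r}, i)$ models $\mathcal{S}'$. So it remains to show that for every clause instance $d$ of $D$, $\modelof{r}$ under the binding $i(d)$ satisfies $\varphi_1$ and $\varphi_2$.

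The conjunct $\varphi_1$ follows from two observations. First, soundness of the symbolic encoding of update histories (\autoref{sec:history-ctrs}): since $r$ is a run of $p$, $\modelof{r}$ satisfies $\sympath(\cc{P}, p, \nuAll)$, i.e.\ the conjunction over edges $(n, n')$ of $p$ of $\symrel{\instrat{\cc{P}}{\locsym(n)}{\locsym(n')}}(n, n', N)$ — this is how the symbolic relations are designed and is the fact underlying the first implication of \autoref{lemma:is-feasible-corr}. Second, monotonicity of $\symrel{\cc{i}}(n, n', \cdot)$ in its third argument: inspecting each case of the definition in \autoref{sec:history-ctrs}, $\symrel{\cc{i}}(n, n', M)$ is a sub-conjunction of $\symrel{\cc{i}}(n, n', M')$ whenever $M \subseteq M'$, because shrinking the node set only deletes conjuncts and narrows the guarded equalities. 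Since $i(d)$ binds $q, q'$ to adjacent nodes of $p$ joined by the instruction $\cc{i} = \instrat{\cc{P}}{L(q)}{L(q')}$, and binds the clause variables $Q$ to a subset $M$ of $N$, satisfaction of the $N$-indexed instance gives $\modelof{r}, i(d) \sats \varphi_1$.

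The main obstacle is the footprint conjunct $\varphi_2$ — the constraints $\varphi_3^0$ and $\varphi_3^1$ of \autoref{sec:footprint}. When the modeled step \cc{i} is an object load or allocation binding a variable to an object $o$, these assert that if $o$ is absent from the parent interface nodes $Q_A$ (resp.\ from the clause's own interface $Q'$) then the update histories of $o$ are unchanged across the path segments that $\varphi_3^0$ (resp.\ $\varphi_3^1$) relates. To discharge this in $\modelof{r}$ I would first make precise how a derivation of the control-path grammar partitions the steps of $p$ among clause instances, nested consistently with the tree structure of $D$. I would then establish a locality (escape) property of the operational semantics: an object first made reachable from a program variable at the step \cc{i} can have an \cc{f}-field overwritten only in a step executed from a state in which some variable is bound to it, and — by the variable-scoping discipline that the skeleton clauses impose on derivations — every such step lies within the subderivation rooted at the instance of \cc{i}, unless $o$ is exposed at one of those interface nodes. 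Because $\cc{f}_r(\cdot, o)$ changes only at stores to $o.\cc{f}$, it is then constant over the path segments referenced by $\varphi_3^0$ and $\varphi_3^1$ — which are assembled entirely from steps outside the relevant subderivation — yielding the required equalities. This escape argument, together with the bookkeeping that ties the grammar's derivation structure to variable scopes within the run, is the step I expect to require the most care. Once condition (1) is established for every clause instance, $(D, \modelof{r}, i)$ is a model of $\mathcal{G}$, so $r$ is a run of $\mathcal{G}$; as $r$ was arbitrary, \cc{P} is simulated by \synchc(\cc{P}, \feedback).
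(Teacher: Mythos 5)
Your proposal is correct and follows essentially the same route as the paper: the paper's proof appeals to skeleton completeness, to an auxiliary lemma relating models of $\mathcal{G}$-derivations to models of $\sympath(\cc{P},p,\nu)$, and to the monotonicity of $\symrel{}$ in its neighborhood argument, and it discharges the footprint conjuncts $\varphi_3^0,\varphi_3^1$ with exactly the escape/locality observation you sketch (an object not passed through a relation's interface cannot have its fields modified along the corresponding path segments). The only difference is organizational — the paper factors these steps into separate appendix lemmas while you inline them — and your flagging of the footprint step as the delicate one matches where the paper's own argument is also only sketched.
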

\synchc also synthesizes a run grammar that refutes \cc{F}.
However, this fact is primarily useful for proving that \sysaux
progresses, in that it never collects the same control path from
distinct invocations of \solvechc.

\subsection{Key features}
\label{sec:features}
\sys is correct on all programs on which it terminates.
\begin{thm}
  \label{thm:soundness}
  If $\sys(\cc{P}) = \true$, then \cc{P} is safe, and if $\sys(\cc{P})
  = \false$, then \cc{P} is not safe.
\end{thm}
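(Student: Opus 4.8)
The plan is to establish the two implications separately, in each case tracing how the claimed return value of $\sys(\cc{P})$ is produced by the code of \autoref{alg:sys} and invoking the already-established correctness lemmas. Since $\sys(\cc{P})$ simply returns $\sysaux(\emptyset)$, it suffices to prove the stronger statement: for every $\feedback \subseteq \pathsof{\cc{P}}$ consisting of infeasible paths, if $\sysaux(\feedback) = \true$ then \cc{P} is safe, and if $\sysaux(\feedback) = \false$ then \cc{P} is unsafe. We proceed by induction on the recursion of \sysaux, noting that the base claim (that every path ever added to $\feedback$ is infeasible) is maintained because a path $p$ is only added on \autoref{line:ret-recurse}, which is reached only when $\isfeasible(p)$ returned $\false$; by \autoref{lemma:is-feasible-corr} this means $p$ is genuinely infeasible.

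For the first implication, suppose $\sysaux(\feedback) = \true$. Inspecting the code, the only \texttt{return} statement yielding $\true$ is \autoref{line:ret-safe}, reached exactly when $\solvechc(\mathcal{G}) = \isempty$, where $\mathcal{G} = \synchc(\cc{P}, \feedback)$. By the specification of \solvechc (\autoref{sec:chcs}), $\isempty$ certifies that $\mathcal{G}$ is infeasible, i.e.\ no derivation of $\mathcal{G}$ is feasible. By \autoref{lemma:synchc-corr}, \cc{P} is simulated by $\mathcal{G}$; hence for each run $r$ of \cc{P}, the model $\modelof{r}$ is a model of $\mathcal{G}$, so $\mathcal{G}$ has a model, contradicting infeasibility unless \cc{P} has no runs. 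Therefore $\runsof{\cc{P}} = \emptyset$, so \cc{P} is safe. (Equivalently: $\mathcal{G}$ infeasible means every path of \cc{P} is refuted by $\mathcal{G}$, so \cc{P} is refuted by $\mathcal{G}$, and since \cc{P} is also simulated by $\mathcal{G}$, \autoref{lemma:pf-corr} gives that \cc{P} is safe.)

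For the second implication, suppose $\sysaux(\feedback) = \false$. The only \texttt{return} yielding $\false$ is \autoref{line:ret-unsafe}, reached when \solvechc returned a feasible derivation $D$ of $\mathcal{G}$, we set $p \assign \pathof(D)$, and $\isfeasible(p) = \true$. By \autoref{lemma:is-feasible-corr}, $\isfeasible(p) = \true$ implies $p$ is a feasible path of \cc{P}, so $\runsof{p} \neq \emptyset$, hence $\runsof{\cc{P}} \neq \emptyset$ and \cc{P} is unsafe. One subtlety to address: $\sysaux$ may also return via the recursive call on \autoref{line:ret-recurse}, so a $\false$ result could be inherited from a deeper invocation; but the inductive hypothesis handles that case directly, and the base case of returning $\false$ is the one just analyzed.

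The main obstacle I anticipate is not in the control-flow reasoning above, which is essentially bookkeeping, but in confirming that $\pathof(D)$ is indeed a well-defined control path \emph{of \cc{P}} to which \autoref{lemma:is-feasible-corr} applies. This rests on $\mathcal{G}$ being a run grammar of \cc{P} (so that, by the definition of $\runchcs{\cc{P}}$, the graph extracted from any model of $\mathcal{G}$ is a control path of \cc{P}) together with the property that $\pathof$ correctly recovers this path from a feasible derivation by reading off the interpretations of $\locsym$ and $\succsym$. Both are guaranteed once we know $\synchc(\cc{P},\feedback) \in \runchcs{\cc{P}}$, which is exactly what \autoref{lemma:synchc-corr} supplies (simulation entails, in particular, that the synthesized system is a run grammar of \cc{P}). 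So the proof reduces to carefully citing \autoref{lemma:synchc-corr}, \autoref{lemma:pf-corr}, \autoref{lemma:is-feasible-corr}, and the \solvechc specification, with the induction merely propagating results through the recursive call; no new technical machinery is required. Note the theorem makes no termination claim, so we need not argue that \sys halts.
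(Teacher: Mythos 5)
Your proposal is correct and follows essentially the same route as the paper's proof: both implications are traced back to the unique return sites in \autoref{alg:sys} and discharged by citing \autoref{lemma:synchc-corr}, \autoref{lemma:pf-corr}, \autoref{lemma:is-feasible-corr}, and the specification of \solvechc. The only difference is that you make explicit (via induction on the recursion of \sysaux) what the paper compresses into the phrase ``\sys returns $\true$ only if\dots'', which is a welcome clarification rather than a divergence.
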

Because the problem of verifying safety of programs in $\lang$ is
undecidable, there are some programs on which $\sys$ will not
terminate.

One limitation of \sys as presented above is that it can only
effectively determine the safety of programs that contain all stores
of objects that they load.
Defining a logic of program summaries that \sys can both use and
validate is a conceptually challenging and critical direction for
future work.
\sys is motivated by a number of practical applications which it can
be applied to. However, having a program summary logic would enable a
direct and formal comparison on the theoretical level to other
approaches, particularly separation logic~\cite{reynolds02} and
effectively propositional reasoning~\cite{itzhaky13}.

\subsection{Discussion}

The prototype of \synskeleton, developed in
\autoref{sec:syn-skeleton}, has some shortcomings which make it
worthwhile to pursue a replacement. It is deficient in the
expressivity of the grammars it can return, and it is intractable on
modest programs. For an infeasible path $p$ in the feedback set
$\feedback$, the existing constraint-based approach cannot enforce
that every derivation of $p$ include a refuting neighborhood $\nu$ for
$p$, merely that one such derivation does; hence, the expressivity of
\synskeleton must be artificially limited to unambiguous grammars. If
it were possible to additionally supply negative examples to
\synskeleton (graphs of $p$ not including any refuting neighborhood),
then ambiguity would be permissible.

\newcommand{\lstar}{$\mathsf{L}^*\,$}
Learning a linear string grammar (a regular expression) through
queries to an oracle is a solved problem due to Angluin's \lstar
algorithm~\cite{angluin87}. To be precise, one supposes a teaching
oracle which can answer membership queries for a secret language $L$
and which can confirm or deny with a counterexample that a proposed
DFA $D$ has the property $L(D) = L$. \lstar is an efficient algorithm
for using this oracle to learn such a $D$ through both positive and
negative examples. \lstar is guaranteed to terminate when $L$ is
regular. The parallel between this problem statement and the task of
\synskeleton in \sys is striking. In \sys, a grammar $\mathcal{C}$ for
an unknown language of graphs is sought. $\isfeasible(\cc{P},p,\nu)$
answers the query of whether the control path graph $p =
(N,\lambda,E)$ unioned with the data edges $\dataedges$ implied by
$\nu$ is a member of this language. When a run grammar $\mathcal{G}$
is proposed, $\solvechc(\synchc(\cc{P},F))$ attempts to confirm or
deny with a counterexample that $\mathcal{G}$ correctly describes the
language.

The task of \synskeleton is more difficult than the regular language
learning task of \lstar because the unknown language is a set of
graphs, not strings. However, there is work towards \lstar-like
algorithms for richer classes of languages such as learners of
context-free grammars~\cite{angluin87,clark10,yoshinaka11} and
\emph{multiple context-free (MCF) grammars}~\cite{yoshinaka12}. A
single non-terminal symbol in an MCF grammar is not a hole in a single
string, but a vector of $r$ holes, where the \emph{rank} $r$ is
bounded. The rule for a non-terminal of rank $r$ can invoke other
non-terminals of rank $r$ by producing characters at the holes, of
rank less than $r$ by filling a hole, and of rank greater than $r$ by
splitting a hole into two adjacent holes. Again, a striking parallel
with \sys arises: this behavior of MCF non-terminals is exactly the
behavior of skeleton grammars returned by \synskeleton, where control
edges play the role of characters. It remains an open question to
determine if an MCF learner can be adapted to include the data edges
of a refuting neighborhood $\nu$ for each positive example $p$.


\section{Evaluation}
\label{sec:evaluation}
We empirically evaluated \sys in order to answer the following
questions:
\textbf{(1)} Can \sys verify the safety of low-level programs that
operate on unbounded data structures where existing approaches fail?
\textbf{(2)} Can \sys verify the safety of such programs efficiently?

We implemented \sys as a verifier for programs represented in Java
Virtual Machine (JVM) bytecode.
The set of benchmarks consists of some programs adapted from challenge
benchmarks in the SV-COMP benchmark collection~\cite{svcomp17},
a program that cannot be verified by a competing
approach~\cite{itzhaky13a},
novel benchmarks designed to exhibit particular capabilities of \sys,
and a program which demonstrates that \sys is limited to properties
that can be expressed in a context free way.

In short, the results of our experiments indicate that \sys can be
applied to verify the safety of programs that cannot be verified by
existing shape analyzers or verifiers.
The time required for the \sys implementation to perform verification
varies dramatically, and we diagnose what causes intractability in
some cases.

\autoref{sec:exp-procedure} describes our implementation of \sys for
JVM and our experimental procedure;
\autoref{sec:benchmarks} summarizes each benchmark;
\autoref{sec:ex-benchmarks} provide a qualitative evaluation of \sys
by describing the proofs that it synthesizes in order to prove the
safety of several illustrative benchmarks.
\autoref{sec:results} provides a quantitative evaluation of \sys
by analyzing its performance.

\subsection{Implementation and experimental procedure}
\label{sec:exp-procedure}
We implemented \sys as a verifier for JVM bytecode.
\sys can thus be applied to verify programs written in Java, Scala, or
other languages with compilers that target the JVM.
The hypothetical language, \lang, targeted by \sys was defined in
\autoref{sec:lang}, including classes of instructions that perform
operations over objects, such as loads, stores, and allocations.
Each such instruction directly corresponds to an instruction in the
JVM instruction set.
In addition, the implementation of \sys can be applied to programs
that use common operations over scalar data, such as linear arithmetic
and Boolean functions.
\sys does not currently support programs that consist of multiple
procedures.
%

\sys for JVM supports program specifications through library
operations that implement the semantics of \cc{assume} and \cc{assert}
instructions. There are further operations for retrieving
non-deterministic data.
The semantics of \lang do not include the concept of a
\cc{NullPointerException} or any other exception. Accessing the fields
of a \cc{null} object is legal, undefined behavior in \lang. \sys does
not model exception-throwing runs of JVM programs. Even so, one can
verify the absence of a \cc{NullPointerException} in the original
program by injecting a \cc{null} check before every field access.

\sys for JVM implements several optimizations, in comparison to the
conceptual version of \sys for \lang given in
\autoref{sec:verifier-algo}.
In particular, the performance of both the procedure \synchc for
synthesizing a run grammar (\autoref{sec:verifier-algo}) and \solvechc
is heavily affected by the number of control locations of a given
program.
\sys for JVM coalesces sequences of value instructions, allocations,
and control branches that are not loop back-edges a single block
to reduce the number of effective control locations.
The benefit of this optimization is that \synskeleton can represent
the space of alternative run-grammar skeletons more compactly, and
\synchc can generate CHC systems with fewer relational predicates and
clauses.
The disadvantage of this optimization is that the constraints in the
generated CHC system are more complex.
In practice, the optimization results in a significant improvement in
performance.

Furthermore, instead of modeling the value of each program variable at
each control point using an uninterpreted function
(\autoref{sec:pf-structures}), \sys translates the program into static
single assignment form, and introduces first-order variables in
generated CHC systems that correspond to program variables.
This increases the total number of first-order variables included in a
system, but allows the \sys implementation to perform fixed-point
analyses such as alias and object liveness analysis that are
relatively cheap compared to the performance cost of a CHC solver.
\sys runs such analyses to simplify the constraints in the CHC clauses
before solving.

Finally, on practical benchmarks, solving a given CHC system to obtain
a counterexample is often far more expensive than an explicit search
through all possible program paths. Accordingly, before attempting to
solve a given system, our implementation of \sys unrolls the system to
enumerate all paths up to some heuristic depth and examines each one
with \autoref{alg:extract} to determine if it is a counterexample.

\sys uses the Sawja program analysis framework~\cite{hubert10} to
process JVM bytecode input and the Z3 theorem prover~\cite{z3} to
solve $\datauif$ satisfiability queries (\autoref{sec:logic}).
\sys uses as a CHC solver (\autoref{sec:verifier-algo}) the
implementation of \duality~\cite{bjorner13} that accompanies Z3,
modified to apply Z3's aggressive formula simplifier after each
iteration of its solving algorithm;
The original version of \duality performs only cheap simplifications,
which often caused a problemtic explosion in the size of the
invariants.
\sys uses the XSB Datalog engine~\cite{xsb} to execute the fixed-point
analyses for optimization prior to solving.
Aside from these components, the \sys codebase consists of
approximately 12,300 lines of OCaml.

We also implemented a \emph{baseline} verifier for JVM bytecode which
encodes programs using the theory of arrays to model the heap.
The baseline generates a CHC system with one relation for each control
location and a clause for each pair of adjacent control locations to
model a step of execution.
The heap of a program is modeled as a collection of logical arrays,
one for each field, that each map each object to the value stored at
that field.
Load and store instructions are modeled with $\mathit{select}$ and
$\mathit{store}$ operations of the theory of arrays, and objects are
represented as integer identifiers.
In this way, the baseline reduces the problem of deciding program
safety to the problem of solving a CHC system in the theory of arrays
with linear arithmetic, \auflia.
The baseline verifier coalesces instructions similarly to \sys as
described above, and attempts to solve the resulting system by running
\duality.

We applied both the baseline and \sys to a set of benchmarks to
determine if \sys could verify the safety of programs that could not
be verified using other theories that accurately model the semantics
of memory operations.
For each benchmark program \cc{P}, we gave \cc{P} to \sys and the
baseline verifier and observed the result of, and resources used by,
each verifier to attempt to prove that \cc{P} satisfies its single
(without loss of generality) \cc{assert} statement.
For most benchmarks, we discovered divergent behavior in the
successive invariants generated by the baseline verifier so that the
solver would inevitably exceed any memory or time limits imposed.
This behavior is reported as failure of the baseline verifier.

We ran all experiments on a machine with 16 2.8 GHz processors and 16
GB of RAM. The current implementations of \sys and the baseline
verifier execute using a single thread.

\subsection{Description of benchmarks}
\label{sec:benchmarks}
The behavior of the thirteen benchmarks programs tested is as follows.

The benchmarks programs \cc{buildInspect}, \cc{peel}, \cc{unary}, and
\cc{binary} perform similar tasks, but with increasing complexity for
\sys. All four programs construct a singly-linked list (using a
\cc{next} field) from front to back, with some property, and then
proceed to scan the list from front to back, ensuring that the
property holds.
In \cc{buildInspect} and \cc{peel}, the property to check is that the
final non-null element scanned is indeed the final element that was
added to the list. In order to model the traversal of this list, \sys
must synchronize the matching iterations of the two loops, as depicted
in \autoref{fig:build-inspect-grammar}. The grammar relations here
each have two negative control pairs, representing control flow of the
two loops. While \cc{buildInspect} has the list initialization peeled
out of the fist loop, in \cc{peel} \sys must discover that this peel
is needed to match the iterations correctly.
A \cc{data} field is added to \cc{unary} and \cc{binary}. The first
checks a unary property over the entire list (that all \cc{data}
fields are zero), while the second checks a binary property (that the
integer \cc{data} is increasing monotonically). A program like
\cc{unary} can be found in the SV-COMP benchmark collection.

The verification of \cc{allocator} constitutes a demonstration that
the encoding of unique object allocation is correct. After an
unbounded list is constructed on the heap, a final object $o$ is
allocated and subsequently compared to every item in the list to check
uniqueness.

\cc{lag2} constructs a list and scans it within the same
loop. However, a given item is scanned exactly two iterations after it
is created.

An unbounded list ending with a fixed-length cycle is built by
\cc{finiteCycle}; it then scans this structure and fails the assertion
if the scan ever terminates.

Alternatively, \cc{breakCycle} constructs a cycle of unbounded size
and proceeds to consume it by destroying the \cc{next} pointers. Thus
it asserts that the traversal terminates when it revisits the node
from which it started. SV-COMP contains a cycle-constructing program
as well. However, the SV-COMP version of the program is fundamentally
easier than \cc{breakCycle} because the data is never overwritten; the
starting point is instead recognized by its integer data.

We adapted \cc{sameLength} directly from a program given by Itzhaky et
al.~\cite{itzhaky13a} which cannot be handled by effectively
propositional reasoning (EPR). The program simultaneously constructs
two disjoint lists, and then simultaneously scans them to check that
they have the same length. This property relies on an invariant
correlating the two disjoint structures, which is not expressible by
heap reachability (see \autoref{sec:ex-benchmarks}).

The \cc{tree} benchmark constructs a non-list data structure. Each
element has two fields, a \cc{left} pointer and a \cc{right}
pointer. A loop non-deterministically allocates a left or right child
for the last item allocated and maintains a counter for the size of
the structure. The program then follows the pointers to the end, and
asserts that the same number of nodes were traversed.

Two more SV-COMP benchmarks which we adapted are \cc{simpleSearch} and
\cc{uniqueItem}. Our \cc{simpleSearch} constructs an unbounded list of
consecutive integers of length at least 5, beginning with zero. It then
expects to find the integers 1 and 3. The original benchmark built a
list of at least ten elements. Our \cc{uniqueItem} constructs an
unbounded list with all \cc{data} fields set to $\false$, except for
exactly one element, which has its \cc{data} set to $\true$. The
program then checks that there is one and only one such element. The
original benchmark performed a more complicated construction, wherein
the unique element is inserted after the list is constructed.

The program named \cc{order} constructs the first part of a list, of
unbounded length, adds two distinguished list elements \cc{a} and
\cc{b}, then constructs the remainder of a list, of unbounded
length. The scanning loop verifies that it does not find \cc{b} before
finding \cc{a}.

Finally, \cc{ctxSensitive} behaves like \cc{lag2} in that the scanning
of elements lags behind their construction in the same loop. However,
two elements are constructed and only one is scanned in each
iteration. Incremental integer data is stored with each element, and
the program asserts that the data value in final element scanned is
half of the data value in the final element constructed.

\subsection{Illustrative benchmark}
\label{sec:ex-benchmarks}
In this section, we discuss in detail an additional benchmark and the operation
of \sys in verifying its safety.
\begin{figure}
  \centering
  \begin{minipage}[t]{.45\textwidth}
    \centering
    \input{code/sameLength.java}
  \end{minipage}
  \qquad
  \begin{minipage}[t]{.45\textwidth}
    \centering
    \input{code/contextSensitive.java}
  \end{minipage}
  \begin{minipage}[t]{.45\textwidth}
    \centering
    \caption{\cc{sameLength}: Constructs two queues in a loop where each
      iteration adds one element to each queue, then traverses both in a
      second loop to show both queues have the same length.}
    \label{fig:eval-sameLength}
  \end{minipage}
  \qquad
  \begin{minipage}[t]{.45\textwidth}
    \centering
    \caption{\cc{ctxSensitive}: A program which \sys cannot prove
      safe.}
    \label{fig:context-sens}
  \end{minipage}
\end{figure}
\paragraph{A program with queues of equal length: \cc{sameLength}}
\autoref{fig:eval-sameLength} contains a program \cc{sameLength} that
builds two queues simultaneously in a loop, then traverses the queues
to check that they contain the same number of elements.
This benchmark was given in a previous study of Effectively
Propositional Reasoning (EPR) in order to illustrate a class of
programs that techniques based on EPR cannot prove safe because their
necessary invariants cannot be represented in Effectively
Propositional Logic (EPL)~\cite{itzhaky13}.
%

\cc{sameLength} performs the following steps over an execution.
\cc{sameLength} first initializes two queues (lines \cc{3}---\cc{7}).
In each iteration of the loop (lines \cc{8}---\cc{16}, referred to as
the \emph{building} loop), \cc{sameLength} adds an element to the tail
of each queue.
\cc{sameLength} non-deterministically chooses to exit the building
loop
It then traverses both queues, loading one element in each iteration of a
loop at lines \cc{17}---\cc{19}, referred to as the \emph{traversal}
loop.
\cc{sameLength} exits the traversal loop when it reaches the end of
one of the queues.
Finally, \cc{sameLength} asserts that it has reached the end of both
queues (line \cc{20}).

In order to prove that \cc{sameLength} satisfies its assertion, a
verifier must establish an invariant for the loop on lines
\cc{17}---\cc{19} that expresses the fact that the lengths of both of
the maintained queues are equal.
Such an invariant cannot be represented in EPL~\cite{itzhaky13}.

\begin{figure*}
  \centering
  \includegraphics[width=.95\linewidth]{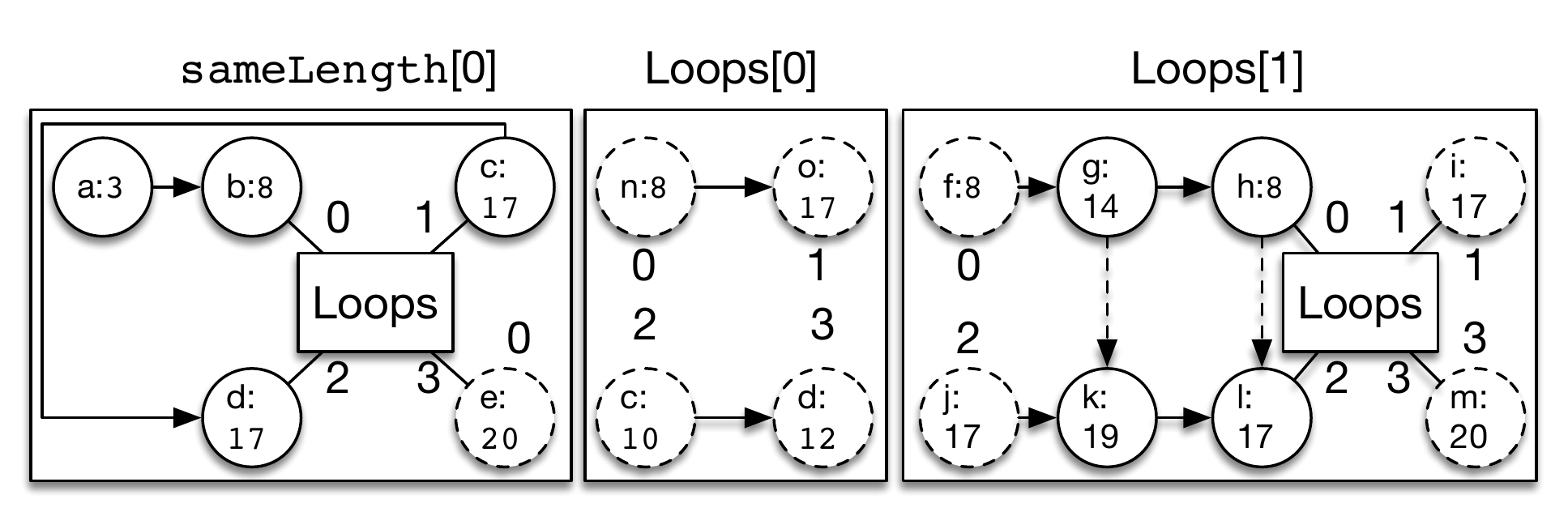}
  \caption{A graph grammar that generates \cc{sameLength}'s control paths
  and data dependencies.}
  \label{fig:sameLength-grammar}
\end{figure*}
\autoref{fig:sameLength-grammar} depicts a run grammar $G$ that
simulates and refutes \cc{sameLength}, and is synthesized
automatically by \sys.
$G$ contains two relational predicates, \cc{sameLength} and
$\mathsf{Loops}$.
The query relational predicate, \cc{sameLength}, is the head of one
clause, $\mathsf{sameLength[0]}$.
The clause derives paths that step from the entry of \cc{sameLength}
to the entry of its building loop (\cc{a}---\cc{b}), %
the step from the exit of the building loop to the entry of the
traversal loop (\cc{c}---\cc{d}), and from the exit of the traversal
loop to the exit of the program (\cc{e}).

The relational predicate $\mathsf{Loops}$ is the head of two clauses.
Clause $\mathsf{Loops}[0]$ derives termination of both the building
and traversal loops.
Clause $\mathsf{Loops}[1]$ derives simultaneous, data-dependent
iterations of the building loop and traversal loop.
The clause derives a control path \cc{f}---\cc{g} in an iteration of
the building loop up to its store that adds an element to the first
queue, %
a control path \cc{g}--\cc{h} from the store that adds to the first
queue to a store that adds to the second queue, %
a control path \cc{j}---\cc{k} that steps in a corresponding iteration
of the traversal loop up to the load from the head of the first queue,
and %
a control path \cc{j}---\cc{l} that steps from the load from the head
of the first queue to a load from the head of the second queue.
\autoref{fig:sameLength-grammar} also depicts data dependencies from
stores at instances of node \cc{g} to corresponding loads at instances
of \cc{k} and from stores at instances of \cc{h}---\cc{l}.
As in previous examples, we omit rules which derive non-corresponding
steps of the loops, to simplify the presentation.

$\mathsf{Loop}$ is empty;
one solution contains an interpretation of $\mathsf{Loop}$ as an
invariant that establishes that the building loop terminates if and
only if both queues are \cc{null} in a corresponding iteration of the
traversal loop:
\begin{align}
  \cc{F}(\cc{head}, \cc{tail}) \equiv
  \cc{tail}_0 = \cc{head}_2 \land
  \cc{next}_1(\cc{tail}_0) = \cc{next}_2(\cc{head}_2) \implies
  \cc{head}_3 = \cc{tail}_1
  \nonumber \\
  (\cc{done}_1 \iff \cc{head1}_3 = \cc{null} \land \cc{head2}_3 =
  \cc{null})
  \land \cc{F}(\cc{head1}, \cc{tail1}) \land
  \cc{F}(\cc{head2}, \cc{tail2})
\end{align}

\cc{sameLength} illustrates the ability of \sys to prove safety of
programs for which safety depends on properties that relate disjoint
unbounded, low-level data structures.
\sys does so by synthesizing invariants over relational predicates
that derive steps of execution that both load and store from related
data structures.
This ability enables \sys to prove the safety of programs that cannot
be proved safe by existing techniques based on
EPL~\cite{drews16,itzhaky13,itzhaky-bj14}.
These approaches can prove properties expressed in terms of heap
reachability within a linked list, but cannot verify properties that
require application-specific inductive
definitions~\cite{itzhaky13}.

\def\ifempty#1{\def\temp{#1} \ifx\temp\empty}
\newcommand{\unitlap}[2]{
  \hphantom{88#1}\mathllap{\ifempty{#2} #2 \else #2\text{#1} \fi}
}
\newcommand{\aligntime}[3]{
  $\unitlap{h}{#1} \, \unitlap{m}{#2} \, \unitlap{s}{#3}$
}

\subsection{Results and analysis}
\label{sec:results}
\settowidth{\rotheadsize}{\synskeleton[\cc{P}]}
\begin{table}[t]
  \begin{tabular}{| l | c | c || c | c | c | c | c | c | c |}
    \hline
    \thead{Benchmark}
    & \thead{Source}
    & \thead{LoC}
    & \thead{Iter}
    & \thead{Rel}
    & \thead{Cls}
    & \thead{\synskeleton\\ Time}
    & \thead{\solvechc \\ Time}
    & \thead{\sys}
    & \thead{Base}
    \\
    \hline
    \hline
    \cc{allocator}    & \multirow{9}{*}{Novel} & 22 & 3 & 7 & 13 &
    \aligntime{}{}{17} & \aligntime{}{4}{30} & \cmark & -- \\
    \cc{binary}       &  & 29 & 7 & 13 & 19 &
    \aligntime{1}{13}{32} & \aligntime{}{5}{25} & \cmark & -- \\
    \cc{buildInspect} &  & 22 & 5 &  7 & 13 &
    \aligntime{}{}{19} & \aligntime{}{1}{18} & \cmark & -- \\
    \cc{ctxSensitive}&    & 21 & -- & -- & -- &
    -- & -- & -- & -- \\
    \cc{finiteCycle}  &  & 19 & 8 & 8 & 15 &
    \aligntime{}{}{59} & \aligntime{}{1}{18} & \cmark & -- \\
    \cc{lag2}         &  & 24 & 4 & 9 & 13 &
    \aligntime{}{}{20} & \aligntime{}{}{6} & \cmark & \cmark \\
    \cc{order}       &    & 32 & 7 & 14 & 24 &
    \aligntime{8}{19}{18} & \aligntime{4}{10}{2} & \cmark & -- \\
    \cc{peel}         &  & 25 & 8 & 10 & 16 &
    \aligntime{}{7}{4} & \aligntime{}{7}{47} & \cmark & -- \\
    \cc{tree}        &  & 38 & 4 & 11 & 16 &
    \aligntime{}{41}{50} & \aligntime{}{16}{18} & \cmark & -- \\
    \hline
    \cc{sameLength}   & EPR   & 28 & 2 & 11 & 17 &
    \aligntime{}{23}{22} & \aligntime{}{7}{54} & \cmark & -- \\
    \hline
    \cc{breakCycle}   & \multirow{4}{*}{SVC} & 25 & 10 & 13 & 21 &
    \aligntime{5}{0}{46} & \aligntime{}{55}{30} & \cmark & -- \\
    \cc{simpleSearch}&  & 33 & 6 & 12 & 18 &
    \aligntime{3}{19}{38} & \aligntime{5}{28}{15} & \cmark & -- \\
    \cc{unary}        &  & 26 & 6 & 12 & 18 &
    \aligntime{}{56}{27} & \aligntime{}{6}{41} & \cmark & -- \\
    \cc{uniqueItem}  &  & 36 & 4 & 11 & 17 &
    \aligntime{}{33}{26} & \aligntime{}{40}{10} & \cmark & -- \\
    \hline
\end{tabular}

\caption{Results of our evaluation of \sys.
  Information about the program source, features of the learned
  grammar, and execution times of \sys are reported. The columns of
  the table indicate the name of the benchmark (``Benchmark''), the
  source of the program (``Source''), the number of lines of source
  code in the program (``LoC''), the number of iterations required to
  learn the final grammar (``Iter''), the final number of relations
  and clauses in the final grammar (``Rel'', ``Cls''), the total time
  spent in calls to the \synskeleton prototype, (``\synskeleton
  Time''), the total time spent in calls to \duality (``\solvechc
  Time''), and whether \sys and the baseline verifiers ultimately
  solved the verification problem (``\sys'', ``Base'').}
\label{tab:eval-results}
\end{table}
The results of our evaluation are contained in
\autoref{tab:eval-results}.
Each of the programs in \autoref{tab:eval-results} was proved safe by
\sys.
Only one of the benchmarks, \cc{lag2}, was proved safe by the baseline
verifier, which took three seconds.

$\synskeleton$ converges to the correct grammar after only 4
infeasible path examples in verifying \cc{buildInspect}, but requires
7 examples for convergence on \cc{peel}.  The $\synskeleton$ time for
\cc{unary} and \cc{binary} is an order of magnitude longer than for
\cc{peel}, which is in turn an order of magnitude longer than for
\cc{buildInspect}. This illustrates the poor scalability of the
constraint-based prototype as the program complexity increases.

Each time an object is allocated in \cc{allocator}, the run grammar
constraints can only ensure that the object is distinct from a bounded
number of other objects. Yet $o$ is verified to be distinct from every
element of the unbounded list, indicating that the earlier allocations
were aware of the eventual occurrence of $o$. This foresight is
achieved because the scanning of the list produces a grammar similar
to that of \cc{buildInspect}. Had the list not been scanned, $o$ could
not be constrained to be distinct from the other elements, which is a
safe approximation precisely because the list is not scanned.

\cc{lag2} stands out as the only one which is determined to be safe by
the baseline verifier, despite the fact that the benchmark uses an
unbounded heap structure. The baseline determines the program is safe
in just three seconds. An informal explanation is that though the
entire structure is unbounded, only a bounded subset of it is
live. The distance between a store and its matching load is bounded at
exactly two iterations of the loop. This special property permits the
invariants to be easily synthesized by \duality in the theory of
arrays as a formula over a bounded number of indices. This same
property also manifests in \sys, without the theory of arrays. The
trivial first $\synchc(\cc{P},\emptyset)$ result is almost
correct: a state variable must be kept in a relation for two
iterations following its use in a control edge.

\cc{finiteCycle} illustrates that \sys can prove cyclicity of the
visited control locations from the cyclicity of a data structure.

Showing correctness of \cc{breakCycle} depends on the ability of \sys
to accurately model a field which is written to and read from more
than once, with an unbounded number of heap modifications in
between. If a verifier could not prove that the nulled pointers remain
null after traversing the unbounded cycle, there would be no guarantee
of ending at the correct element of the cycle.

\sys is well-suited to verify \cc{sameLength}, though it is
inexpressible in EPR, because \sys can capture correlations between
control flow and data movement in the structure of the CHC system.

Verifying \cc{tree} shows that \sys is not confined to lists since the
data structure to be traversed proceeds non-deterministically through
\cc{left} and \cc{right} child pointers.

\sys solved \cc{simpleSearch}, although its CHC system was the hardest
one tested, taking 5.5 hours for \duality to find inductive
invariants. Because of the lower bound on the list size, the shortest
of the feedback paths given to \synskeleton must execute the list
construction step at least 5 times, since the vacuous refuting
neighborhood is all that is required to show infeasibility shorter
paths. Moreover, it is speculated that the CHC solver must also
explore unwindings of the CHC system to that depth before meaningful
invariants begin to be discovered.

\sys is able to solve \cc{uniqueItem}, illustrating its ability to
model a program which can leave a special item at a non-deterministic
point in an unbounded list, and not only find it again, but prove that
there is exactly one such item. That \sys can further solve \cc{order}
shows yet more expressivity in reachability conditions regarding
multiple elements in a list. A verifier would fail if it were only
able to express that the elements \cc{a} and \cc{b} were reachable
from the head of list, and not that the elements are ordered.

\cc{ctxSensitive} is unsolvable by both \sys and the baseline
verifier, despite its similarity to \cc{lag2}, which both verifiers
performed very well on. \sys did not synthesize a correct grammar no
matter how many feedback paths it received. The key difference from
\cc{lag2} is that the distance between a load and its matching store
increases with each iteration. The limitation of \sys in this case is
discussed in below.

\paragraph{Limitations:}
\label{sec:limitations}

While \sys can verify a large class of programs which manipulate heap
data structures, there are certain types of programs which it cannot.
In particular, the current implementation of \sys is restricted to the
subset of programs which can be refuted using a context-free grammar.

\autoref{fig:context-sens} contains the source code for a program,
\cc{ctxSensitive}, for which \sys cannot synthesize a proof.
\cc{ctxSensitive} constructs and consumes a queue in a single loop
where each iteration appends two new elements and consumes only one
previous element. The program checks that the last element read was
the middle element written.
First, the program initializes a queue with a single element and
initializes a counter, \cc{count}, to $1$ (lines \cc{3}---\cc{6}).
The program iterates an arbitrary number of times. In each iteration,
two new elements are added to the tail of the queue. The first element
added has its \cc{data} field set to \cc{count} and the second has its
\cc{data} field set to $\cc{count}+1$ (lines \cc{7}---\cc{13}). Then,
the first element of the queue is removed by moving through the
\cc{head}'s \cc{next} field (line \cc{14}). At the end of each
iteration \cc{count} is incremented by $2$ (line \cc{15}).
Once the loop finishes, the program asserts that the \cc{data} field
of the front of the queue is equal to half of the current value in
\cc{count} (line \cc{16}).

The problem with this benchmark is that the distance between
corresponding stores and loads widens as the program progresses. In
other words, the data dependency is sensitive to the number of
iterations of the loop that have occurred. Therefore, the path-grammar
required to describe the data dependencies of \cc{ctxSensitive} is not
context-free.

Non-context-free grammars are a fundamental limitation of the
presented approach. Extending our technique with the capability of
handling more complex grammars is a compelling direction for future
work.


\section{Related Work}
\label{sec:related-work}
Three-valued logic analysis (TVLA) represents sets of program heaps as
canonical structures~\cite{loginov05,reps04,sagiv02}.
Recent work has introduced automatic shape analyses that represent
sets of program heaps as formulas in separation
logic~\cite{calcagano11,distefano06,reynolds02,yang08}, forest
automata~\cite{holik13}, or memory graphs~\cite{dudka16}.
\sys can be applied to programs that maintain perform arbitrary
low-level heap operations and data operations.
Such approaches can potentially infer invariants of heaps that
maintain a variety of data structures, such as lists or trees.
However, each analyses synthesizes invariants using an abstraction
fixed for the run of the analysis: thus, the analysis can only
potentially be effective if an analysis designer provides a sufficient
set of predicates for maintaining invariants required to prove that a
program satisfies a desired property.
Variants of TVLA have been extended to automatically refine an
abstraction used by applying \emph{Inductive Logic Programming (ILP)}.
However, previous work has established only have ILP can be applied to
refine structural abstractions, not abstractions over structure and
data.

Recent work has also proposed decision procedures for separation
logic~\cite{perez11,seshia03} and an automatic verifier that
represents sets of states as separation-logic
formulas~\cite{albarghouthi15}.
Such approaches require a fixed set of recursively-defined predicates
for reasoning about fixed classes of heap data structures.
\sys can verify program correctness without requiring such predicates.

Previous work has developed automatic verifiers that implement
predicate abstraction in shape logics.
Such approaches can only synthesize invariants that describe which
cells of a heap may reach each other over heap
fields~\cite{balaban05,dams03}, %
can be applied only to programs that maintain particular data
structures---such as linked lists~\cite{lahiri06}---or %
require an analyst to provide loop invariants~\cite{bouillaguet07}, %
predicates over which invariants are
constructed~\cite{flanagan02,rakamaric07,rondon10}, or heuristics that
ensure that the analysis converges~\cite{dams03}.
Relational invariants of path grammars can prove correctness of
programs that maintain non-list data structures.
\sys can potentially synthesize such invariants without requiring an
analyst to provide predicates or loop invariants.

Previous work has proposed verifiers that determine if a program
satisfies an assertion by inferring shape invariants represented as
formulas in effectively-propositional
logics~\cite{drews16,itzhaky13,itzhaky-bj14}.
Such verifiers enjoy strong completeness properties not satisfied by
\sys, but the class of invariants that they synthesize cannot express
invariants over non-list data structures or that relate multiple
lists.
Relational invariants over path grammars can prove the safety of
programs that maintain such data structures, and \sys can potentially
synthesize such invariants automatically.

One such verifier~\cite{itzhaky-bj14} is in fact one instance of a
verification framework that could potentially be instantiated with
logics other than effectively-proposition logics.
However, the general framework requires an analyst to provide
predicates over which verification is to be performed.
\sys does not require such predicates to be provided.

Further work on effectively-propositional logic describes a verifier
that can verify properties of heap-paths of data-structures that are
not necessarily lists~\cite{itzhaky-ba14}.
However, the verifier requires pre-conditions and post-conditions to
be given explicitly.
It cannot verify programs with loops, although loops annotated with
invariants could presumably be verified by directly adapting the
verification technique.
In either case, \sys is distinct from the verifier in that \sys can
potentially prove safety of an iterative program automatically.

Previous work has described verifiers that verify that a given program
satisfies a desired shape property by axiomatizing and inferring
invariants in the theory of arrays.
One approach attempts to infer invariants that are quantified over
array indices and use range predicates that describe the values at
different ranges of indices in an array~\cite{jhala07}.
Such invariants are well-suited to inferring invariants that describe
logical arrays that model arrays operated on by a program, but not
logical arrays that model a program's heap fields.

Previous work has proposed approaches that attempt to verify a given
program by synthesizing a tree-decomposition of the heaps that it
maintains~\cite{manevich07,manevich08}.
\sys is similar to such approaches in that it reasons about
tree-structured artifacts that model program executions, namely the
derivation trees of a grammar of program paths.
Unlike previous approaches, \sys reasons about the derivation of
control paths, rather than of a decomposition of the heap.

Techniques from \emph{relational
  verification}~\cite{barthe11,barthe13,benton04} establish properties
over states in multiple runs of a program, such as robustness and
information-flow security, or over states in runs of multiple
programs, such as observational equivalence.
\sys can be viewed as an instance of relational verification, in that
it attempts to synthesize proofs that establish properties over
multiple states.
\sys is distinct from previous work in that it attempts to establish
properties of states within the same run in order to prove safety of
low-level programs.
Combining existing relational verification techniques with \sys in
order to prove relational properties of programs that maintain
low-level data structures and to automate relational verification
seems to be a promising direction for future work.

\sketch synthesizes finite programs~\cite{solar-lezama06},
bit-streaming programs~\cite{solar-lezama06}, and
stencils~\cite{solar-lezama07} by iteratively synthesizing a candidate
version of a program, attempting to verify it, and using a
counterexample to verification to guide the search for the next
version of the program.
Synthesizers that are instances of the SyGuS framework~\cite{alur13}
attempt to synthesize a program accompanied with a proof of
correctness, using a counterexample-guided, iterative process.
\sys is similar to \sketch and instances of SyGuS in that it is an
inductive synthesizer.
\sys is distinct from \sketch and instances of SyGuS in that each of
them, given a partially-complete program, attempt to complete the
program with instances of a language of possible syntactic
combinations of atomic operations.
\sys, given a program \cc{P}, attempts to synthesize the basic
structure of a logic program that simulates \cc{P}, without
synthesizing new atomic operations.

Previous work has reduced the problem of verifying concurrent programs
that use a bounded number of threads to solving a system of
Constrained Horn Clauses (CHC's)~\cite{gupta11} and proposed a solver
for CHC's over the theory of linear arithmetic.
Previous work has developed automatic verifiers for programs with a
single procedure~\cite{mcmillan06} and multiple recursive
procedures~\cite{heizmann10} that have been generalized to design CHC
solvers that use an interpolating theorem
prover~\cite{bjorner13,rummer13}.
\sys uses a CHC solver for the theory of uninterpreted functions as a
black box.
In principle, \sys can use any CHC solver;
the implementation evaluated uses a solver that itself uses an
interpolating theorem prover~\cite{bjorner13}.


\section{Conclusion}
\label{sec:conclusion}
In this paper, we have presented a novel verifier, named \sys,
designed to verify programs that maintain low-level data structures.
The key feature of \sys is that, given a program \cc{P}, it attempts
to synthesize a proof of the safety of \cc{P}, represented as a graph
grammar, i.e. a CHC system, that generates the control paths of
\cc{P}, annotated with invariants that relate the local values at
multiple points in each path.
\sys is completely automated, using an inductive-synthesis algorithm
that synthesizes candidate path grammars by reduction to constraint
solving and validates candidate grammars by reduction to logic
programming.
Such proofs can establish correctness of programs that previously
could only be proven correct using manually-provided predicates over
values and structure.

We have implemented \sys as a verifier for JVM bytecode and evaluated
it on a set of challenging problems for shape verifiers. \sys for JVM
succeeded in learning a suitable grammar for thirteen of the fourteen
benchmarks. Each of the suitable grammars was given as a CHC system to
the CHC solver \duality, which was able to find inductive invariants
to prove program safety.

The design of \sys establishes that shape-verification problems can
potentially be solved by applying techniques from relational
verification and inductive synthesis.
Further work strengthening this connection could result in significant
development of each of the related fields.

\begin{acks}
  This work is supported by the \grantsponsor{nsf}{National Science
    Foundation}{https://www.nsf.gov} under awards
  \grantnum{nsf}{1526211} and \grantnum{nsf}{1650044}.
\end{acks}

\bibliographystyle{abbrv}
\small
\bibliography{p}

\clearpage

\appendix


\section{Additional Illustrative Benchmarks}
\label{app:more-benchmarks}

\begin{figure}
  \centering
  \begin{minipage}[t]{.5\textwidth}
    \centering
    \input{code/lag.java}
  \end{minipage}
  \qquad
  \begin{minipage}[t]{.4\textwidth}
    \centering
    \input{code/order.java}
  \end{minipage}
  \begin{minipage}[t]{.5\textwidth}
    \centering
    \caption{\cc{lag}: Constructs and consumes a queue, with each
    load occurring one iteration after the corresponding store. }
    \label{fig:eval-lag}
  \end{minipage}
  \qquad
  \begin{minipage}[t]{.4\textwidth}
    \centering
    \caption{\cc{order}: Constructs and queue with two distinguished
      elements. It then consumes the queue to ensure the distinguished
      elements remain in the order they were inserted. }
    \label{fig:eval-order}
  \end{minipage}
\end{figure}

\begin{figure*}[t]
  \centering
  \includegraphics[width=0.9\linewidth]{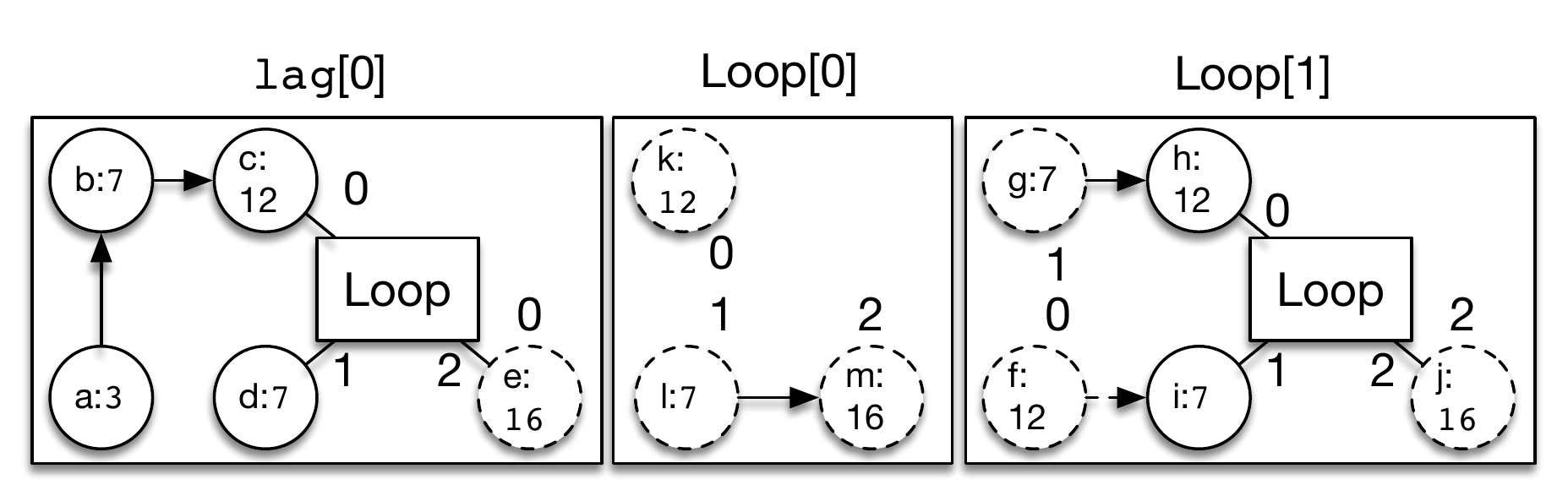}
  \caption{A graph grammar that generates \cc{lag}'s control paths and
  data dependencies.}
  \label{fig:lag-grammar}
  \BH{Figure is missing two control edges: One from c->d and one from
      h->i, both of which represent stepping back to the beginning of the
      loop.}
\end{figure*}
\paragraph{A program with nodes shared across clauses: \cc{lag}}
\autoref{fig:eval-lag} contains the source code for a program,
\cc{lag}, which builds and consumes a queue. Both the loads and stores
occur in a single loop, where each item that is loaded was stored on
the previous iteration. The program operates in the context of a small
state machine.
First, \cc{lag} initializes the queue and sets itself in a
\emph{write-only}
state (lines \cc{4}---\cc{6}).
Next, if \cc{lag} is in a \emph{write} state, then it adds a new element to
the end of the list and sets the tail to this new element (lines
\cc{7}---\cc{11}).
If \cc{lag} is in a \emph{read} state, it moves the head to the next element
(line \cc{12}).
\cc{lag} then performs state transitions (lines \cc{13}---\cc{15}).
\cc{lag} moves from the \emph{write-only} state to the
\emph{read-write} state and from the \emph{read-only} state to the
\emph{done} state after a single iteration. \cc{lag} moves from the
\emph{read-write} state to the \emph{read-only} state after an
arbitrary number of iterations. Finally, \cc{lag} asserts that the
head element reached by this loop is the same as the tail (line
\cc{16}).

\autoref{fig:lag-grammar} depicts a graph grammar, $\mathcal{G}$,
which describes the control paths and data dependencies of
\cc{lag}. The grammar contains two relations, \cc{lag} and \cc{Loop},
where \cc{lag} is the query relation. $\mathcal{G}$ has three clauses.
$\mathsf{lag}[0]$ generates fragments of the control path which
describe the loop entry (\cc{a}---\cc{b}), through the first store
(\cc{b}---\cc{c}), and through the first state transition without
performing a load (\cc{c}---\cc{d}).
$\mathsf{Loop}[0]$ generates fragments of the control path which store
the next item in the queue (\cc{g}---\cc{h}), load the front of the
queue, and perform a state transition (\cc{h}---\cc{i}).  In addition,
$\mathsf{Loop}[0]$ contains a data dependency (\cc{f}---\cc{i}): The
store from the previous loop iteration provides the data for the load
in this iteration.
$\mathsf{Loop}[1]$ generates the final fragment which skips from the
beginning of the loop to just after the loop (\cc{l}---\cc{m}).

A noteworthy feature of $\mathcal{G}$ is node \cc{f}. In particular,
\cc{f} provides a data dependency, but is not connected directly to
the control path through $\mathsf{Loop}[0]$. This demonstrates the
capability of graph grammars to separate data dependencies from
control dependencies.
%

\begin{figure*}[t]
  \centering
  \includegraphics[width=.95\linewidth]{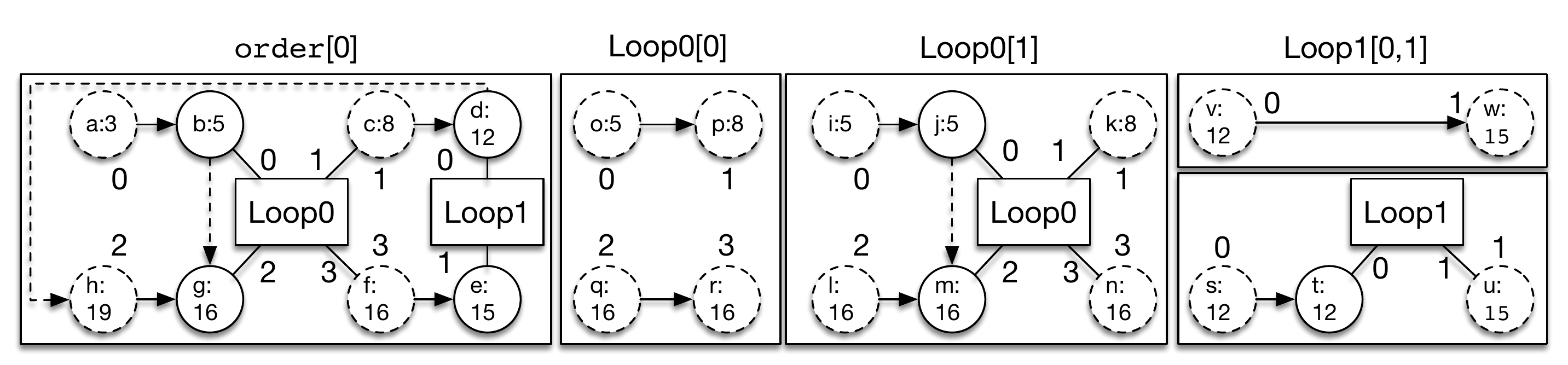}
  \caption{A graph grammar that generates \cc{order}'s control paths
  and data dependencies.}
  \label{fig:order-grammar}
  \BH{control flow between h/g, f/e is backwards. Should be g->h, e->f}
\end{figure*}
\paragraph{A program for which some heap operations have no data dependencies:
\cc{order}}
\autoref{fig:eval-order} contains the source code for a program,
\cc{order}, which creates a queue with two distinguished elements,
then consumes the queue until it can check these two elements remain
relatively ordered.
First, \cc{order} initializes the queue and adds an arbitrary number
of elements to the tail (lines \cc{3}---\cc{7}). Next, it inserts two
distinguished elements, \cc{a} and \cc{b} (lines \cc{8}---\cc{11}).
\cc{order} continues to expand the queue with an arbitrary number of
new elements (lines \cc{12}---\cc{14}). \cc{order} traverses the
queue until it reaches element \cc{a}. If \cc{b} is encountered before
the end of this loop, then the queue has become unordered (lines
\cc{15}---\cc{18}). Finally, \cc{order} asserts that the distinguished
elements remain in order (line \cc{19}).

Each execution of \cc{order} satisfies the assertion. The key
invariant which could prove the safety property establishes that for
each iteration of the loop on lines \cc{12}---\cc{14}, traversing the
\cc{next} pointer reaches $\cc a$ before it reaches $\cc b$. This
invariant is difficult to express without knowing particular facts
about the relationships of linked lists. However, a graph grammar can
express the idea of this invariant by simulating the first and third
program loops together.

\autoref{fig:order-grammar} depicts a graph grammar, $\mathcal{G}$,
which encodes \cc{order}. $\mathcal{G}$ has three relations and five
clauses. The query relation, \cc{order}, has only one clause,
$\cc{order}[0]$, which contains the entry to the first program loop
(\cc{a}---\cc{b}), the insertion of the two distinguished elements and
entry of the second loop (\cc{c}---\cc{d}), the entry to the third
loop (\cc{e}---\cc{f}), a final execution of the third loop and the
program assertion (\cc{g}---\cc{h}), and a data link between the first
distinguished element and the final load (\cc{d}---\cc{h}).
The two clauses of \cc{Loop0}, $\mathsf{Loop0}[0]$ and
$\mathsf{Loop0}[1]$, describe the simultaneous execution of the first
and third program loops. These clauses are similar to
$\mathsf{StLd}[0]$ and $\mathsf{StLd}[1]$ from
\autoref{sec:ex-grammar}.
The two clauses of \cc{Loop1}, $\mathsf{Loop1}[0]$ and
$\mathsf{Loop1}[1]$ describe the execution of the second program loop.
For brevity, we omit some clauses for \cc{Loop0} that iterate the two
loops a differing number of times. We annotate \cc{Loop0} with an
inductive relational invariant which can prove the safety of
\cc{order}:
\begin{align}
  \cc{tail}_1 \neq \cc{a}_2 \land \cc{head}_3 = \cc{tail}_1 \land
  \cc{ordered}_3 \land
  \cc{tail}_0 = \cc{head}_2 \land \nonumber \\
  \cc{next}_1(\cc{tail}_0) = \cc{next}_2(\cc{head}_2) \implies
  \cc{head}_3 = \cc{tail}_1
\end{align}
This invariant maintains that the flag \cc{ordered} is always true at
the end of third loop. In addition, it establishes that the
simultaneous iteration of the first and third loop terminates before
the distinguished element $a$ is reached. Interestingly, the second
loop is irrelevant to proving the safety property, so the relational
invariant for \cc{Loop1} is simply $true$.

Previous examples have all contained data dependencies for every store
and load. \cc{order} demonstrates that this is not a limitation of
\sys. The stores performed in the second program loop have no
corresponding loads. This is reflected by the lack of data
dependencies in the clauses of the relation \cc{Loop1}.

\paragraph{A program with a tree data structure: \cc{tree}}

\begin{figure}[t]
  \centering
  \input{code/tree.java}
  \caption{\cc{tree}: Constructs a tree by arbitrarily adding left or
    right nodes, then follows the path to ensure the length when
    consumed is the same as the length of the constructed path.}
  \label{fig:eval-tree}
\end{figure}

\autoref{fig:eval-tree} contains a program, named \cc{tree}, that
constructs a tree data structure.
\cc{tree} builds the tree by non-deterministically choosing to
continue to build the \cc{left} or \cc{right} subtree of a given node
for a non-deterministically chosen number of iterations.
\cc{tree} then traverses the data structure and ensures that the
number of steps taken to traverse it is the same as the number of
steps used to build it.

In particular, \cc{tree} enters the loop in lines \cc{7}---\cc{14}
(i.e., it's \emph{building} loop) by initializing the tree and
initializing a counter \cc c of the number of nodes.
Before each iteration of the loop, \cc{tree} tests if the tree node
bound to \cc{t} is \cc{null}, and if so, exits the loop (line \cc{7}).
In each iteration of the loop, \cc{tree} increments \cc c
(line \cc{8}) and non-deterministically chooses whether to stop
adding new nodes (lines \cc{9}---\cc{10}).
The new node is then stored in either the \cc{left} or \cc{right}
field of the node bound to \cc{t} (lines \cc{11}---\cc{13}).

After executing the loop, \cc{tree} binds \cc{t} to the root of the
build tree (line \cc{15}) and initializes a new counter \cc{d} to $0$
(line \cc{16}).
When \cc{tree} executes the loop on lines \cc{17}---\cc{21} (i.e., its
\emph{traversal} loop), it traverses the constructed path by loading
either the \cc{left} or \cc{right} child of a node maintained in
\cc{t}, depending on which child is not \cc{null} (lines
\cc{17}---\cc{20}).
\cc{tree} executes its traversal loop until it reaches a node such
that both children are \cc{null} (line \cc{21}).
\cc{tree} then asserts that the number of nodes added is the same as
the number of nodes traversed (line \cc{22}).

\cc{tree} satisfies the safety property tested by its assertion at
line \cc{22}
However, the loop invariants required to prove the property must
express a complex property that relates data structure and values.
Such invariants must establish that the tree contains \cc{c} elements
while \cc{tree} executes the loops in lines \cc{7}---\cc{14}, and that
$\cc{c} - \cc{d}$ elements are reachable from the object in \cc{t}
while \cc{tree} executes the loop in lines \cc{17}---\cc{21}.
To our knowledge, no automated invariants can synthesize proofs in a
language that can express such properties.
%

\begin{figure*}[t]
  \centering
  \includegraphics[width=.95\linewidth]{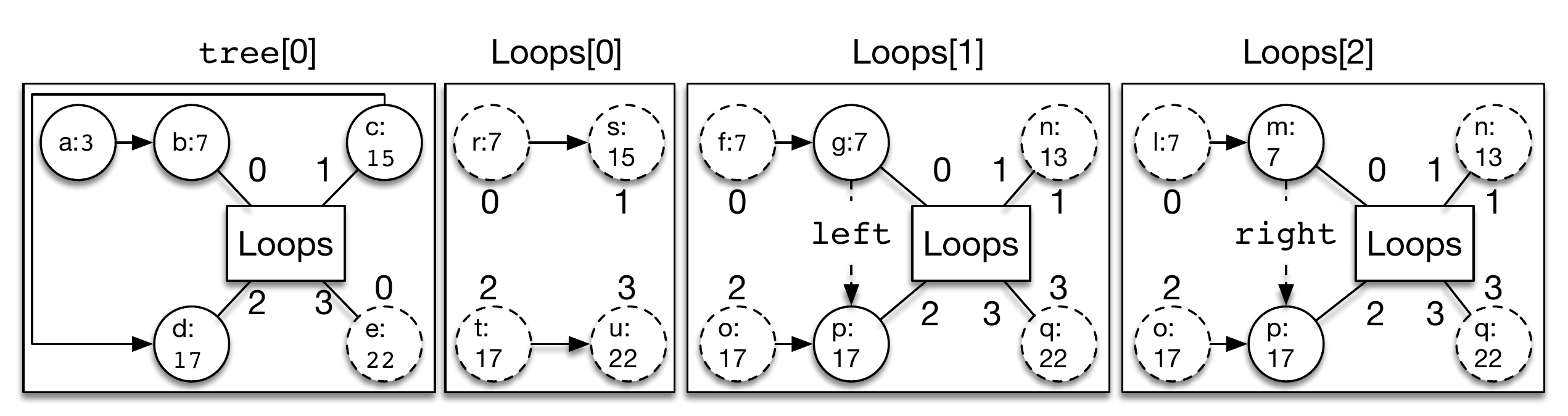}
  \caption{A graphical depiction of a run grammar that simulates and
    refutes \cc{tree}.
    Control and data dependencies between path points are depicted
    similarly to \autoref{fig:build-inspect-grammar}.}
  \label{fig:tree-grammar}
\end{figure*}
\autoref{fig:tree-grammar} depicts a run grammar $\mathcal{G}$ that
simulates and refutes \cc{tree}.
$\mathcal{G}$ contains two relational predicates, \cc{tree} and
$\mathsf{Loops}$.
The query relational predicate, \cc{tree}, is the head of clause
$\cc{tree}[0]$.
Clause $\cc{tree}[0]$ derives a control path that steps from program
entry to the entry of \cc{tree}'s building loop (\cc{a}---\cc{b}),
a path that steps from the exit of the building loop to the entry of
the traversal loop (\cc{c}---\cc{d}), and a path that steps from the
exit of the traversal loop to the assertion (\cc{e}).

$\mathsf{Loops}$ is the head of three different clauses.
Clause $\mathsf{Loops[0]}$ derives steps in which the building loop
non-deterministically chooses to exit;
in the corresponding step of traversal loop, it determines that it has
reached the end of the tree.
Clause $\mathsf{Loops[1]}$ derives simultaneous steps of both loops
such that the building loop chooses to build a subtree from the
\cc{left} field of the maintained tree node.
\cc{f}---\cc{g} derives a path through the building loop, while
\cc{i}---\cc{j} depicts the corresponding path through the traversal
loop.
Each path point derived as an instance of \cc{g} is a data dependence
of the node derived as an instance of \cc{j} in the same clause
instance, depicted by the data-dependence edge from \cc{g} to \cc{j}.
Clause $\mathsf{Loops}[2]$ is similar to $\mathsf{Loops}[0]$, but
derives corresponding steps of both loops in which the building loop
builds a tree at the \cc{right} field of the maintained tree node.
Similar to the presentation of \cc{buildInspect} in
\autoref{sec:overview}, we have omitted clauses that derive extraneous
iterations of either the building or traversal loops.
Such clauses are necessary to represent a run grammar that simulates
all control flow paths of \cc{tree}, since it is not known \emph{a
  priori} that every execution has an equal number of iteration in
each loop.

$\mathcal{G}$ simulates \cc{tree} and is empty;
one solution of $\mathcal{G}$ includes an interpretation of
$\mathsf{Loops}$ as a relational invariant that establishes that
\cc{c} and \cc{d} have the equal values at the end of iterations on
which they store to and load from the same field:
\begin{align}
  \cc{lrnull}(ix0, ix1) \equiv
  \cc{left}_{ix0}(\cc{t}_{ix1}) = \cc{null} \land
  \cc{right}_{ix0}(\cc{t}_{ix1}) = \cc{null}
  \nonumber \\
  \cc{c}_1 = \cc{d}_3 \land (\cc{lrnull}(1, 0) \iff \cc{lrnull}(2, 2))
\end{align}
In this invariant we define a formula macro \cc{lrnull} over two
indices which specifies that the \cc{left} and \cc{right} fields of
the variable \cc{t} are both null at the given indices.
To our knowledge, no automatic shape verifier developed in previous
work can prove the safety of \cc{tree} without being guided to use a
manually-defined recursive predicate that describes the shape of
trees, and relevant predicates over data variables \cc{c} and \cc{d}.
However, \sys can prove the safety of \cc{tree} automatically.

\section{Implementing \synskeleton by reduction to constraint solving}
\label{app:syn-skeleton-ctrs}

We now formalize some concepts that will assist in proving the lemmas
and explaining more precisely how the constraints used to implement
\synskeleton work.
In the following sections, let a program $\cc{P} \in \lang$ be
given. Let $\mathcal{C}$ be a CHC system, and let the space to which
$\mathcal{C}$ belongs be denoted $\chcs{\mathcal{R}}{X}$. Assume,
without loss of generality, that the arity of every $R \in
\mathcal{R}$ is $n$. Let $\idxs = \setformer{i \in \nats}{i \le
  n}$. Let $\alpha(Y,i)$ be the $i$th item in the sequence $Y$ for $Y
\in X^n$ and $i \in \idxs$. Further, for every control location $l \in
\locs[\cc{P}]$, let $\cc{L}_l$ be a unary interpreted
function. Further, let $\cc{Ctrl}$, $\cc{Data}$, and $\cc{Conn}$ be
binary uninterpreted functions.
\paragraph{Location consistency}
Suppose there exists $\hat\lambda : \mathcal{R} \times \idxs \to
\locs$ with the following property for all clauses $C = (B, \varphi,
H) \in \mathcal{C}$, all applications $(R,Y) \in B \union \{H\}$, and
all indices $i \in \idxs$:
\begin{align*}
  \varphi \entails \bigland_{l \in \locs} {
    (l=\hat\lambda(R,i) \iff \cc{L}_{l}(\alpha(Y,i)))
  }.
\end{align*}
Then $\mathcal{C}$ is \emph{location consistent}.
Also define the map ${\hat\lambda}_C : X \to \locs$ for each clause $C
= (B, \varphi, H) \in \mathcal{C}$ such that if $\varphi \entails
\cc{L}_l(x)$ then ${\hat\lambda}_C(x) = l$, for all variables $x \in
X$ and locations $l \in \locs$.
${\hat\lambda}_C$ is well-defined because $\mathcal{C}$ is location
consistent.

\paragraph{Control consistency}
Suppose $\mathcal{C}$ is location consistent.
Let $\hat\lambda_C$ be given for all clauses $C \in \mathcal{C}$.
Suppose there exists ${\hat E}_C \subseteq X \times X$ for each clause
$C = (B, \varphi, H) \in \mathcal{C}$ with the following properties
for all variables $x_1,x_2 \in X$:
\begin{enumerate}
\item 
  If there does not exist an instruction
  $\instrat{\cc{P}}{{\hat\lambda}_C(x_1)}{{\hat\lambda}_C(x_2)}$, then
  $\varphi \entails \lnot \cc{Ctrl}(x_1, x_2)$;
\item
  $\varphi \entails {\hat E}_C(x_1,x_2) \iff \cc{Ctrl}(x_1, x_2)$.
\end{enumerate}
Then $\mathcal{C}$ is \emph{control consistent}.

\paragraph{Path consistency}
Suppose $\mathcal{C}$ is control consistent, and let ${\hat E}_C$ be
given for all clauses $C \in \mathcal{C}$ Suppose there exist a total
ordering $\ord{R}$ over $\idxs$ and a map $\sgn{R} : \idxs \to
\{+,-\}$ for every relation $R \in \mathcal{R}$, such that the
following properties hold for all clauses $C = (B, \varphi, H) \in
\mathcal{C}$, all applications $(R_1,Y_1), (R_2, Y_2), \in B \union
\{H\}$, all indices $i,j,k \in \idxs$, and all variables $x_1,x_2,x_3
\in X$:
\begin{enumerate}
\item If $(\alpha(Y_1,i),\alpha(Y_2,j)) \in {\hat E}_C$ then $\varphi
  \entails \cc{Conn}(\alpha(Y_1,i),\alpha(Y_2,j))$ and
  \begin{enumerate}
  \item if also $(R_1,Y_1) = H$ then
    $\varphi \entails \sgn{R_1}(i) < 0$,
  \item if also $(R_1,Y_1) \in B$ then
    $\varphi \entails \sgn{R_1}(i) > 0$;
  \end{enumerate}
\item If $(\alpha(Y_1,i),\alpha(Y_2,j)) \not\in {\hat E}_C$ and if
  $\varphi \entails \alpha(Y_1,i) = \alpha(Y_2,j)$ then $\sgn{R_1}(i)
  = \sgn{R_2}(j)$;
\item $\varphi \entails {
  {\cc{Conn}(x_1,x_2) \land
    \cc{Conn}(x_2,x_3)} \implies
  \cc{Conn}(x_1,x_3)
}$;
\item $\varphi \entails i \ord{R_1} j \iff
  \cc{Conn}(\alpha(Y_1,i),\alpha(Y_1,j))$.
\end{enumerate}
Then $\mathcal{C}$ is \emph{path consistent}.

For a path consistent $\mathcal{C}$, define sets of index pairs
$\posfrags{R}$ and $\negfrags{R}$ for each relation $R \in
\mathcal{R}$ as follows. Let $\posfrags{R} =
\setformer{(i,j)}{\sgn{R}(i) > 0 \land i \ord{R} j \land \not\exists k
  \in \idxs (i \ord{R} k \ord{R} j)} \subseteq \idxs \times
\idxs$. Define $\negfrags{R}$ similarly, but with $\sgn{R}(i) <
0$. The sets $\posfrags{R}$ and $\negfrags{R}$ are the \emph{positive}
and \emph{negative control pairs} of $R$, respectively. Informally, at
any node $d$ of any derivation tree $D$ of $\mathcal{C}$, with head
$(R,Y)$, the control path between $\alpha(Y,i)$ and $\alpha(Y,j)$ is
constructed entirely within the subtree of $D$ rooted at $d$ if $i
\negfrags{R} j$. Similarly, the control path is constructed entirely
outside of the subtree of $D$ rooted at $d$ if $i \posfrags{R} j$.
For convenience, define the application of $\posfrags{R}$ to variable
tuple $Y \in X^n$ as the set of variable pairs $\posfrags{R}(Y) =
\setformer{(\alpha(Y,i),\alpha(Y,j))}{i \posfrags{R} j,\; i,j \in
  \idxs} \subseteq X \times X$. Define $\negfrags{R}(Y)$ similarly.

\paragraph{Neighborliness}
Let $\mathcal{C}$ be such that for each $C \in \mathcal{C}$ and all
$x_1,x_2 \in X$,
\begin{align*}
  \varphi \entails \cc{Data}(x_1, x_2).
\end{align*}
Then $\mathcal{C}$ is \emph{neighborly}.

\paragraph{Completeness and Correctness}

If every path $p \in \pathsof{\cc{P}}$ is induced as the control graph
of some model of $\mathcal{C}$, then $\mathcal{C}$ is \emph{complete}
for \cc{P}.

Let $\cc{p} \in \pathsof{ \cc{P} }$ be infeasible.
If each derivation of $\mathcal{C}$ that induces $p$ as a control
graph also induces a $\nu$ a refuting neighborhood of $p$, then
$\mathcal{C}$ is \emph{correct} for $p$.
If $\mathcal{C}$ is complete for \cc{P} and correct for all infeasible
paths $p \in \pathsof{\cc{P}}$, then $\mathcal{C}$ is \emph{correct}
for \cc{P}.

There is a direct correspondence between the models $m \sats \sympath
(\cc{P}, p, \nuAll)$ and runs of $p$.
Model $m$ corresponds to run $(p, \sigma)$ with $\sigma$ defined by
$\sigma(n) = (\gamma^n, U^n)$, where $\gamma^n(\cc{a}) =
m(\cc{a}(\sigma_n))$ and $U^n(\cc{f},\cc{x}) = n'$ where
$m(\cc{f}(\tnow(\sigma_n), \cc{x}(\sigma_n))) = m(\tnow(\sigma_{n'}))$
if any such $n'$ exists, for all $n \in N, \cc{a} \in \vars, \cc{x}
\in \objvars, \cc{f} \in \fields$.

\paragraph{Generating a skeleton from constraint solutions}

The prototype implementation of \synskeleton is a constraint-based
approach to grammar synthesis. We encode a space $\mathbf{G}$ of
grammars using uninterpreted functions to model the structure of
relations and clauses.

With loss of generality, the prototype only considers linear grammars,
in the sense that every clause has at most one relation in the
body. \sys has been formulated for only single procedure programs,
whose control flow can be described linearly, and a need for
non-linear grammars has not yet been encountered in \sys. The
prototype also requires as input a budget specifying the maximum arity
of relations and the maximum number of relations for systems in
$\mathbf{G}$. This configuration is encoded as the set of constants
and uninterpreted functions of an SMT query.

The following properties of every $\mathcal{C} \in \mathbf{G}$ are
achieved by SMT constraints:
\begin{enumerate}
\item Location consistency: This gives rise to ${\hat\lambda}_C : X
  \to \locs$ for each clause $C$.
  
\item Control consistency: WLOG, each clause $C$ entails exactly one
  control edge, i.e., ${\hat E}_C = \{(a,b)\}$.

\item Path consistency: The arguments of every relation are
  partitioned into some number of negative control pairs and some
  number of single \emph{auxiliary} variables.
  Let $(R_2, \varphi, R_1)$ be a clause.
  Under path consistency, $R_1$ is responsible for completing the
  control paths between its negative pairs, and $R_2$ inherits what is
  not explicitly completed by $R_1$. Assume that $R_1$ constructs the
  control edge $(a,b)$ between negative pair $(a',b')$. If
  $(a,b)=(a',b')$, then this pair is completed and $a,b$ become
  auxiliary variables in $R_2$. If $b \ne b'$, then only $a'$ becomes
  auxiliary and $a \negfrags{R_2} b'$. Likewise for $a \ne a'$. If
  both $a \ne a'$ and $b \ne b'$, then the control pair is split in
  two in $R_2$: $a' \negfrags{R_2} a \posfrags{R_2} b \negfrags{R_2}
  b'$.
  Auxiliary variables (those not in any negative pair) may or may not
  be forwarded to $R_2$. In the general case, for $x \posfrags{R_1} y
  \posfrags{R_1} z$, when $y$ is not forwarded, $x \posfrags{R_2} z$.
\item Completeness as a path grammar:
  For each clause with control edge $(a,b)$, ${\hat\lambda}_C(a)$ is
  uniquely determined by the head relation.  For each relation $R$,
  there is (at least) one clause with head $R$ for each possible value
  of ${\hat\lambda}_C(b)$. The query relation contains the unique
  (WLOG) initial and final control locations as a negative control
  pair. By induction over the clauses, each $R$ completes all possible
  paths between every negative control pair.
\item Unambiguity (with loss of generality): For each relation $R$,
  there is at most one clause with head $R$ for each possible value of
  ${\hat\lambda}_C(b)$, guaranteeing that every control path $p$ has a
  unique derivation.
\item Correctness for a path $p \in \feedback$: A collection of
  uninterpreted functions witnesses a derivation of $p$ by the grammar
  and further witnesses that the neighborhood $\nu =
  \extractdeps(\cc{P},p)$ is achieved, assuming neighborliness. That
  this holds for all derivations of $p$ is trivial with unambiguity.
\end{enumerate}

\section{Proofs of Lemmas and Theorems}
\setcounter{lemma}{0}
\setcounter{thm}{0}

\begin{lemma}
  \label{lemma:pf-corr:restate}
  If \cc{P} is simulated by $\mathcal{S}$ and \cc{P} is refuted by
  $\mathcal{S}$, then \cc{P} is safe (\autoref{sec:semantics}).
\end{lemma}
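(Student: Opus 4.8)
The plan is a short proof by contradiction that unwinds the definitions of \emph{safe}, \emph{simulated}, and \emph{refuted} given in \autoref{sec:semantics} and \autoref{sec:pf-structures}. Recall that \cc{P} is safe precisely when $\runsof{\cc{P}} = \emptyset$, and that $\runsof{\cc{P}} = \bigunion_{p \in \pathsof{\cc{P}}} \runsof{p}$. So it suffices to show that if \cc{P} is simulated by $\mathcal{S}$ and refuted by $\mathcal{S}$, then \cc{P} admits no run at all.

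First I would suppose, toward a contradiction, that some $r \in \runsof{\cc{P}}$. By \autoref{defn:runs} the run $r$ has an underlying control path $p = \pathof{r} \in \pathsof{\cc{P}}$ with $r \in \runsof{p}$. From the simulation hypothesis, since $r$ is a run of \cc{P} it is a run of $\mathcal{S}$; by definition this means the canonical $\runbg$-model $\modelof{r}$ obtained from $r$ (its nodes, the successor relation, the local value and object valuations, and the update histories) satisfies $\modelof{r} \in \modelsof{\mathcal{S}}$. From the refutation hypothesis, since $p \in \pathsof{\cc{P}}$, the path $p$ is refuted by $\mathcal{S}$, so for every $r' \in \runsof{p}$ we have $\modelof{r'} \notin \modelsof{\mathcal{S}}$; instantiating with $r' = r \in \runsof{p}$ gives $\modelof{r} \notin \modelsof{\mathcal{S}}$. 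This contradicts the previous sentence, so no such $r$ exists, hence $\runsof{\cc{P}} = \emptyset$, i.e., \cc{P} is safe.

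There is no genuinely hard step here: the whole argument is definition-chasing. The one point that needs a moment of care is that the model $\modelof{r}$ appearing in the simulation clause is literally the same object as the one appearing in the refutation clause --- which holds because both refer to the unique $\runbg$-model associated with the single run $r$, and $r$ lies in $\runsof{\pathof{r}}$, so the ``for each $r' \in \runsof{p}$'' quantifier in the definition of refutation does cover $r$. Thus the only ``obstacle'', if any, is the bookkeeping of confirming that the witnessing run used for simulation and the witnessing run used for refutation coincide, which is immediate.
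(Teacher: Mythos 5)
Your proof is correct and is essentially the paper's own argument: the paper likewise combines the facts that refutation forbids any $\modelof{r}$ from being a model of $\mathcal{S}$ while simulation forces every run's model to be one, concluding $\runsof{\cc{P}} = \emptyset$. You merely phrase it as a contradiction and spell out the definition-unwinding more explicitly; there is no substantive difference.
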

\begin{proof}
  $\mathcal{S}$ has no model that is a run \cc{P}, by the assumption
  that \cc{P} is refuted by $\mathcal{S}$.
  Therefore \cc{P} has no run by the assumption that \cc{P} is
  simulated by $\mathcal{S}$.
  Thus \cc{P} is safe, by the definition of safety
  (\autoref{sec:semantics}).
\end{proof}

\begin{lemma}
  \label{lemma:is-feasible-corr:restate}
  If \cc{p} is a feasible path of \cc{P}, then $\isfeasible(\cc{P},
  \cc{p}, \nuAll) = \true$.
  Otherwise, $\isfeasible(\cc{P}, \cc{p}, \nuAll) = \false$.
\end{lemma}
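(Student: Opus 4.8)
The plan is to establish the two directions of the biconditional by exhibiting a correspondence between models of $\varphi = \sympath(\cc{P}, \cc{p}, \nuAll)$ and runs of $\cc{p}$ in $\cc{P}$. The statement of \autoref{lemma:is-feasible-corr} says exactly that $\isfeasible$ returns $\true$ iff $\cc{p}$ is feasible; since $\isfeasible$ is defined to return $\true$ iff $\eufliaissat(\varphi)$ reports satisfiable, and $\eufliaissat$ is assumed to be a sound and complete decision procedure for \euflia\ (\autoref{sec:logic}), it suffices to show: \textbf{(1)} if $\cc{p}$ has a run $(\cc{p}, \sigma)$ in $\cc{P}$, then there is an \euflia\ model $m \sats \varphi$; and \textbf{(2)} if $m \sats \varphi$, then $\cc{p}$ has a run in $\cc{P}$. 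These are the soundness and completeness halves of the symbolic encoding.

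\textbf{First I would} set up the correspondence between a model $m$ and a run $(\cc{p}, \sigma)$ described near the end of \autoref{app:syn-skeleton-ctrs}: given $m$, define $\sigma(n) = (\gamma^n, U^n)$ where $\gamma^n(\cc{a}) = m(\cc{a}(\sigma_n))$ for each local variable $\cc{a}$, and $U^n(\cc{f}, \cc{x}) = n'$ whenever $m(\cc{f}(\tnow(\sigma_n), \cc{x}(\sigma_n))) = m(\tnow(\sigma_{n'}))$; conversely, given $\sigma$, read off the interpretations of the symbols of $\runbg$ from the local states and update histories $\cc{x}_r$, $\cc{p}_r$, $\cc{f}_r$ defined in \autoref{sec:pf-structures}. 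The core of the argument is then a local check: for each path edge $(n, n') \in E$ with instruction $\cc{i} = \instrat{\cc{P}}{\lambda(n)}{\lambda(n')}$, the conjunct $\symrel{\cc{i}}(n, n', \nuAll(n'))$ is satisfied under $m$ if and only if $\sigma(n) \transrelof{\cc{i}} \sigma(n')$. This I would prove by a case analysis on the syntactic form of $\cc{i}$ — value instruction, data/object load, data/object store, allocation, $\nil$ test, object-equality test — matching each $\symrel{\cc{i}}$ clause (as spelled out in \autoref{sec:history-ctrs}) against the corresponding row of \autoref{tab:instr-semantics}. The crucial observation that makes the loads work out is that $\nuAll(n') = N$, so the disjunction over $n'' \in N$ in $\symrel{\load{x}{p}{f}}$ ranges over \emph{all} path nodes; hence the ``most recent store'' bookkeeping in $\varphi$ exactly reconstructs the heap read performed by the semantics, with no over-approximation. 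Once the per-edge equivalence holds, conjoining over all edges gives $m \sats \varphi \iff (\cc{p}, \sigma) \in \runsof{\cc{p}}$.

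\textbf{The main obstacle} I expect is the store/load case, specifically proving that the update-history terms $\cc{f}(\tnow(\sigma_n), \cc{x}(\sigma_n))$ genuinely track the last write: I must verify that the chained constraints $\cc{f}(n', \cc{p}(n')) = n'$ at store nodes together with the frame conditions $\cc{f}(n', \cc{p}(n'')) = \cc{f}(n, \cc{p}(n''))$ at every other node force, by a straightforward induction along the path, that $\cc{f}_r(n_j, o)$ equals the index of the last store to the $\cc{f}$ field of $o$ before $n_j$ — and symmetrically, that any model of $\varphi$ must assign these uninterpreted-function values consistently, so that the induced $\sigma$ really is a run. A subtlety here is that \euflia\ treats $\cc{f}$, $\locsym$, etc.\ as uninterpreted, so the model is free to pick any values not pinned down by $\varphi$; I need to argue that the unpinned choices are exactly those left open by the semantics (e.g., the value loaded from a field never stored to, and the identity of freshly-allocated objects), and therefore induce a legitimate run rather than a spurious one. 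For the allocation case I would additionally invoke that the identity of the allocated object in the semantics is $o \notin \domain(H_O)$, matched symbolically by $\cc{x}(n') = n'$; distinctness of allocated objects across the path follows because distinct nodes are distinct elements of $N$, consistent with how $\modelof{r}$ is built. With these checks in hand, both directions close and the lemma follows immediately from soundness and completeness of $\eufliaissat$.
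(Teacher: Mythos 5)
Your proposal follows essentially the same route as the paper's own argument, which is given only as a sketch: both rest on a correspondence between models of $\sympath(\cc{P},\cc{p},\nuAll)$ and runs of $\cc{p}$, with the key point being that $\nuAll$ lets every load's constraint range over all path nodes so the update-history bookkeeping exactly locates the matching store, after which correctness of $\isfeasible$ reduces to soundness and completeness of $\eufliaissat$. Your version is considerably more detailed than the paper's (per-instruction case analysis against \autoref{tab:instr-semantics}, induction along the path for the frame conditions, and the discussion of values left unpinned by the model), and correctly identifies the same obstacles the paper's sketch waves at.
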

\begin{proof}
  \textit{Sketch.} \isfeasible uses the formula $\sympath(\cc{P},
  \cc{p}, \nuAll)$, which is a conjunction of constraints which
  precisely model the semantics of each instruction in $p$. Notably,
  for every load there is at most one matching store. Every load is
  constrained with the value of its matching store since the
  neighborhood $\nuAll$ allows for every control state of $p$ to be
  examined in search of the unique match. Thus, the semantics of \lang
  are not approximated, and $\sympath(\cc{P}, \cc{p}, \nuAll)$ is
  satisfiable exactly when $p$ has a run.
\end{proof}

\setcounter{lemma}{3}
\begin{lemma}
  \label{lemma:sympath-weak:restate}
  Let $p = (N,\lambda,E) \in \pathsof{\cc{P}}$ and a neighborhood
  $\nu_0 : N \to \pset(N)$ be given. $\sympath(\cc{P},p,\nu_0)
  \entails \sympath(\cc{P},p,\nuAll)$.
\end{lemma}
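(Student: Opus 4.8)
The plan is to reduce the entailment to the conjunctive structure of $\sympath$ and then to a monotonicity property of the per-instruction relations $\symrel{\cc{i}}$ defined in \autoref{sec:history-ctrs}. Writing $\cc{i}_{n,n'} = \instrat{\cc{P}}{\lambda(n)}{\lambda(n')}$, the first step is to unfold both sides: for every neighborhood $\nu$ we have $\sympath(\cc{P},p,\nu) = \bigland_{(n,n')\in E} \symrel{\cc{i}_{n,n'}}(n,n',\nu(n'))$. Since the two conjunctions range over the same edge set $E$, and a conjunction is entailed as soon as each of its conjuncts is, it suffices to establish, for each edge $(n,n')\in E$, that $\symrel{\cc{i}_{n,n'}}(n,n',\nu_0(n')) \entails \symrel{\cc{i}_{n,n'}}(n,n',N)$. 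Because $\nu_0(n')\subseteq N = \nuAll(n')$ for every node $n'$ by the definition of $\nuAll$, the per-edge goal is an instance of the general monotonicity claim that $\symrel{\cc{i}}(n,n',A)$ entails $\symrel{\cc{i}}(n,n',B)$ whenever $A\subseteq B$; that claim is what I would actually prove.

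I would establish the monotonicity claim by the case analysis on $\cc{i}$ used to define $\symrel{\cc{i}}$. The allocation case $\cc{i}=\alloc{x}$ is immediate, since the neighborhood parameter does not appear in $\symrel{\alloc{x}}$ at all, so the two sides literally coincide. In every other case — value instructions, $\nil$-tests, object-equality tests, and data and object loads and stores — the neighborhood parameter occurs only in two syntactic positions: in the frame conjuncts $\datafieldseq{n}{n'}{\cdot}$ and $\objfieldseq{n}{n'}{\cdot}$ (or, for stores, the field-indexed conjunctions playing their role), each a conjunction indexed by the neighborhood's nodes, and, for loads, in the guarded equalities $\cc{f}(n,\cc{p}(n)) = n'' \implies \cc{x}(n') = \stored{\cc{P}}(n'')$ ranging over $n''$ in the neighborhood. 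So the per-edge step amounts to showing that passing from $\nu_0(n')$ to the larger set $N$ introduces only conjuncts that are already consequences of the relation over $\nu_0(n')$ together with the remaining conjuncts of $\sympath(\cc{P},p,\nu_0)$.

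The main obstacle is exactly this last point: the conjuncts indexed by the new nodes $N\setminus\nu_0(n')$ genuinely constrain the update-history symbols at $n$ and $n'$, and they are not in general implied by the $\nu_0$-indexed conjuncts of a single edge in isolation. I would attack it in two ways. For the frame conjuncts I would argue redundancy globally, using the chain of object-state equalities that $\sympath(\cc{P},p,\nu_0)$ imposes along the whole path to propagate the relevant update-history equalities to the objects bound at the new nodes. The load case is the most delicate, since enlarging the neighborhood could in principle pin a value of $\cc{x}$ at $n'$ that was free under $\nu_0(n')$; I would control this by appealing to \autoref{lemma:is-feasible-corr} to determine which guards $\cc{f}(n,\cc{p}(n)) = n''$ are satisfiable in conjunction with $\sympath(\cc{P},p,\nu_0)$, and checking that for those the consequent already follows. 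Finally, in the setting in which this lemma is used — where $\nu_0$ is a refuting neighborhood of an infeasible $p$, so $\sympath(\cc{P},p,\nu_0)$ is by definition unsatisfiable — the entailment is immediate, since an unsatisfiable formula entails every formula over $\runbg$, and I would fall back on this whenever the direct redundancy argument for a particular conjunct does not close cleanly.
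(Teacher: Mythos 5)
Your decomposition is the same as the paper's: unfold $\sympath(\cc{P},p,\nu)$ into its per-edge conjuncts and reduce the entailment to the monotonicity claim $\symrel{\cc{i}}(n,n',A) \entails \symrel{\cc{i}}(n,n',B)$ for $A \subseteq B$. The paper's proof stops there: it disposes of the claim in two lines by a case split on $\cc{i}$ (value instructions ``do not depend on'' the neighborhood parameter; stores, loads, and allocations ``take only a conjunction over'' it), and it performs none of the global redundancy, feasibility, or unsatisfiability arguments you sketch in your final paragraph.

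The difficulty is that this monotonicity claim, as both you and the paper need it, runs the wrong way under the paper's own definitions, and you have correctly put your finger on it. Every occurrence of the neighborhood parameter in $\symrel{\cc{i}}$ --- the frame conjuncts $\datafieldseq{n}{n'}{\cdot}$ and $\objfieldseq{n}{n'}{\cdot}$, the store frame, and the guarded equalities of a load --- is a conjunction indexed over the neighborhood's nodes, so enlarging the neighborhood adds conjuncts and \emph{strengthens} the formula. That yields $\symrel{\cc{i}}(n,n',N) \entails \symrel{\cc{i}}(n,n',\nu_0(n'))$, the reverse of what the lemma statement requires (and, tellingly, the paper's own closing sentence asserts exactly this reverse ordering, that $\sympath(\cc{P},p,\nu_0)$ is ``no stronger than'' $\sympath(\cc{P},p,\nuAll)$, in tension with the displayed entailment). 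Your three proposed repairs do not close the gap: the ``global redundancy'' argument is never developed and would have to show that every added conjunct is a consequence of the $\nu_0$-indexed ones, which is essentially the lemma itself and fails for a load whose matching store lies outside $\nu_0(n')$ (a model may then assign the loaded value freely, and such a model does not satisfy $\sympath(\cc{P},p,\nuAll)$); \autoref{lemma:is-feasible-corr} speaks only to satisfiability of the full-neighborhood formula and gives no handle on individual load guards; and the fallback to ``$\sympath(\cc{P},p,\nu_0)$ is unsatisfiable, hence entails everything'' covers only refuting neighborhoods, whereas the lemma quantifies over arbitrary $\nu_0$. So the proposal as written does not establish the stated entailment --- but the obstacle you ran into is a genuine soft spot in the lemma/proof pair as it appears in the paper, not an artifact of your argument.
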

\begin{proof}
  $\sympath(\cc{P},p,\nu)$ is a conjunction of constraints modeling
  the instruction on each edge $(u,v) \in E$, which is
  $\symrel{i}(u,v,\nu(v))$ for the instruction $i =
  \instrat{\cc{P}}{\lambda(u)}{\lambda(v)}$. For every $n \in N$,
  $\nu_0(n) \subseteq \nuAll(N) = N$. We have that
  $\symrel{i}(u,v,\nu_0(v)) \entails \symrel{i}(u,v,N)$, shown
  case-wise for every possible instruction $i$:
  \begin{itemize}
  \item If $i \in \valinstrs$, then $\symrel{i}(u,v,Q)$ does not
    depend on the set of states $Q$.
  \item If $i \in \stores \union \loads \union \allocs$, then
    $\symrel{i}(u,v,Q)$ takes only a conjunction over the $q \in
    Q$. Since $\nu_0(v) \subseteq N$, $\symrel{i}(u,v,\nu_0(v))$ is no
    stronger than $\symrel{i}(u,v,N)$.
  \end{itemize}
  Since $\sympath$ is only a conjunction of $\mathsf{SymRel}$
  constraints, $\sympath(\cc{P},p,\nu_0)$ cannot be any stronger than
  $\sympath(\cc{P},p,\nuAll)$.
\end{proof}

\begin{lemma}
  \label{lemma:rungram-corr:restate}
  Let a control path $p = (N,\lambda,E) \in \pathsof{\cc{P}}$ and any
  neighborhood graph $(N, \dataedges)$ be given. Let $\nu : N \to
  \pset(N)$ be the neighborhood induced by $(N,\dataedges)$. Suppose
  $p$ with $\nu$ are modeled by some derivation $D$ of a skeleton
  $\mathcal{C}$. Let $\mathcal{G}$ be the run grammar corresponding to
  $\mathcal{C}$. Then there is a model of $D$ in $\mathcal{G}$ iff
  there is a model of $\sympath(\cc{P},p,\nu)$.
\end{lemma}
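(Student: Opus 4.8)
The plan is to unfold the construction of $\mathcal{G}$ from the skeleton $\mathcal{C}$ and check that the completion of skeletons (\autoref{sec:syn-run-grammar}) was designed precisely so that inlining the derivation $D$ into $\mathcal{G}$ reproduces the path constraint $\sympath(\cc{P}, p, \nu)$. Recall that $\mathcal{G}$ is the completion of $\mathcal{C}$: every clause of $\mathcal{G}$ has the form $(A, \varphi_0 \land \varphi_1 \land \varphi_2, R)$ for a skeleton clause $(R, A, L, q, q')$, where $\varphi_0$ forces $q' = \succsym(q)$ (and, with the companion location constraint, pins $\locsym$), $\varphi_1$ is $\symrel{\instrat{\cc{P}}{L(q)}{L(q')}}(q, q', Q)$ over the clause's own variable tuple $Q$, and $\varphi_2$ is the footprint constraint of \autoref{sec:footprint}. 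Since $\mathcal{G}$ and $\mathcal{C}$ share predicate symbols and clause bodies, $D$ is simultaneously a derivation of $\mathcal{C}$ and of $\mathcal{G}$; write $D'$ for $D$ viewed as an element of $\derivations{\mathcal{G}}$. By \autoref{defn:chc-model}, a model of $D'$ in $\mathcal{G}$ is a background \euflia model $m$ together with a node-indexed assignment $i : \nodesof{D'} \to \modelsof{X}$ such that (a) $m, i(n)$ satisfies the clause constraint on the hyperedge into $n$ for every $n$, and (b) the argument-passing equalities hold along every hyperedge.

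First I would replace node-indexed assignments by global models over the domain $N = \nodesof{p}$. The hypothesis that $D$ models $p$ together with $\nu$ says exactly that the argument-passing relation of $D'$ glues the variable occurrences of $D'$ into equivalence classes in bijection with $N$, that the control steps collected by the clauses of $D'$ are precisely the edges $E$ with location labels $\lambda$, and that, for every $n' \in N$, the set of path points co-occurring with $n'$ in the clause instance of $D'$ carrying the step into $n'$ is exactly $\nu(n')$. Hence an assignment $i$ meeting (b), paired with the forced interpretation of $\locsym$ and $\succsym$, is the same datum as a model $\widehat m \in \modelsof{\runbg}$ with domain $N$; this is the correspondence already used to define $\modelof{r}$ in \autoref{sec:pf-structures}, and passing from $(m, i)$ to $\widehat m$ amounts to forgetting $i$. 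I would carry out the rest under this correspondence.

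The core step is to verify that, under the correspondence, $i$ satisfies every clause constraint of $D'$ iff $\widehat m \sats \sympath(\cc{P}, p, \nu)$, where $\sympath(\cc{P}, p, \nu) = \bigland_{(n,n') \in E} \symrel{\instrat{\cc{P}}{\lambda(n)}{\lambda(n')}}(n, n', \nu(n'))$. The $\varphi_0$ conjuncts only pin $\succsym$ and $\locsym$ to agree with $E$ and $\lambda$; these symbols do not occur in $\sympath$, so this part is vacuous on the $\sympath$ side and always satisfiable on the $\mathcal{G}$ side via the interpretation chosen above. By the control-consistency structure of skeletons (\autoref{app:syn-skeleton-ctrs}), each edge $(n,n') \in E$ is carried by exactly one clause instance of $D'$, and its variable tuple instantiates, through the gluing, to exactly $\nu(n')$ by the realization hypothesis; hence that clause's $\varphi_1$ \emph{is} the $(n,n')$-conjunct of $\sympath$. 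Thus the conjunction of all clause constraints of $D'$ is $\sympath(\cc{P}, p, \nu) \land \bigl(\bigland_{\text{clauses of } D'} \varphi_2\bigr)$, and it remains to show $\sympath(\cc{P}, p, \nu) \entails \bigland \varphi_2$: each equality a footprint constraint asserts about a freshly loaded or allocated object $o$ concerns update histories across parts of the derived control path that lie inside subderivations of the owning clause instance, and along any such subderivation in which $o$ is out of scope no step stores into a field of $o$, so every $\symrel{\cdot}$ of its steps preserves $o$'s histories; a subsidiary induction over such a subderivation's steps then discharges the footprint equality. This reduces the footprint obligations to $\sympath$, so the two conjunctions are logically equivalent and one has a model iff the other does.

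Combining the steps, a model of $D$ in $\mathcal{G}$ exists iff some $(m, i)$ meets (a) and (b), iff (by the correspondence) some $\widehat m$ over domain $N$ satisfies $\sympath(\cc{P}, p, \nu)$, iff $\sympath(\cc{P}, p, \nu)$ has a model, with domain $N$ without loss of generality since $p$ fixes the intended set of path points. I expect the main obstacle to be the footprint step: establishing $\sympath(\cc{P}, p, \nu) \entails \bigland \varphi_2$ requires the subsidiary induction over each subderivation of $D$, with a case split, for each load, object-load, or allocation at a node, on whether the object it concerns is touched inside, above, or below the owning clause instance, and it uses in every case that $D$ realizes the very neighborhood $\nu$ induced by $(N, \dataedges)$. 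A secondary subtlety is that the main equivalence must hold in both directions, which needs the clause window at $n'$ to equal $\nu(n')$ exactly; the monotone half, that a narrower window only strengthens the corresponding $\symrel{\cdot}$ constraint, is \autoref{lemma:sympath-weak:restate}, while the reverse containment is supplied by exactness of the neighborhood realized by $D$.
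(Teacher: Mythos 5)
Your proposal is correct and follows essentially the same route as the paper's proof: both directions rest on matching each control edge's clause constraint ($\varphi_1$) with the corresponding $\symrel$ conjunct of $\sympath$ under the gluing of node-local models into a single model over $N$, and both discharge the footprint conjunct $\varphi_2$ by observing that an object out of scope of a clause instance cannot have its fields modified along the subpaths derived outside that instance. Your formulation of the footprint step as an explicit entailment $\sympath(\cc{P},p,\nu) \entails \bigland \varphi_2$ requiring a subsidiary induction is a slightly more careful rendering of what the paper states as ``$\psi$ does not overconstrain $i(d)$,'' but the underlying argument is the same.
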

\begin{proof}
  Let a model of $D$ in $\mathcal{G}$ be given as $(D,m_D,i)$, where
  $m_D$ is a model of the background theory and $i : \nodesof{D} \to
  \modelsof{X}$ maps derivation nodes to models of the CHC variables
  $X$. Since $\mathcal{C}$, a skeleton, is control consistent by the
  specification of \synskeleton, every control edge $(u,v) \in E =
  m_D(\cc{Ctrl})$ via the constraint $\varphi$ of the clause $C$ of
  some node $d$ in $D$, as follows: Let ${\hat\lambda}_C, {\hat E}_C$
  be given. There exists a pair $(x_1,x_2) \in {\hat E}_C$ such that
  $i(d)(x_1) = u, i(d)(x_2) = v$ and $\varphi \entails
  \cc{Ctrl}(x_1,x_2)$. There is a corresponding clause in
  $\mathcal{G}$ with a constraint $\psi \entails \varphi$. Moreover,
  because $\psi \entails
  \symrel{\instrat{\cc{P}}{{\hat\lambda}_C(x_1)}{{\hat\lambda}_C(x_2)}}(x_1,x_2,X)$,
  and because $\mathcal{G}$ is location consistent and neighborly,
  $i(d) \sats
  \symrel{\instrat{\cc{P}}{\lambda(u)}{\lambda(v)}}(u,v,\nu(v))$. Let
  $m$ be a model over language of $\sympath(\cc{P},p,\nu)$ such that
  $m$ restricted to $\{u,v\} \union \nu(v)$ is $i(d)$. Since this
  property of $m$ holds for all control edges $(u,v) \in E$ then $m
  \sats \sympath(\cc{P},p,\nu)$ by the definition of $\sympath$.

  Now, let a model $m$ of $\sympath(\cc{P},p,\nu)$ be given. The
  process above can almost be reversed to obtain models $m_D$ and
  $i(d)$ for each $d \in \nodesof{D}$, but for each clause in
  $\mathcal{G}$ with constraint $\psi$, care must be taken surrounding
  the extra condition which $\psi$ entails, described in
  \autoref{sec:footprint}. Because the encoding of loads and stores
  requires that every live object be held by some local state
  variable, if an object is not passed from $R$, resp. to $R$, this
  object is guaranteed to not have any of its fields modified on the
  control paths between any of the $\posfrags{R}$ pairs,
  resp. $\negfrags{R}$ pairs. Therefore, $\psi$ does not overconstrain
  $i(d)$, and we have a model $(D,m_D,i)$ of $\mathcal{G}$ inducing
  $p$ and $(N,\dataedges)$.
\end{proof}

\setcounter{lemma}{2}
\begin{lemma}
  \label{lemma:synchc-corr:restate}
   \cc{P} is simulated by \synchc(\cc{P}, \feedback).
\end{lemma}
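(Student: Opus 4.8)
The plan is to unfold $\synchc$ into its two stages and then discharge the claim using the correspondence, already available in the appendix, between models of the synthesized run grammar and models of the symbolic path constraint $\sympath$. Write $\mathcal{G} = \synchc(\cc{P}, \feedback)$. By \autoref{sec:syn-skeleton} and \autoref{app:syn-skeleton-ctrs}, $\mathcal{G}$ is the run grammar corresponding to the skeleton $\mathcal{C} = \synskeleton(\cc{P}, \feedback)$, and $\synskeleton$ always returns a skeleton that is in particular \emph{complete} for $\cc{P}$: every $p \in \pathsof{\cc{P}}$ is induced as the control graph of some derivation of $\mathcal{C}$. Moreover $\mathcal{G}$ is a run grammar of $\cc{P}$ by construction, since each clause constraint entails $q' = \succsym(q)$ together with the location equalities, so the $(\succsym, \locsym)$-reduct of any model of $\mathcal{G}$ is a control path of $\cc{P}$. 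Thus ``$\cc{P}$ is simulated by $\mathcal{G}$'' is exactly the statement that for every $r \in \runsof{\cc{P}}$ the model $\modelof{r}$ occurs as the background component of a model of $\mathcal{G}$.

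So I would fix an arbitrary $r = (p, \sigma) \in \runsof{\cc{P}}$ with $p = (N, E, \lambda)$. By completeness of $\mathcal{C}$ for $\cc{P}$, choose a derivation $D$ of $\mathcal{C}$ that models $p$, together with the neighborhood $\nu : N \to \pset(N)$ that $D$'s data edges induce. The key small fact is $\modelof{r} \sats \sympath(\cc{P}, p, \nu)$: since $p$ carries the run $r$ it is feasible, so $\sympath(\cc{P}, p, \nuAll)$ is satisfiable by \autoref{lemma:is-feasible-corr:restate}, and in fact $\modelof{r}$ itself satisfies it by the model/run correspondence of \autoref{app:syn-skeleton-ctrs} (the update-history symbols of $\modelof{r}$ are defined precisely so that each $\symrel{\cc{i}}$ conjunct of $\sympath(\cc{P}, p, \nuAll)$ holds of $r$'s true step transitions); and enlarging the neighborhood only adds conjuncts to $\sympath$, so $\sympath(\cc{P}, p, \nuAll) \entails \sympath(\cc{P}, p, \nu)$ (the content of the instruction-by-instruction case analysis in the proof of \autoref{lemma:sympath-weak:restate}). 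Now \autoref{lemma:rungram-corr:restate} applies to $p$, the neighborhood graph it gives, the derivation $D$, and $\mathcal{G}$: $\sympath(\cc{P}, p, \nu)$ has a model, hence $D$ has a model in $\mathcal{G}$, and tracing the constructive reverse direction of that lemma's proof on the specific model $\modelof{r}$ yields a labeling $i : \nodesof{D} \to \modelsof{X}$ with $(D, \modelof{r}, i) \in \modelsof{\mathcal{G}}$ inducing $p$. Hence $r$ is a run of $\mathcal{G}$; since $r$ was arbitrary, $\cc{P}$ is simulated by $\synchc(\cc{P}, \feedback)$.

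The main obstacle is invoking \autoref{lemma:rungram-corr:restate} at exactly the needed granularity: one must check that its reverse direction, fed the particular model $\modelof{r}$, returns a $\mathcal{G}$-model whose background model is $\modelof{r}$ itself and not merely some model; and, bundled into that lemma, that the footprint conjuncts $\varphi_3^0$ and $\varphi_3^1$ of \autoref{sec:footprint} hold of $\modelof{r}$ — which rests on the fact that in a real run the update history $\cc{f}_r(\cdot, o)$ of a field $\cc{f}$ of an object $o$ changes only at a control step that stores into the $\cc{f}$ field of $o$, so it suffices that the grammar-level escape hypotheses guarding those conjuncts preclude any such step in the fragments of $p$ that $\varphi_3$ governs (rather than the fragments delegated to child clauses). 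The only other point I would state explicitly is that the ``always makes progress'' guarantee of $\synskeleton$ entails completeness of $\mathcal{C}$ for $\cc{P}$ irrespective of $\feedback$; everything else is routine unfolding of definitions. If one preferred not to lean on \autoref{lemma:rungram-corr:restate}, the same argument can be inlined: transport the skeleton derivation $D$ to $\mathcal{G}$ via the clause-skeleton bijection, keep $D$'s variable-to-path-point assignment as $i$, use $\modelof{r}$ as background model, and verify each clause constraint conjunct-by-conjunct — $q' = \succsym(q)$ and the location equalities from $(N, E, \lambda)$, the symbolic-relation conjunct from $\modelof{r} \sats \sympath(\cc{P}, p, \nuAll)$ plus monotonicity in the state-set argument, and the footprint conjuncts from the update-history semantics above.
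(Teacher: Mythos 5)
Your proof follows essentially the same route as the paper's: completeness of the skeleton returned by \synskeleton, then \autoref{lemma:rungram-corr:restate} together with \autoref{lemma:sympath-weak:restate} and \autoref{lemma:is-feasible-corr:restate} to transfer a run of $p$ to a model of the corresponding derivation of $\mathcal{G}$. If anything you are more careful than the paper, which only argues that \emph{some} model of $D$ exists, whereas the definition of simulation requires $\modelof{r}$ itself to be a model of $\mathcal{G}$ — the point you correctly flag as needing the reverse direction of \autoref{lemma:rungram-corr:restate} traced on $\modelof{r}$ specifically.
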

\begin{proof}
  \synchc begins by creating a CHC system $\mathcal{C} =
  \synskeleton(\cc{P}, \feedback)$. By the specification of
  \synskeleton, $\mathcal{C}$ is complete for \cc{P}; i.e., every path
  $p \in \pathsof{\cc{P}}$ is the control graph modeled by some
  derivation of $\mathcal{C}$.

  \synchc then constructs the run grammar $\mathcal{G}$ corresponding
  to $\mathcal{C}$. By \autoref{lemma:rungram-corr:restate}, for every
  path $p \in \pathsof{\cc{P}}$, there is a derivation $D$ of
  $\mathcal{G}$ iff there is a model of $\sympath(\cc{P},p,\nu)$
  for some neighborhood $\nu$. By \autoref{lemma:sympath-weak:restate}
  and \autoref{lemma:is-feasible-corr:restate}, there is a model of
  $D$ iff there is a run of $p$.
  Therefore, $\mathcal{G}$ simulates every path $p \in
  \pathsof{\cc{P}}$, and thus simulates \cc{P}.
\end{proof}

\begin{thm}
  \label{thm:soundness:restate}
  If $\sys(\cc{P}) = \true$, then \cc{P} is safe, and if $\sys(\cc{P})
  = \false$, then \cc{P} is not safe.
\end{thm}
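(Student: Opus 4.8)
The plan is to establish partial correctness by tracing the control flow of \autoref{alg:sys} and composing the three lemmas already proved. First I would observe that a terminating invocation $\sys(\cc{P})$ reduces to a finite chain of calls to \sysaux: each call in the chain either returns at \autoref{line:ret-safe} with value $\true$, returns at \autoref{line:ret-unsafe} with value $\false$, or recurses tail-recursively at \autoref{line:ret-recurse}, passing the recursive result upward unchanged. Hence $\sys(\cc{P})$ equals the base value produced by the last call in the chain, and it suffices to analyze the two return sites; this is a straightforward induction on the length of the call chain (equivalently, on $|\feedback|$ at the point the base value is produced).

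For the case $\sys(\cc{P}) = \true$, the final \sysaux call reached \autoref{line:ret-safe}, so $\solvechc(\mathcal{G})$ returned $\isempty$ for $\mathcal{G} = \synchc(\cc{P}, \feedback)$; by the black-box contract of a CHC solver (\autoref{sec:chcs}) this means $\mathcal{G}$ is infeasible, i.e., $\modelsof{\mathcal{G}} = \emptyset$. By \autoref{lemma:synchc-corr}, \cc{P} is simulated by $\mathcal{G}$. Since $\mathcal{G}$ has no models, $\modelof{r}$ fails to be a model of $\mathcal{G}$ for every run $r$ of \cc{P}, so \cc{P} is refuted by $\mathcal{G}$. \autoref{lemma:pf-corr} then yields that \cc{P} is safe.

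For the case $\sys(\cc{P}) = \false$, the final \sysaux call reached \autoref{line:ret-unsafe}, so $\solvechc(\mathcal{G})$ returned a feasible derivation $D \in \derivations{\mathcal{G}}$ and the test $\isfeasible(\cc{P}, p)$ at \autoref{line:test-feasible} returned $\true$, where $p = \pathof(D)$. Because $\synchc(\cc{P}, \feedback) \in \runchcs{\cc{P}}$ is a run grammar of \cc{P} (as guaranteed by \autoref{lemma:synchc-corr}), $\pathof$ applied to a derivation of $\mathcal{G}$ returns a genuine control path of \cc{P}, so $p \in \pathsof{\cc{P}}$. By \autoref{lemma:is-feasible-corr}, $\isfeasible(\cc{P}, p) = \true$ implies $p$ is a feasible path of \cc{P}, i.e., $\runsof{p} \neq \emptyset$; therefore $\runsof{\cc{P}} \neq \emptyset$ and \cc{P} is not safe by the definition of safety in \autoref{sec:semantics}.

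I expect the main obstacle to be not a deep argument but pinning down two contracts precisely: (i) that \solvechc, used as a black box, returns $\isempty$ exactly when its input system has no model and otherwise returns an honest feasible derivation, which is part of the specification of a CHC solver in \autoref{sec:chcs}; and (ii) that \synchc always produces an element of $\runchcs{\cc{P}}$, so that $\pathof$ applied to any of its derivations yields a control path of \cc{P}, which is what justifies invoking \isfeasible in the $\false$ branch and interpreting its output. Everything else is routine composition of the preceding lemmas. Note also that the theorem claims only partial correctness: it says nothing about termination, so no well-foundedness argument on the growth of \feedback (which would appeal to the progress property noted after \autoref{lemma:synchc-corr}) is needed here.
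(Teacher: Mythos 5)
Your proof is correct and follows essentially the same route as the paper's: both cases are discharged by composing \autoref{lemma:synchc-corr}, \autoref{lemma:pf-corr}, and \autoref{lemma:is-feasible-corr} with the contracts of \solvechc and $\pathof$. You merely spell out two steps the paper leaves implicit (the tail-recursive reduction to the final \sysaux call, and the inference from ``$\mathcal{G}$ has no models'' to ``\cc{P} is refuted by $\mathcal{G}$''), which is added detail rather than a different argument.
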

\begin{proof}
  \sys returns $\true$ only if \solvechc finds a solution to
  $\mathcal{G} = \synchc(\cc{P},\feedback)$, which proves that
  $\mathcal{G}$ is empty. By \autoref{lemma:synchc-corr},
  $\mathcal{G}$ simulates \cc{P}, and with \autoref{lemma:pf-corr},
  this proves \cc{P} is safe.

  \sys returns $\false$ only if a derivation and model $(D,m,i)$ of
  $\mathcal{G}$ is found for which $\isfeasible(\cc{P},p) =
  \true$, where $p$ is the control graph induced by $(D,m,i)$. (And
  $p$ is path of \cc{P} because $\mathcal{G}$ is a path grammar of
  \cc{P}.)  By \autoref{lemma:is-feasible-corr}, $p$ is feasible run
  of \cc{P}, witnessing that \cc{P} is not safe.
\end{proof}

\end{document}